\newtheorem{problem}{Problem}
\newtheorem{theorem}{Theorem}
\newtheorem{lemma}{Lemma}
\newtheorem{proposition}{Proposition}
\newtheorem{definition}{Definition}
\newtheorem{remark}{Remark}
\newtheorem{example}{Example}
\title{\huge Goal-oriented Transmission Scheduling: Structure-guided DRL with a Unified Dual On-policy and Off-policy Approach}
\author{\large Jiazheng Chen,~{\IEEEmembership{Graduate Student Member,~IEEE,}} Wanchun Liu*,~\IEEEmembership{Member,~IEEE,}
\thanks{The work of W. Liu was supported by the Australian Research Council's Discovery Early Career Researcher Award (DECRA) Project DE230100016. \textit{(Corresponding author: W. Liu.)}}
\thanks{J. Chen and W. Liu are with the School of Electrical and Information Engineering, The University of Sydney, Sydney, Australia (e-mail: jiazheng.chen@sydney.edu.au; wanchun.liu@sydney.edu.au).
}
} 
\begin{document}

\maketitle

\begin{abstract}
Goal-oriented communications prioritize application-driven objectives over data accuracy, enabling intelligent next-generation wireless systems. Efficient scheduling in multi-device, multi-channel systems poses significant challenges due to high-dimensional state and action spaces. We address these challenges by deriving key structural properties of the optimal solution to the goal-oriented scheduling problem, incorporating Age of Information (AoI) and channel states. Specifically, we establish the monotonicity of the optimal state value function—a measure of long-term system performance—w.r.t. channel states and prove its asymptotic convexity w.r.t. AoI states. Additionally, we derive the monotonicity of the optimal policy w.r.t. channel states, advancing the theoretical framework for optimal scheduling.
Leveraging these insights, we propose the structure-guided unified dual on-off policy DRL (SUDO-DRL), a hybrid algorithm that combines the stability of on-policy training with the sample efficiency of off-policy methods. Through a novel structural property evaluation framework, SUDO-DRL enables effective and scalable training, addressing the complexities of large-scale systems.
Numerical results show SUDO-DRL improves system performance by up to 45\% and reduces convergence time by 40\% compared to state-of-the-art methods. It also effectively handles scheduling in much larger systems, where off-policy DRL fails and on-policy benchmarks exhibit significant performance loss, demonstrating its scalability and efficacy in goal-oriented communications.
\end{abstract}

\begin{IEEEkeywords}
Goal-oriented communications, transmission scheduling, deep reinforcement learning (DRL), age of information.
\end{IEEEkeywords}
% \vspace{-0.6cm}

\section{Introduction} \label{sec:intro}
Conventional communications focus on accurate bit-by-bit data transmission, achieving near-Shannon-capacity efficiency in 5G networks. However, as we transition to 6G, \textbf{goal-oriented communications} emerge as a transformative paradigm, prioritizing application-driven objectives over raw data accuracy~\cite{gunduz2022beyond}. This shift is critical for enabling intelligent and efficient next-generation networks.
Goal-oriented communications encompass two main categories: \textbf{human-centric} and \textbf{machine-centric}. Human-centric applications, such as extended reality (XR)~\cite{chen2023aoii} and augmented reality (AR)~\cite{wang2024goalar}, focus on preserving semantic meaning for accurate human comprehension. Machine-centric applications, including industrial Internet of Things (IIoT)~\cite{Huang2020UpDown} and autonomous driving~\cite{xu2022autodrive}, prioritize transmitting information that directly optimizes system performance. This paradigm shift transcends traditional communication approaches, enabling smarter, more efficient interactions between humans, machines, and their environments~\cite{li2023towardsgoal}.

\subsection{Scheduling in Goal-oriented Communications}
Efficient scheduling is crucial in goal-oriented communications to maximize the performance of systems with limited communication resources. Scheduling determines how devices share communication channels, directly impacting the timeliness and relevance of transmitted information—key factors for achieving system goals. Unlike conventional communication systems that typically use throughput, latency, or reliability as performance metrics, goal-oriented communications adopt application-specific metrics that evaluate the importance of the transmitted information in achieving the desired objective. Among these, the age of information (AoI)~\cite{kaul2012aoi}, which measures the freshness of data, is particularly important for machine-type applications where outdated messages can become irrelevant or even harmful to system performance.

To address scheduling challenges, optimizing scheduling policies has garnered significant attention in the communications community~\cite{gunduz2022beyond}. Existing works often use AoI and related metrics to guide scheduling decisions. For example, in~\cite{wang2024goal}, a control cost minimization problem in a single-loop network is reformulated as an AoI-based optimization problem and solved using Markov decision processes (MDPs) with value iteration. Beyond AoI, other metrics such as the value of information (VoI)\cite{soleymani2019value} and mean square error (MSE)\cite{wu2018optimalmulti} have been introduced to assess information importance in various contexts. In~\cite{soleymani2019value}, an optimal scheduling and power allocation problem is proposed for a single-sensor-single-controller system, maximizing VoI through an event-triggered policy. Similarly,~\cite{wu2018optimalmulti} addresses remote estimation in a multi-sensor system by formulating an MSE minimization problem, solved using Whittle's index heuristic to derive a suboptimal policy.
However, these approaches have notable limitations. Heuristic methods, while computationally efficient, cannot guarantee optimality. On the other hand, conventional dynamic programming methods, such as value and policy iteration, are computationally infeasible for large-scale systems with high-dimensional state and action spaces. These challenges underscore the need for scalable and optimal solutions tailored to the complexities of modern goal-oriented communication systems.

\subsection{Off-policy and On-policy DRL Solutions}
In recent years, deep reinforcement learning (DRL) has emerged as a powerful tool for solving large-scale MDPs by leveraging deep neural networks (NNs) to approximate functions~\cite{leong2020DRL, holm2023goal, bai2023dqntransport}. Scheduling policies derived from DRL significantly outperform heuristic approaches, offering more optimal solutions in complex systems.
Notably, deep Q-networks (DQN), a fundamental off-policy DRL algorithm, have been applied to solve optimal scheduling problems in various multi-device, multi-channel systems, such as remote state estimation systems~\cite{leong2020DRL, holm2023goal} and intelligent transportation systems~\cite{bai2023dqntransport}. Building on the conventional DQN, the guided exploration-based branching dueling Q-network (GE-SDQN)\cite{xie2023gesdqn} improves exploration efficiency and scheduling performance while reducing the number of neurons required for large-scale systems. In\cite{he2024age}, the deep deterministic policy gradient (DDPG) algorithm—a widely used off-policy DRL method with an actor-critic framework—is employed to address AoI minimization in larger-scale systems, surpassing the limitations of DQN in scalability.

All these methods rely on off-policy DRL, where the current policy is updated using data collected from past policies. While off-policy methods exhibit high sample efficiency and effective exploration by reusing past data, they also introduce training instability and bias due to potential discrepancies between the behaviors of the current and past policies.
In contrast, on-policy DRL provides a more stable learning process by updating policies using data generated exclusively by the current policy. This approach, which discards previously collected data after each update to keep training data closely aligned with the current policy, ensures greater stability in stochastic environments.
For example, the trust region policy optimization (TRPO) algorithm, an on-policy DRL method with an actor-critic framework, is employed in~\cite{peng2022trpo} to derive channel and power allocation policies aimed at minimizing the sum of AoI and power consumption. TRPO enhances training stability by constraining both the direction and magnitude of policy updates, offering theoretical guarantees for policy improvement. Similarly, the proximal policy optimization (PPO) algorithm, a computationally simpler variant of TRPO, is utilized in~\cite{pang2022drl} to solve scheduling problems in large-scale remote state estimation systems where off-policy DRL methods face significant challenges.
While on-policy methods offer stability and have been effectively applied in specific scenarios, they suffer from poor data efficiency because generated data is discarded after each update. This inefficiency, combined with insufficient exploration, can sometimes hinder performance, particularly in scenarios requiring unbiased sampling~\cite{thomas2014bias}.

\subsection{Initial Studies on Structure-Enhanced DRL Algorithms}
Most existing works apply general DRL algorithms to solve specific scheduling problems without incorporating domain-specific insights, focusing instead on brute-force optimization techniques. As a result, these algorithms are prone to getting stuck in local minima, leading to performance losses compared to the theoretical optimal policy. A major limitation of these approaches is the lack of investigation into the structural properties of optimal policies, which, if utilized, could significantly enhance the efficiency and effectiveness of DRL algorithms.

Recently, there has been growing interest in leveraging the structural properties of optimal policies to improve DRL-based solutions. For example, a basic single-sensor transmission scheduling problem in a remote estimation system under communication constraints is analyzed in~\cite{wu2019single}, where the authors identified the monotonicity and submodularity of the state-action value function of the optimal policy. Monotonicity implies that taking certain actions consistently improves system performance, and submodularity reflects diminishing returns when applying multiple actions. Building on this, the analysis is extended to a multi-sensor remote state estimation system over AWGN channels in~\cite{wu2020optimalmulti}, where the threshold property of the optimal policy is formally proven, showing that sensors are scheduled based on well-defined thresholds.

More recently, pioneering works have focused on developing theoretical properties of optimal policies to guide DRL algorithms in efficiently discovering optimal solutions. In~\cite{chen2022seDRL}, the authors studied a multi-sensor remote estimation system over fading channels and proved that the optimal policy exhibits a threshold structure. They then proposed a structure-enhanced DRL algorithm that leverages this property to achieve improved performance compared to traditional DRL methods. A follow-up study in~\cite{chen2023semantic} further demonstrated that the state-action value function is monotonic with respect to both AoI and channel states. This insight led to the development of a monotonicity-enforced DDPG algorithm, which enhances convergence speed and performance over baseline methods.
However, these works rely exclusively on off-policy DRL, which suffers from inherent instability, particularly when applied to large-scale dynamic decision-making problems. Consequently, the proposed algorithms are limited to solving scheduling problems in systems with a scale of up to twenty sensors and ten channels.

In this paper, we focus on a multi-device goal-oriented transmission scheduling problem over fading channels and delve deeper into exploring the structural properties of the optimal policy. By leveraging these theoretical insights, we aim to develop a hybrid DRL algorithm that integrates the strengths of both off-policy and on-policy DRL methods, enabling the efficient resolution of large-scale systems.
The main contributions of this work are summarized as follows:
\begin{enumerate}
    \item We derive key structural properties of the optimal solution to the formulated MDP for the goal-oriented scheduling problem, which accounts for both AoI and channel states of all devices. Specifically, we establish the monotonicity of the optimal state value function w.r.t. channel states, complementing the monotonicity w.r.t. AoI states derived in our earlier work~\cite{chen2022seDRL}. Additionally, we prove the asymptotic convexity of the state value function w.r.t. AoI states, representing the first result in the literature to explore convexity in transmission scheduling problems. Finally, we derive the monotonicity of the optimal policy w.r.t. channel states, further advancing the theoretical understanding of optimal scheduling in goal-oriented communications.

    \item We propose the structure-guided unified dual on-off policy DRL (SUDO-DRL), a novel hybrid algorithm that leverages the derived structural insights to solve the scheduling problem efficiently. SUDO-DRL uniquely combines the stability of on-policy training and the sample efficiency of off-policy methods through a unified loss function. A structural property evaluation framework is introduced to derive critic-monotonicity, critic-convexity, and actor-monotonicity scores, which are incorporated into the on-policy loss function. For the off-policy component, the structural scores guide replay buffer management by selectively storing transitions from good policies and enabling priority-based sampling, significantly enhancing training effectiveness and efficiency.

    \item The proposed SUDO-DRL demonstrates robust performance improvements in goal-oriented communication systems. Numerical experiments reveal that it enhances system performance by 25\% to 45\%, while reducing convergence time by approximately 40\% compared to state-of-the-art methods. Furthermore, by leveraging the advantages of both on-policy and off-policy training methods, SUDO-DRL effectively addresses scheduling problems in systems with up to 40 devices and 20 channels—a scale where benchmark off-policy algorithms fail to converge, and state-of-the-art on-policy DRL exhibits significant performance loss—underscoring its scalability and effectiveness in large-scale scenarios.
\end{enumerate}

\textbf{Outline:} The system model of the goal-oriented communication system is introduced in Section~\ref{sec: sys}. The transmission scheduling problem formulation and the definition of value functions are presented in Section~\ref{sec: problem formulation and value function}. The structural properties of value functions and optimal policies are proven in Section~\ref{sec: proof of structure}. To solve the formulated problem, the structure-guided unified on-off policy DRL algorithm is developed in Section~\ref{sec: DRL}. The results of the numerical experiments are given and analyzed in Section~\ref{sec: simulation}. Finally, the conclusion is presented in Section~\ref{sec: conclusion}.

\section{System Model} \label{sec: sys}
We consider a wireless goal-oriented communication system with $N$ edge devices (e.g. cameras or sensors) and a remote destination (e.g. a base station or remote estimator) as illustrated in Fig.~\ref{fig: system model}.
The devices transmit local data to the remote destination through $M$ channels (e.g. subcarriers) where $M<N$.

\begin{figure}
    \centering
    \includegraphics[width=0.97\linewidth]{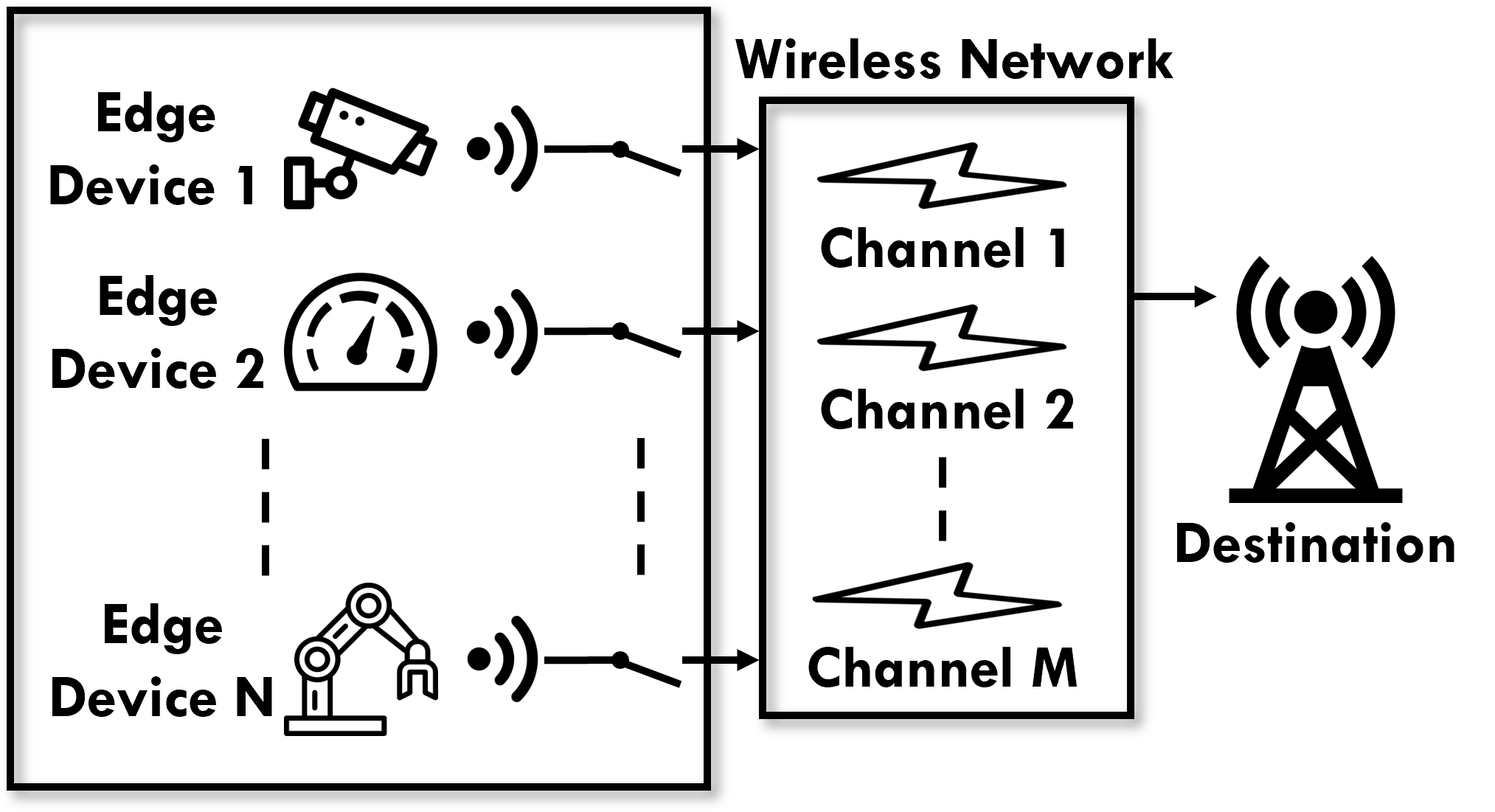}
    \caption{Goal-oriented communication system with $N$ edge devices, $M$ channels, and a remote destination}
    \label{fig: system model}
\end{figure}

\subsection{Communication Model}\label{subsec: communication model}
In this paper, wireless channels are modeled as independent and identically distributed (i.i.d.) block fading channels, where the channel state remains constant during each packet transmission, but changes independently between each transmission.
At time step $t$, we denote the system channel state between device $n$ and the remote destination at channel $m$ as $g_{n,m,t} \in \mathcal{G} \triangleq \left\{ 1, \dots, \bar{g} \right\}$ with $\bar{g}$ quantization levels.
The overall system channel state is represented by an $N \times M$ matrix $\mathbf{G}_{t}$ with $g_{n,m,t}$ being the element in the $m$th column and $n$th row.
The channel state $g_{n,m,t}$ follows the probability distribution:
\begin{equation}
    \operatorname{P}(g_{n,m,t}=i) = q^{i}_{n,m}, \forall t,
    \label{eq: channel distribution}
\end{equation}
where $\sum_{i=1}^{\bar{g}} q^{i}_{n,m} =1, \forall n,m$.
The packet drop rate for state $g_{n,m,t}$ is denoted as $\psi_{n,m,t}$, with higher channel states corresponding to higher packet drop rates.
The remote destination acquires the instantaneous channel state $\mathbf{G}_{t}$ using standard channel estimation methods~\cite{Coleri2002ChannelEstimation}.

The channel assignment for device $n$ at time $t$ is denoted~as
\begin{equation}
    a_{n,t} = 
    \begin{cases}
        0,  & \text{if no channel is allocated to device $n$,} \\
        m,  & \text{if channel $m$ is allocated to device $n$.}
    \end{cases}
\label{eq:action}
\end{equation}
This assignment satisfies the constraint:
\begin{equation}
    \sum_{n=1}^{N} \mathbbm{1}\left( a_{n,t} = m \right) = 1,
    \sum_{m=1}^{M} \mathbbm{1}\left( a_{n,t} = m \right) \leq 1,
    \label{eq: action constraint}
\end{equation} where $\mathbbm{1} (\cdot)$ is the indicator function.
The constraint ensures that each channel is assigned to only one device, and each device is allocated at most one channel.

\subsection{Goal-oriented Communication Performance Metric} \label{subsec: metric}
We define $\delta_{n,t} \in \{1, 2, \dots\}$ as the AoI of the device $n$ at time $t$, which refers to the time elapsed since the last successful receive of device packet at the destination~\cite{liu2021remoteMF, tong2022age}:
\begin{equation}
\!\!\!\!\!\delta_{n,t+1} \!=\!\!
\begin{cases}
\!1, & {\text{if remote destination receive}}\\
     & { \text{device $n$'s packet at time $t$}} \\
\!\delta_{n,t}\!+\!1, & {\text{otherwise.}}
\end{cases}
\label{eq: aoi}
\end{equation}
To characterize the importance of information in a goal-oriented communication system, we define a \textbf{positive cost function} $c_{n} (\delta_{n,t}), \forall n \in \{1, 2, \dots, N\}$. This cost function, a critical performance metric for device $n$, is \textbf{non-decreasing} with respect to AoI and varies based on the system's goals, with lower values indicating better performance.

Next, we provide an example of a goal-oriented communication system~\cite{chen2023semantic}.

\begin{example}[Remote state estimation system]\label{example: remote estimation}
    This system consists of a remote estimator that reconstructs information sent by $N$ sensors,where each sensor $n$ measures a corresponding dynamic process modeled as a discrete-time linear time-invariant (LTI) system~\cite{leong2020DRL, liu2022stability}:
    \begin{equation}
        \mathbf{x}_{n,t+1}  = \mathbf{A}_{n} \mathbf{x}_{n, t} + \mathbf{w}_{n, t},  \ 
        \mathbf{y}_{n, t}  = \mathbf{C}_{n} \mathbf{x}_{n, t} + \mathbf{v}_{n, t},
        \label{eq:LTI}
    \end{equation}
    where $\mathbf{x}_{n, t} \in \mathbb{R}^{r_{n}}$ and $\mathbf{y}_{n, t} \in \mathbb{R}^{e_{n}}$ are process $n$'s state and its measurement of sensor $n$, respectively; $\mathbf{A}_{n} \in \mathbb{R}^{r_{n} \times r_{n}}$ and $\mathbf{C}_{n} \in \mathbb{R}^{e_{n} \times r_{n}}$ are the system matrix
    and the measurement matrix, respectively; $\mathbf{w}_{n, t} \in \mathbb{R}^{r_{n}}$ and $\mathbf{v}_{n, t} \in \mathbb{R}^{e_{n}}$ are the process disturbance and the measurement noise modeled as independent and identically distributed (i.i.d) zero-mean Gaussian random vectors $\mathcal{N}(\mathbf{0},\mathbf{W}_{n})$ and $\mathcal{N}(\mathbf{0},\mathbf{V}_{n})$, respectively.
    
    Due  to measurement imperfections ($\mathbf{y}_{n, t} \neq \mathbf{x}_{n, t}$), sensor $n$ generates a local state estimate $\hat{\mathbf{x}}_{n,t}^{\text{local}}$ based on the raw measurements $\{\mathbf{y}_{n, t} \}$ by executing a local Kalman filter.
    Limited wireless channels and packet dropouts may prevent some local estimates from reaching the remote estimator, which then computes a remote state estimate $\hat{\mathbf{x}}_{n,t}$ using a minimum mean-square error (MMSE) estimator, where the estimation error covariance at time $t$ is
    \begin{equation}
        \mathbf{P}_{n, t} \triangleq \mathbb{E} \left[ \left(\hat{\mathbf{x}}_{n, t} - \mathbf{x}_{n, t}\right) \left(\hat{\mathbf{x}}_{n, t} - \mathbf{x}_{n, t} \right)^{\top} \right] 
          = h_{n}^{\delta_{n,t}}(\bar{\mathbf{P}}_{n}),
    \end{equation}
    where $h_{n}(\mathbf{X}) = \mathbf{A}_{n} \mathbf{X} \mathbf{A}_{n}^{\top} + \mathbf{W}_{n}$, $h^{\delta+1}_{n}(\cdot) =h_n(h^{\delta}_{n}(\cdot))$, and $\bar{\mathbf{P}}_{n}$ is a constant depending on $\mathbf{A}_n$, $\mathbf{C}_n$, $\mathbf{W}_n$, and $\mathbf{V}_n$~\cite{liu2021remoteMF}.

    Since the goal of the remote state estimation system is to provide high-quality remote estimation, the cost function of process $n$ is defined as the \textbf{estimation mean-square error (MSE)}, i.e., 
    \begin{equation}
        c_n(\delta_{n,t}) \triangleq \operatorname{Tr}(\mathbf{P}_{n,t}) = \operatorname{Tr}\left(h_{n}^{\delta_{n,t}}(\bar{\mathbf{P}}_{n})\right).
        \label{eq: cost function}
    \end{equation}
\end{example}

\section{Goal-oriented Transmission Scheduling}\label{sec: problem formulation and value function}
Our goal is to determine a dynamic scheduling policy $\pi(\cdot)$ that, based on the AoI state $\bm{\delta}_{t}\triangleq \{\delta_{1,t},\dots,\delta_{N,t}\}$ and the channel state $\mathbf{G}_{t}$, minimizes the infinity-horizon expected sum of cost functions across all $N$ devices, with a discount factor $\gamma\in(0,1)$. The problem is formulated as follows:
\begin{problem}
\begin{equation}
    \min_{\pi} \lim_{T \to \infty} \mathbb{E}^\pi \left[ \sum_{t=1}^{T} \sum_{n=1}^N \gamma^t c_n(\delta_{n,t}) \right].
\end{equation}
\label{pro1}
\end{problem}

\subsection{MDP Formulartion} \label{subsec: mdp fomulation}

In Problem~\ref{pro1}, the channel states are assumed to be i.i.d. over time, and the cost $c_n(\delta_{n,t})$ is defined as a function solely dependent on the Markovian AoI state $\delta_{n,t}$, as described in~\eqref{eq: aoi}. Consequently, Problem~\ref{pro1} is a sequential decision-making problem that satisfies the Markov property, allowing it to be formulated as an MDP as below.

$\bullet$ \textbf{States:} At time $t$, given the instantaneous AoI state vector $\bm{\delta}_{t} \in \mathbb{N}^N$ and the real-time full channel state matrix $\mathbf{G}_{t} \in \mathcal{G}^{N \times M}$, the MDP state is defined as $\mathbf{s}_{t} \triangleq (\bm{\delta}_{t}, \mathbf{G}_{t})$ and the state space is $\mathcal{S} \triangleq \mathbb{N}^N \times \mathcal{G}^{N \times M}$.

$\bullet$ \textbf{Actions:} Given a policy function $\pi(\cdot)$, which maps a state to an action, the action at time $t$ is defined as $\mathbf{a}_{t} = \pi(\mathbf{s}_{t}) = (a_{1,t}, \dots, a_{N,t}) \in \left\{ 0, 1, 2, \dots, M \right\}^N$, subject to the constraint~\eqref{eq: action constraint}. Under this constraint, the action space is $\mathcal{A} \subset \left\{ 0, 1, 2, \dots, M \right\}^N$ with the size of $N!/(N-M)!$.

$\bullet$ \textbf{Transitions:} In the MDP, the probability of transitioning to the next state $\mathbf{s}_{t+1}$ from the current state $\mathbf{s}_t$ after executing the action $\mathbf{a}_t$ is denoted as the state transition probability $\operatorname{P}(\mathbf{s}_{t+1}|\mathbf{s}_{t}, \mathbf{a}_{t})$.
Since the optimal policy of an infinite-horizon MDP is stationary~\cite[Chapter~6]{puterman2014markov}, this state transition is independent of the time $t$, i.e., time homogeneous.
For simplicity, the subscript $t$ is dropped and the states at the current and next time steps are notated by $\mathbf{s}$ and $\mathbf{s}^{+}$, respectively.
Then, state transition probability is given as
\begin{equation}
    \operatorname{P}(\mathbf{s}^+|\mathbf{s}, \mathbf{a}) = \operatorname{P}(\bm{\delta}^+|\bm{\delta},\mathbf{G}, \mathbf{a}) \operatorname{P}(\mathbf{G}^+),
\end{equation}
where $\operatorname{P}(\mathbf{G}^+)$ is the probability of the channel state matrix and can be derived by using~\eqref{eq: channel distribution}, and $\operatorname{P}(\bm{\delta}^+|\bm{\delta},\mathbf{G}, \mathbf{a})$ is the AoI state vector transition probability:
\begin{equation}\label{eq:transision_prob}
    \operatorname{P}(\bm{\delta}^+|\bm{\delta},\mathbf{G}, \mathbf{a}) = \Pi_{n=1}^{N} \operatorname{P}(\delta^+_{n}|\delta_{n}, \mathbf{G}, a_{n})
\end{equation}
where 
\begin{align}
    \!\!\!\!\!\operatorname{P}(\delta^+_{n}|\delta_{n}, \mathbf{g}_n, a_{n})  \!=\! 
    \begin{cases}
        1 - \psi_{n, m}, & \!\!\!\text{if $\delta^+_{n} = 1, a_{n} = m$,} \\
        \psi_{n, m}, & \!\!\!\text{if $\delta^+_{n} = \delta_{n} \!+\! 1, a_{n} \!= \!m$,}\\
        1,          & \!\!\!\text{if $\delta^+_{n} = \delta_{n} \!+\! 1, a_{n} \!=\! 0$,} \\
        0, & \!\!\!{\text{otherwise,}}
    \end{cases}
    \label{eq: transition p 2}
\end{align}
where $\mathbf{g}_n$ is sensor $n$'s channel vector, i.e., the $n$th row of $\mathbf{G}$. \eqref{eq: transition p 2} is obtained from~\eqref{eq:action} and~\eqref{eq: aoi}.

$\bullet$ \textbf{Costs:} The immediate cost is defined as the sum of the cost of all devices at time $t$, i.e., $c(\mathbf{s}_{t}) = \sum_{n=1}^N c_n(\delta_{n,t})$.

\subsection{Value Functions for the Optimal Policy} \label{subsec: value function}
For the optimal scheduling policy of the MDP, i.e., $\pi^*(\cdot)$, we define the optimal \textbf{action-value function} $Q(\mathbf{s}_{t}, \mathbf{a}_{t}): \mathcal{S} \times \mathcal{A} \to \mathbb{R}$ and the optimal \textbf{state-value function}  $\upsilon^{*}(\mathbf{s}_{t}): \mathcal{S} \to \mathbb{R}$ as below.

Given the current state $\mathbf{s}_{t}$ and action $\mathbf{a}_{t}$, the action-value function, also known as Q function, represents the expected cumulative discounted cost of executing action $\mathbf{a}_{t}$ and following the optimal policy $\pi^*(\cdot)$, i.e., 
\begin{equation}
    \!\!\!\!Q(\mathbf{s}_{t}, \mathbf{a}_{t}) \!=\!  \mathbb{E} \left[\sum_{\tilde{t}=t}^{\infty} \gamma^{\tilde{t}-t} c(\mathbf{s}_{\tilde{t}}) \Big\vert \mathbf{a}_{t}, \mathbf{a}_{\tilde{t}} \!=\! \pi^{*}(\mathbf{s}_{\tilde{t}}),\forall \tilde{t}\!>\!t \right]\!,\!
    \label{eq: def Q}
\end{equation}
which satisfies the Bellman optimality equation:
\begin{equation}
    \!\!Q(\mathbf{s}_{t},\! \mathbf{a}_{t})  \!=\! c(\mathbf{s}_{t}) \!+\! \gamma\! \sum_{\mathbf{s}_{t+1}}\!\operatorname{Pr}(\mathbf{s}_{t+1}|\mathbf{s}_{t}, \!\mathbf{a}_{t}) \min_{\mathbf{a}_{t+1}\in \mathcal{A}} Q \!\left(\mathbf{s}_{t+1},\! \mathbf{a}_{t+1}\right)\!.\!\!\!
    \label{eq:Bellman_Q}
\end{equation}
The optimal action given the optimal policy $\pi^*(\cdot)$ is
\begin{equation}
    \!\!\!\mathbf{a}_{t}^{*} 
    \triangleq \pi^{*}(\mathbf{s}_{t}) 
    = \mathop{\arg\min}_{\mathbf{a}_{t}\in \mathcal{A}} Q (\mathbf{s}_{t}, \mathbf{a}_{t}).
    \label{eq:optimal V action}
\end{equation}
Then, the optimal state-value function, also called the optimal V function, is defined as
\begin{equation}
    \upsilon^{*}(\mathbf{s}_{t}) = Q(\mathbf{s}_{t}, \mathbf{a}^*_{t}),
    \label{eq: optimal state-value function}
\end{equation}
which is the minimum expected discounted sum of the future cost starting in state $\mathbf{s}_{t}$ under the optimal policy $\pi^*(\cdot)$.
Based on~\eqref{eq:Bellman_Q},~\eqref{eq:optimal V action} and~\eqref{eq: optimal state-value function}, the following inequality holds:
\begin{equation}
    Q(\mathbf{s}_{t}, \mathbf{a}_{t}) \geq \upsilon^{*}(\mathbf{s}_{t}).
    \label{ineq:V&Q}
\end{equation}

Conventional MDP algorithms, such as value iteration and policy iteration, solve MDP problems by computing the optimal V function $\upsilon^{*}(\cdot)$ in~\eqref{eq: optimal state-value function} or the optimal policy $\pi^*(\cdot)$ in~\eqref{eq:optimal V action}. However, these methods are highly computationally complex for large state and action spaces.
As a result, conventional algorithms are infeasible for solving the formulated MDP problem, even for relatively small systems—for example, a 10-device-5-channel system with infinite states and $N!/(N-M)! = 30240$ actions~\cite{puterman2014markov}.

\section{Structural Properties of Optimal Policy} \label{sec: proof of structure}
%We derive some structural properties of the optimal V function (including monotonicity in Theorem~\ref{theo: mono channel}, and convexity in Theorem~\ref{theo: convexity, 2-1} and~\ref{theo: convexity, N-M}) and the optimal scheduling policy when devices are connected by a gateway (a greedy structure in Theorem~\ref{theo: structure gateway}).
In this section, we derive structural properties of the optimal V function and the optimal scheduling policy, which will be leveraged in the design of advanced DRL algorithms in Section~\ref{sec: DRL}.
Similar to the MDP formulation in Section~\ref{subsec: mdp fomulation}, we use $\mathbf{a}$, $\mathbf{s}$, and $\mathbf{s}^{+}$ to denote $\mathbf{a}_{t}, \mathbf{s}_{t},$ and $\mathbf{s}_{t+1}$, respectively, for simplicity in notation.

\subsection{Monotonicity of Optimal V function} \label{subsec: V function mono}
In our earlier work, we established the following result about the monotonicity of the optimal V function w.r.t. the AoI state:
\begin{lemma}[Monotonicity of optimal V function w.r.t. AoI states~\cite{chen2022seDRL}]\label{lemma:monotone V}
    Consider states $\mathbf{s} = (\bm{\delta}, \mathbf{G})$ and $\mathbf{s}'_{\text{AoI}} = (\bm{\delta}'_{(n)}, \mathbf{G})$, where $\bm{\delta}'_{(n)}$ is identical to $\bm{\delta}$ except for the $n$th AoI state, which is $\delta'_{n}$, and $\delta'_{n}\geq \delta_{n}$, then, the optimal V function holds the inequality:
    \begin{equation}
        \upsilon^{*}(\mathbf{s}'_{\text{AoI}}) \geq \upsilon^{*}(\mathbf{s}).
    \end{equation}
\end{lemma}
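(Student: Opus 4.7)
The plan is to prove the claim by value iteration together with a sample-path (coupling) argument that exploits two ingredients already present in the model: the per-device cost $c_n(\cdot)$ is non-decreasing in $\delta_n$ (by the definition of the cost function in Section~\ref{subsec: metric}), and the channel-state process $\mathbf{G}_t$ is i.i.d.\ and therefore statistically independent of the current AoI vector. The combination of these two facts should propagate the component-wise order on $\bm{\delta}$ through the Bellman operator, and iterating this will give the desired monotonicity of $\upsilon^{*}$.

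Concretely, I would introduce the value-iteration sequence $\upsilon_{0}\equiv 0$ and, for $k\geq 0$,
\begin{equation}
\upsilon_{k+1}(\mathbf{s}) \;=\; \min_{\mathbf{a}\in\mathcal{A}}\Big[\,c(\mathbf{s}) + \gamma\sum_{\mathbf{s}^{+}}\operatorname{P}(\mathbf{s}^{+}\mid \mathbf{s},\mathbf{a})\,\upsilon_{k}(\mathbf{s}^{+})\,\Big],
\end{equation}
which converges to $\upsilon^{*}$ under the discount factor $\gamma\in(0,1)$. I would then prove by induction on $k$ that $\upsilon_{k}(\mathbf{s}'_{\text{AoI}}) \geq \upsilon_{k}(\mathbf{s})$ whenever $\bm{\delta}'_{(n)}$ differs from $\bm{\delta}$ only in the $n$th entry and $\delta'_{n}\geq \delta_{n}$. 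The base case is immediate. For the inductive step, fix any action $\mathbf{a}\in\mathcal{A}$ and compare the two one-step lookaheads. The immediate cost satisfies $c(\mathbf{s}'_{\text{AoI}}) \geq c(\mathbf{s})$ by the non-decreasing property of $c_n$. For the continuation term, I would couple the two transitions: sample the same next channel matrix $\mathbf{G}^{+}$ and the same transmission success/failure outcome for device $n$ under action $\mathbf{a}$. Using the AoI transition in~\eqref{eq: transition p 2}, the resulting next AoI vectors $\bm{\delta}^{+}$ from $\mathbf{s}$ and $\bm{\delta}^{+\prime}$ from $\mathbf{s}'_{\text{AoI}}$ either both reset the $n$th component to $1$ (when $a_n$ corresponds to a successful transmission) or both increment by $1$; hence $\delta^{+\prime}_n \geq \delta^{+}_n$ and all other components agree. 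By the induction hypothesis applied state-by-state in this coupling, $\upsilon_{k}(\mathbf{s}^{+\prime})\geq \upsilon_{k}(\mathbf{s}^{+})$ almost surely, so the expectations obey the same inequality. Taking the minimum over $\mathbf{a}$ preserves the inequality, establishing the inductive step, and letting $k\to\infty$ yields $\upsilon^{*}(\mathbf{s}'_{\text{AoI}})\geq \upsilon^{*}(\mathbf{s})$.

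The step I expect to be the most delicate is the coupling that exchanges the explicit product-form transition in~\eqref{eq:transision_prob}--\eqref{eq: transition p 2} for a sample-path comparison. Although the channel independence from $\bm{\delta}$ makes the coupling natural, care is needed to verify that the packet-drop probability $\psi_{n,m}$ does not depend on $\delta_n$, so that the Bernoulli success indicator can indeed be shared across the two coupled trajectories, and to argue that no reshuffling of action components $a_m$ for $m\neq n$ can break the component-wise order. Once this sample-path dominance is in place, the remainder is a routine monotonicity-preserving argument for the Bellman operator, so I do not anticipate additional obstacles beyond verifying these structural assumptions that are already embedded in the model.
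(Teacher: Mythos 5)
Your proof is correct: the value-iteration induction with the success/failure case split for device $n$ (equivalently, your coupling of the Bernoulli drop indicator, which is legitimate since $\psi_{n,m}$ depends only on the channel state and not on $\delta_n$) is exactly the monotonicity-preservation template this paper uses in its appendices for all analogous structural results (Theorems~\ref{theo: convexity, 2-1} and~\ref{theo: convexity, N-M} via Lemma~\ref{lemma:converge of V*}), and is the standard route to Lemma~\ref{lemma:monotone V}, which this paper imports from~\cite{chen2022seDRL} without reproving. The only cosmetic difference is that the paper's style expands the continuation term explicitly as the sum over $\delta_n^{+}\in\{1,\delta_n+1\}$ weighted by $1-\psi_{n,m}$ and $\psi_{n,m}$ rather than phrasing it as a sample-path dominance, but the content is identical.
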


We now prove that the optimal V function is also monotonically decreasing in terms of the channel states as below.

\begin{theorem}[Monotonicity of the optimal V function w.r.t. channel states]\label{theo: mono channel}
    Consider states $\mathbf{s} = (\bm{\delta}, \mathbf{G})$ and $\mathbf{s}'_{\text{Ch}} = (\bm{\delta}, \mathbf{G}'_{(n,m)})$, where $\mathbf{G}$ and $\mathbf{G}'_{(n,m)}$ are identical except for the element in the $n$th row and $m$th column $g_{n,m} < g'_{n,m}$. The optimal V function holds the inequality:
    \begin{equation}
         \upsilon^{*}(\mathbf{s}'_{\text{Ch}}) \geq \upsilon^{*}(\mathbf{s}).
         \label{eq: mono channel}
    \end{equation}
\end{theorem}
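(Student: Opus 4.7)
The plan is to establish~\eqref{eq: mono channel} by induction on the value iteration sequence generated by the Bellman operator. I would set $\upsilon_{0}\equiv 0$, let $\upsilon_{k+1}(\mathbf{s})\triangleq c(\mathbf{s})+\gamma\min_{\mathbf{a}\in\mathcal{A}}\sum_{\mathbf{s}^{+}}\operatorname{P}(\mathbf{s}^{+}\mid\mathbf{s},\mathbf{a})\upsilon_{k}(\mathbf{s}^{+})$ (this is the standard $V$-form of the operator in~\eqref{eq:Bellman_Q}), and prove by joint induction on $k$ that (i) $\upsilon_{k}$ is non-decreasing in each AoI coordinate (the inductive counterpart of Lemma~\ref{lemma:monotone V}) and (ii) $\upsilon_{k}(\mathbf{s}'_{\text{Ch}})\geq\upsilon_{k}(\mathbf{s})$ whenever only the $(n,m)$ entry of the channel matrix is increased. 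Since the Bellman operator is a $\gamma$-contraction, $\upsilon_{k}\to\upsilon^{*}$ pointwise, and the theorem follows in the limit. The base case $k=0$ is immediate.

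For the inductive step, my first move is to isolate where the entry $g_{n,m}$ enters the one-step dynamics. The immediate cost depends on $\bm{\delta}$ alone, so $c(\mathbf{s}'_{\text{Ch}})=c(\mathbf{s})$. Because $\operatorname{P}(\mathbf{G}^{+})$ in~\eqref{eq: channel distribution} is i.i.d.\ and independent of the current state and action, it factors out of every expectation identically for the two starting states. By~\eqref{eq:transision_prob}--\eqref{eq: transition p 2}, the AoI kernel for device $n'$ depends on $\mathbf{G}$ only through $g_{n',a_{n'}}$, so $g_{n,m}$ influences the kernel only through device $n$ and only when $a_{n}=m$. Fixing an action $\mathbf{a}$, I would then case-split: if $a_{n}\neq m$ the AoI kernels under $\mathbf{s}$ and $\mathbf{s}'_{\text{Ch}}$ coincide and the two conditional expectations of $\upsilon_{k}$ are equal; if $a_{n}=m$, only the marginal law of $\delta^{+}_{n}$ changes, with $\psi_{n,m}$ evaluated at $g'_{n,m}>g_{n,m}$ putting more mass on $\delta_{n}+1$ and less on $1$ (by the packet-drop monotonicity stated in Sec.~\ref{subsec: communication model}).

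To convert this stochastic dominance into an inequality between expectations, I would define $f(\bm{\delta}^{+})\triangleq\sum_{\mathbf{G}^{+}}\operatorname{P}(\mathbf{G}^{+})\upsilon_{k}(\bm{\delta}^{+},\mathbf{G}^{+})$, which inherits AoI-monotonicity from inductive hypothesis~(i), and factor out the marginal of the remaining AoIs (unchanged between the two kernels by~\eqref{eq:transision_prob}) to reduce the expectation gap in the $a_{n}=m$ case to $(\psi_{n,m}(g'_{n,m})-\psi_{n,m}(g_{n,m}))\,\mathbb{E}[f(\delta_{n}+1,\bm{\delta}^{+}_{-n})-f(1,\bm{\delta}^{+}_{-n})]\geq 0$, where $\bm{\delta}^{+}_{-n}$ denotes the next-step AoIs of devices other than $n$. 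Combining both cases yields $\mathbb{E}[\upsilon_{k}(\mathbf{s}^{+})\mid\mathbf{s}'_{\text{Ch}},\mathbf{a}]\geq\mathbb{E}[\upsilon_{k}(\mathbf{s}^{+})\mid\mathbf{s},\mathbf{a}]$ for every admissible $\mathbf{a}$; minimizing over $\mathcal{A}$ preserves the inequality, and adding the common immediate cost closes part~(ii). Part~(i) is re-derived in parallel by reusing the argument behind Lemma~\ref{lemma:monotone V} at each Bellman step, which depends only on the AoI-transition structure already in place. The main obstacle I expect is the bookkeeping in the $a_{n}=m$ factorization: the joint AoI kernel must be decomposed so that precisely the two-point law tied to $g_{n,m}$ is isolated, after which the AoI monotonicity supplies the pointwise sign; everything else is routine algebra or inherited from Lemma~\ref{lemma:monotone V}.
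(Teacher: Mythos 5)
Your proposal is correct, but it reaches the result by a different route than the paper. The paper argues directly on the optimal $Q$ function: it fixes an optimal action, splits on whether $a_{n}=m$, and in the scheduled case invokes the already-established Lemma~\ref{lemma:monotone V} (AoI monotonicity of $\upsilon^{*}$, imported from prior work) to show that shifting probability mass from $\delta^{+}_{n}=1$ to $\delta^{+}_{n}=\delta_{n}+1$ can only increase the expected continuation value; no induction is used. You instead run a joint induction along the value-iteration sequence, re-deriving AoI monotonicity of each iterate $\upsilon_{k}$ in parallel and passing to the limit. The pivotal inequality is identical in both arguments --- $(\psi'_{n,m}-\psi_{n,m})\left[f(\delta_{n}+1,\bm{\delta}^{+}_{-n})-f(1,\bm{\delta}^{+}_{-n})\right]\geq 0$ --- so the mathematical content coincides; what differs is the scaffolding. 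Your version costs the extra machinery of value-iteration convergence (which the paper in any case uses for its convexity proofs, so this is consistent with its toolkit) but buys self-containedness and, notably, a cleaner logical reduction: the paper fixes $\mathbf{a}^{*}=\pi^{*}(\mathbf{s})$ and claims it suffices to show $Q(\mathbf{s}'_{\text{Ch}},\mathbf{a}^{*})\geq Q(\mathbf{s},\mathbf{a}^{*})$, which only bounds $Q(\mathbf{s}'_{\text{Ch}},\mathbf{a}^{*})$ from below by $\upsilon^{*}(\mathbf{s})$ and does not immediately bound $\upsilon^{*}(\mathbf{s}'_{\text{Ch}})=\min_{\mathbf{a}}Q(\mathbf{s}'_{\text{Ch}},\mathbf{a})$; your formulation --- prove the inequality for \emph{every} admissible action and then observe that the minimum preserves it --- is the airtight version of that step. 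The only caveat worth flagging is that, with unbounded AoI and possibly unbounded costs, the convergence $\upsilon_{k}\to\upsilon^{*}$ requires a weighted-norm or monotone-convergence argument rather than the plain sup-norm contraction, but the paper's own appendices rely on the same convergence fact, so this is not a gap relative to the paper's standards.
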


\begin{proof}
    To prove Theorem~\ref{theo: mono channel} based on~\eqref{ineq:V&Q}, it is sufficient to prove
    \begin{equation}
        Q(\mathbf{s}'_{\text{Ch}}, \mathbf{a}^{*}) \geq Q(\mathbf{s}, \mathbf{a}^{*}),
        \label{eq: mono channel, Q}
    \end{equation}
    where $\mathbf{a}^{*}$ is the optimal action given the state $\mathbf{s}$, i.e., $\mathbf{a}^{*} = \pi^{*}(\mathbf{s})$.
    From~\eqref{eq: transition p 2}, we have the transition probability of the AoI state
    \begin{align}
        \operatorname{P}(\bm{\delta}^{+}|\bm{\delta}, \mathbf{G}, \mathbf{a}) 
        & = \operatorname{P} (\delta_{n}^{+}|\delta_{n}, \mathbf{g}_{n},  a_{n}) \\
        & \quad \times \operatorname{P} \left(\bm{\delta}_{\backslash\!\{n\}}^{+}|\bm{\delta}_{\backslash\!\{n\}}, \!\mathbf{G}_{\backslash\!\{n\}}, \!\mathbf{a}_{\backslash\!\{n\}}\right)\!, 
        \label{eq: transition delta}
    \end{align}
    where $\mathbf{a}_{\backslash\!\{n\}} \triangleq (a_{1}, \dots, a_{n-1}, a_{n+1}, \dots, a_{N})$ and $\bm{\delta}_{\backslash\!\{n\}} \triangleq (\delta_{1}, \dots, \delta_{n-1}, \delta_{n+1}, \dots, \delta_{N})$ denote all actions and AoI states without the device $n$, respectively,  and $\mathbf{G}_{\backslash\!\{n\}} \triangleq (\mathbf{g}_{1}, \dots, \mathbf{g}_{n-1}, \mathbf{g}_{n+1}, \dots,  \mathbf{g}_{N})$.
    By substituting~\eqref{eq: optimal state-value function} and~\eqref{eq: transition delta} in the right-hand side of~\eqref{eq:Bellman_Q}, we derive that
    \begin{align}
        Q(\mathbf{s}, \mathbf{a}) &= c(\mathbf{s}) \!+\! \gamma \sum_{\mathbf{G}^{+}} \sum_{\bm{\delta}^{+}} \operatorname{P} (\mathbf{G}^{+}) \operatorname{P} (\bm{\delta}^{+}|\bm{\delta}, \!\mathbf{G}, \mathbf{a}) \upsilon^{*}(\mathbf{s}^{+}) \\
         & = c(\mathbf{s}) + \gamma \sum_{\mathbf{G}^{+}} \sum_{\delta_{n}^{+}}  \sum_{\bm{\delta}_{\backslash\!\{n\}}^{+}} \operatorname{P} (\mathbf{G}^{+}) \operatorname{P} (\delta_{n}^{+}|\delta_{n}, \mathbf{g}_{n}, a_{n}) \!\! \\
        & \qquad \times \operatorname{P} (\bm{\delta}_{\backslash\!\{n\}}^{+}|\bm{\delta}_{\backslash\!\{n\}}, \mathbf{G}_{\backslash\!\{n\}}, \!\mathbf{a}_{\backslash\!\{n\}}) \upsilon^{*}(\mathbf{s}^{+}).
        \label{eq: Q i -i}
    \end{align}
    To simplify the notation, we denote $\mathbf{G}'_{(n,m)}$ with $\mathbf{G}'$ and proceed to prove~\eqref{eq: mono channel, Q} by considering different cases of the optimal action $\mathbf{a}^{*}$.
    \begin{itemize}
        \item[(a)] If $a^{*}_{n} \neq m$, then
        \begin{align}
            & Q(\mathbf{s}'_{\text{Ch}}, \mathbf{a}^{*}) - Q(\mathbf{s}, \mathbf{a}^{*}) \\
            & = \left[c\left(\mathbf{s}'_{\text{Ch}}\right) - c\left(\mathbf{s}\right) \right] \\
            & \quad + \gamma \bigg[ \sum_{{\mathbf{G}'}^{+}} \sum_{\bm{\delta}^{+}} \operatorname{P} ({\mathbf{G}'}^{+}) \operatorname{P} (\bm{\delta}^{+}|\bm{\delta}, \mathbf{G}'\!, \mathbf{a}) \upsilon^{*}( \bm{\delta}^{+},\!{\mathbf{G}'}^{+}) \\
            & \quad - \sum_{\mathbf{G}^{+}} \sum_{\bm{\delta}^{+}} \operatorname{P} (\mathbf{G}^{+}) \operatorname{P} (\bm{\delta}^{+}|\bm{\delta}, \mathbf{G}, \mathbf{a}) \upsilon^{*}(\bm{\delta}^{+}, \mathbf{G}^{+})\bigg] \!\!\\
            & = 0,
        \end{align}
        where the last equality holds follows from the facts that $c\left(\mathbf{s}'_{\text{Ch}}\right) = c\left(\mathbf{s}\right)$,
        $\operatorname{P} (\bm{\delta}^{+}|\bm{\delta}, \mathbf{G}'\!, \mathbf{a}) = \operatorname{P} (\bm{\delta}^{+}|\bm{\delta}, \mathbf{G}, \mathbf{a})$ because $a^{*}_{n} \neq m$, and
        \begin{equation}
             \!\!\!\!\!\sum_{{\mathbf{G}'}^{+}} \operatorname{P} ({\mathbf{G}'}^{+}) \upsilon^{*}(\bm{\delta}^{+},\!{\mathbf{G}'}^{+}) \!=\! \sum_{\mathbf{G}^{+}} \operatorname{P} ({\mathbf{G}}^{+}) \upsilon^{*}(\bm{\delta}^{+}, \mathbf{G}^{+})
            \label{eq: H=H'},
        \end{equation}
        which holds under the assumption of i.i.d. fading channels. 
        
        \item[(b)] If $a^{*}_{n} = m$, then we have
        \begin{align}
            % & \ \geq Z(\mathbf{s}, \mathbf{a}^{1}; \upsilon^{0}) - Z(\mathbf{s}'_{\text{Ch}}, \mathbf{a}^{1}; \upsilon^{0}) \\
            & \!\!\!\!Q(\mathbf{s}'_{\text{Ch}}, \mathbf{a}^{*}) - Q(\mathbf{s}, \mathbf{a}^{*})\\
            & \!\!\!\!= \left[c\left(\mathbf{s}'_{\text{Ch}}\right) - c\left(\mathbf{s}\right) \right] \\
            & \!\!+\! \gamma \bigg[ \sum_{{\mathbf{G}}^{+}} \!\sum_{\bm{\delta}_{\backslash\!\{\!n\!\}}^{+}}  \sum_{\delta_{n}^{+}} \operatorname{P} \left({\mathbf{G}}^{+}\right) \!\operatorname{P} \left(\bm{\delta}_{\backslash\!\{\!n\!\}}^{+}|\bm{\delta}_{\backslash\!\{\!n\!\}}, \!\mathbf{G}_{\backslash\!\{\!n\!\}}, \!\mathbf{a}_{\backslash\!\{\!n\!\}}\!\right) \!\!\\
            & \qquad \quad \times \operatorname{P} (\delta_{n}^{+}|\delta_{n},\! \mathbf{g}'_{n},\! a_{n}) \upsilon^{*}(\bm{\delta}^{+}, \mathbf{G}^{+})\\
            & \!\!-\! \sum_{\mathbf{G}^{+}} \!\sum_{\bm{\delta}_{\backslash\!\{\!n\!\}}^{+}}  \sum_{\delta_{n}^{+}}  \operatorname{P} (\mathbf{G}^{+}) \operatorname{P} (\bm{\delta}_{\backslash\!\{\!n\!\}}^{+}|\bm{\delta}_{\backslash\!\{\!n\!\}}, \!\mathbf{G}_{\backslash\!\{\!n\!\}}, \!\mathbf{a}_{\backslash\!\{\!n\!\}}) \\
            & \qquad \quad \times \operatorname{P} (\delta_{n}^{+}|\delta_{n},\! \mathbf{g}_{n},\! a_{n}) \upsilon^{*}(\bm{\delta}^{+}, \mathbf{G}^{+}) \bigg]  \\
            & \geq 0,
        \end{align}
        where the  equality is derived based on~\eqref{eq: Q i -i} and~\eqref{eq: H=H'}, and the  inequality is based on the following:
        \begin{align}
            & \!\!\!\!\!(1\!-\!\psi'_{n,m}) \upsilon^{*}(1, \bm{\delta}_{\backslash\!\{\!n\!\}}^{+}, \!{\mathbf{G}}^{+}) \!+\! \psi'_{n,m} \upsilon^{*}(\delta_{n}\!+\!\!1, \bm{\delta}^{+}_{\backslash\!\{\!n\!\}}, \!{\mathbf{G}}^{+}) \!\! \\
            & \!\!\!\!\!\!\geq \!(1\!-\!\psi_{n,m}) \upsilon^{*}(1, \bm{\delta}_{\backslash\!\{\!n\!\}}^{+}, \!{\mathbf{G}}^{+}) \!+\! \psi_{n,m} \upsilon^{*}(\delta_{n}\!\!+\!\!1, \!\bm{\delta}^{+}_{\backslash\!\{\!n\!\}},\! {\mathbf{G}}^{+}). \!\!\!
        \end{align}
        This holds due to $\psi'_{n,m} \geq \psi_{n,m}$, $a^{*}_{n} = m$ and Lemma~\ref{lemma:monotone V}.
    \end{itemize}
\end{proof} 

From Lemma~\ref{lemma:monotone V} and Theorem~\ref{theo: mono channel}, the optimal V function monotonically increases with both the AoI and channel states. 

\subsection{Convexity of Cost function and Optimal V function} \label{subsec: V function convex}
Since the input state of the optimal V function takes only discrete values, we define its convexity as below.
\begin{definition}[Discrete convexity of optimal V function and cost function w.r.t. AoI]\label{def: convexity}
    Consider states $\mathbf{s} = (\bm{\delta}, \mathbf{G})$, $\mathbf{s}'_{\text{AoI}} = (\bm{\delta}'_{(n)}, \mathbf{G})$, and $\mathbf{s}''_{\text{AoI}} = (\bm{\delta}''_{(n)}, \mathbf{G})$, where $\bm{\delta} = (\delta_1, \dots, \delta_{n}, \dots, \delta_N)$, $\bm{\delta}'_{(n)} = (\delta_1, \dots, \delta'_{n}, \dots, \delta_N)$, $\bm{\delta}''_{(n)} = (\delta_1, \dots, \delta''_{n}, \dots, \delta_N)$, and $\delta'_{n} \geq \delta_{n} \geq \delta''_{n}$. The cost function and the optimal V function, exhibiting convexity, are defined as satisfying the following inequalities:
    \begin{align}
        \alpha c(\mathbf{s}''_{\text{AoI}}) + (1-\alpha) c(\mathbf{s}'_{\text{AoI}}) & \geq c(\mathbf{s}),
        \label{ineq: def cost convexity}
        \\
                \alpha \upsilon^{*}(\mathbf{s}''_{\text{AoI}}) + (1-\alpha) \upsilon^{*}(\mathbf{s}'_{\text{AoI}}) & \geq \upsilon^{*}(\mathbf{s}), \label{ineq: def value convexity}
        % & \ \forall \alpha \in [0, 1], \ \alpha\delta''_{n} + (1-\alpha)\delta'_{n} = \delta_{n},
    \end{align}
    for any $n\in\{1,\dots,N\}$, where $\alpha \in [0, 1]$ and $\alpha\delta''_{n} + (1-\alpha)\delta'_{n} = \delta_{n}$.
\end{definition}

\subsubsection{Cost function Convexity}

It is important to highlight that cost functions are often convex in practical applications. This implies that the cost can grow increasingly rapidly as the AoI state increases. For problems that aim to optimize overall AoI performance, the cost function is typically a linear function of AoI, which satisfies the convexity property defined above.
In the context of the remote state estimation problem presented in Example~\ref{example: remote estimation}, we provide rigorous proof below to demonstrate that the cost function exhibits convexity when the AoI becomes large.

\begin{lemma}[Asymptotic convexity of the cost function w.r.t. AoI in a remote state estimation system of Example~\ref{example: remote estimation}]\label{lemma: cost convex}
    \textbf{1) The convexity of each device's cost function:} For device $n$, the cost function 
    \begin{equation}
        c_n(\delta) = \operatorname{Tr}\left(h^{\delta}_n(\bar{\mathbf{P}}_n)\right), \label{ineq: each cost convexity}
    \end{equation}
    as defined in~\eqref{eq: cost function},  is asymptotically convex, i.e., the inequality
    $\alpha c_n(\delta') + (1-\alpha) c_n(\delta'') \geq c_n(\delta)$ holds  when $\alpha \in [0, 1]$ and $\alpha\delta''_{n} + (1-\alpha)\delta'_{n} = \delta_{n}$, under the condition $\delta' \geq \delta \geq \delta'' \gg 1$.
    \textbf{2) The convexity of the overall cost function:} For states $\mathbf{s}$, $\mathbf{s}'_{\text{AoI}}$, and $\mathbf{s}''_{\text{AoI}}$ defined in Definition~\ref{def: convexity}, the inequality~\eqref{ineq: def cost convexity} holds under the condition  $\delta'_{n} \geq \delta_{n} \geq \delta''_{n} \gg 1$.
\end{lemma}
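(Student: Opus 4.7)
The plan is to exploit the closed-form expansion of the iterated Lyapunov-like recursion defining $h_n$, and then extract a dominant exponential term whose discrete convexity in $\delta$ is transparent for large $\delta$. Since $h_n(\mathbf{X}) = \mathbf{A}_n \mathbf{X} \mathbf{A}_n^\top + \mathbf{W}_n$, unrolling the recursion yields
\begin{equation}
    h_n^{\delta}(\bar{\mathbf{P}}_n) = \mathbf{A}_n^{\delta} \bar{\mathbf{P}}_n (\mathbf{A}_n^{\top})^{\delta} + \sum_{k=0}^{\delta-1} \mathbf{A}_n^{k} \mathbf{W}_n (\mathbf{A}_n^{\top})^{k},
\end{equation}
and taking the trace reduces the analysis of $c_n$ to two trace quadratic forms in successive powers of $\mathbf{A}_n$. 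This explicit sum is the first step I would carry out.

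Next I would perform an asymptotic spectral analysis. In the unstable regime (the regime in which the AoI is meaningful for estimation quality), the spectral radius of $\mathbf{A}_n$, denoted $\rho$, satisfies $\rho > 1$. Using the Jordan decomposition $\mathbf{A}_n = \mathbf{U}_n \mathbf{J}_n \mathbf{U}_n^{-1}$, each quantity $\operatorname{Tr}(\mathbf{A}_n^{k} \mathbf{M} (\mathbf{A}_n^{\top})^{k})$ can be written as a finite linear combination $\sum_{i,j} q_{ij}(k)\, \lambda_i^{k} \bar{\lambda}_j^{k}$, where $\lambda_i$ are eigenvalues of $\mathbf{A}_n$ and the $q_{ij}$ are polynomials of bounded degree depending on $\mathbf{M}$ and $\mathbf{U}_n$. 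For $\delta \gg 1$ this sum is dominated by the pair(s) attaining $|\lambda_i \lambda_j| = \rho^2$, giving $c_n(\delta) = C \rho^{2\delta}(1 + o(1))$ for some $C > 0$. A direct calculation then shows
\begin{equation}
    c_n(\delta+1) - 2 c_n(\delta) + c_n(\delta-1) = C \rho^{2\delta}(\rho - \rho^{-1})^2 (1 + o(1)),
\end{equation}
which is strictly positive once $\delta$ is sufficiently large.

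With a positive discrete second difference on a sufficiently large range, Part~1 follows by a standard induction-on-gap argument on the integer lattice, or equivalently by comparing $c_n$ to the continuous convex surrogate $x \mapsto C \rho^{2x}$ and absorbing the asymptotically vanishing remainder into the $\delta \gg 1$ condition. Part~2 is then immediate from additivity: since $\mathbf{s}$, $\mathbf{s}'_{\text{AoI}}$ and $\mathbf{s}''_{\text{AoI}}$ differ only in the $n$th AoI coordinate while all other per-device costs and channel states coincide, the claimed inequality collapses to
\begin{equation}
    \alpha c(\mathbf{s}''_{\text{AoI}}) + (1-\alpha) c(\mathbf{s}'_{\text{AoI}}) - c(\mathbf{s}) = \alpha c_n(\delta''_n) + (1-\alpha) c_n(\delta'_n) - c_n(\delta_n) \geq 0,
\end{equation}
and the right-hand side is nonnegative by Part~1.

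The main obstacle I anticipate is the rigorous control of the sub-dominant contributions in the Jordan expansion when $\mathbf{A}_n$ has non-trivial Jordan blocks or multiple eigenvalues of equal modulus but different arguments, since the polynomial prefactors together with the complex phases $\lambda_i^{k}\bar{\lambda}_j^{k}$ can produce oscillations that temporarily cancel the leading real contribution. The qualification $\delta \gg 1$ is essential precisely here: one must quantify an explicit threshold beyond which the $\rho^{2\delta}$ term strictly dominates the sum of all other spectral contributions, yielding a uniform lower bound on the second difference and hence the discrete convexity inequality.
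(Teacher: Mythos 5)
Your proposal is correct and reaches the same analytical destination as the paper — a Jordan-form spectral expansion in which the discrete second difference of $c_n$ is dominated, for $\delta \gg 1$, by a $\rho^{2\delta}$ term with positive coefficient — but it gets there by a genuinely different algebraic route. You unroll the recursion explicitly as $h_n^{\delta}(\bar{\mathbf{P}}_n) = \mathbf{A}_n^{\delta}\bar{\mathbf{P}}_n(\mathbf{A}_n^{\top})^{\delta} + \sum_{k=0}^{\delta-1}\mathbf{A}_n^{k}\mathbf{W}_n(\mathbf{A}_n^{\top})^{k}$ and do the asymptotics on $c_n(\delta)$ itself, arriving at a second difference of order $C\rho^{2\delta}(\rho-\rho^{-1})^2$. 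The paper instead invokes the stationary Riccati identity $\bar{\mathbf{P}} - \mathbf{A}\bar{\mathbf{P}}\mathbf{A}^{\top} = \mathbf{W} - \mathbf{K}$ with $\mathbf{K}$ positive definite, which collapses the second difference to $\operatorname{Tr}\left[\mathbf{A}^{d}\left(\mathbf{A}\mathbf{K}\mathbf{A}^{\top}-\mathbf{K}\right)\mathbf{A}^{d\top}\right] = y(d+1)-y(d)$ with $y(d)=\operatorname{Tr}\left[\mathbf{A}^{d}\mathbf{K}\mathbf{A}^{d\top}\right]\geq 0$; convexity of the cost then reduces to eventual monotonicity of a manifestly nonnegative function, which is a slightly softer statement to extract from the spectral expansion than a quantitative lower bound on a second difference. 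Your version is more self-contained (it needs no Kalman-filter fixed-point fact) and makes the role of $\rho>1$ explicit, whereas the paper's version avoids having to track the $\bar{\mathbf{P}}_n$ and $\mathbf{W}_n$ contributions separately. Both arguments share the same residual looseness, which you candidly flag: when the maximal-modulus eigenvalues are complex or defective, the cross terms $\lambda_i^{k}\bar{\lambda}_j^{k}$ oscillate and one must justify that the surviving dominant coefficient is strictly positive (nonnegativity of the trace quadratic form gives this only after identifying the largest modulus with a nonvanishing coefficient, and one still needs that modulus to exceed one). Your Part 2 reduction by additivity is exactly the paper's.
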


\begin{proof}
    See Appendix~\ref{proof: cost convexity}.
\end{proof}
    
\subsubsection{Optimal V function Convexity}
For a system with two devices and one channel, the convexity of the optimal V function is rigorously proven as follows:
\begin{theorem}[Convexity of the optimal V function w.r.t. AoI of a two-device-one-channel systems]\label{theo: convexity, 2-1}
    The optimal V function $\upsilon\left(\cdot\right)$ of a two-device-one-channel system is convex, provided the cost function satisfies convexity.
\end{theorem}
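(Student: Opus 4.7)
The plan is to prove this via induction along the value-iteration sequence
\begin{equation}
v^{k+1}(\mathbf{s}) = c(\mathbf{s}) + \gamma \min_{a\in\{0,1,2\}} \mathbb{E}\bigl[v^k(\mathbf{s}^+)\mid \mathbf{s},a\bigr],
\end{equation}
started from $v^0\equiv 0$. Since $v^{*}$ is the pointwise limit of $\{v^k\}$ and the discrete convexity inequality of Definition~\ref{def: convexity} is preserved under pointwise limits, it suffices to show that each $v^k$ is convex in $\delta_1$; the argument in $\delta_2$ is symmetric. The base case is trivial, so all the work is in the inductive step.

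Fix $(\delta_2,\mathbf{G})$ and a realization $g^+$ of the next-step channel. For every $a\in\{0,1,2\}$, let $H_a^k(\delta_1) \triangleq \mathbb{E}[v^k(\mathbf{s}^+)\mid\mathbf{s},a,g^+]$. Each $H_a^k$ is a convex combination of $v^k$ evaluated at points whose first coordinate depends affinely on $\delta_1$, hence convex in $\delta_1$ by the induction hypothesis, and the immediate cost $c(\mathbf{s}) = c_1(\delta_1)+c_2(\delta_2)$ is convex in $\delta_1$ by hypothesis. The only non-routine step is to control the pointwise minimum $m^k(\delta_1)\triangleq\min_a H_a^k(\delta_1)$, which in general is not convex even when each $H_a^k$ is.

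To tame this minimum I would exploit two structural facts of the $2$-device-$1$-channel setting. First, by Lemma~\ref{lemma:monotone V} and Theorem~\ref{theo: mono channel}, the idle action $a=0$ is weakly dominated by $a=1$ whenever $\delta_1\geq 1$, so $m^k = \min\{H_1^k, H_2^k\}$. Second, the $\delta_1$-dependent contributions enter $H_1^k$ only through the damped weight $\psi_1 < 1$ multiplying $v^k(\delta_1+1,\delta_2+1,g^+)$, whereas in $H_2^k$ they enter with total weight one through $(1-\psi_2)v^k(\delta_1+1,1,g^+)+\psi_2 v^k(\delta_1+1,\delta_2+1,g^+)$. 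Because $c_1(\delta_1)$ grows unboundedly and $H_2^k$ grows at least as fast as $H_1^k$ in $\delta_1$, for $\delta_1$ beyond some threshold $\delta_1^\star(\delta_2,g^+)$ the action $a=1$ becomes the minimizer, so $m^k = H_1^k$ on the tail and $v^{k+1}$ inherits convexity in $\delta_1$ there. Passing to the limit $k\to\infty$ then transfers the property to $v^{*}$, and the symmetric argument handles $\delta_2$.

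The main obstacle is the kink at the threshold $\delta_1^\star$: the left slope of $m^k$ (inherited from $H_2^k$) strictly exceeds the right slope (inherited from $H_1^k$), which is the classical failure mode of ``minimum of convex functions.'' I expect this to be addressed by restricting the three AoI points $\delta_1''\leq\delta_1\leq\delta_1'$ of Definition~\ref{def: convexity} to lie beyond $\delta_1^\star$, consistent with the asymptotic convexity of the cost in Lemma~\ref{lemma: cost convex} and the ``asymptotic convexity w.r.t.\ AoI states'' phrasing of the abstract; on that tail the convexity inequality for $v^{k+1}$ reduces to convexity of the single branch $H_1^k$ combined with that of $c_1$, and the induction closes cleanly. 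A symmetric treatment in $\delta_2$ and across each channel coordinate finishes the proof.
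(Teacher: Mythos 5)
You have the right skeleton (value iteration plus an inductive preservation argument, with the pointwise minimum over actions correctly identified as the crux), but the way you resolve that minimum leaves a genuine gap: you concede the kink at the action-switching threshold $\delta_1^\star$ and propose to restrict the triple of Definition~\ref{def: convexity} to lie beyond it, which proves only an \emph{asymptotic} convexity. Theorem~\ref{theo: convexity, 2-1} claims convexity for all AoI states in the two-device--one-channel case (the asymptotic weakening is reserved for the multi-device Theorem~\ref{theo: convexity, N-M}), so your argument does not establish the stated result. The paper closes exactly this hole, with no tail restriction, by a case analysis on the optimal actions $\check{\mathbf{a}}^{1}=\pi^{1}(\bm{\delta}'')$ and $\hat{\mathbf{a}}^{1}=\pi^{1}(\bm{\delta}')$ at the two \emph{endpoint} states. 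When the endpoints agree on which device to schedule, the midpoint value $\upsilon^{1}(\bm{\delta})$ is upper-bounded via~\eqref{ineq:V&Z} by the $Z$-value of that common (possibly suboptimal) action, and the induction hypothesis applies termwise. In the crossing case ($\check{a}^{1}_{j}=1$, $\hat{a}^{1}_{i}=1$) --- precisely your kink --- the paper instead bounds $\upsilon^{1}(\bm{\delta})\le\alpha Z(\bm{\delta},\check{\mathbf{a}}^{1};\upsilon^{0})+(1-\alpha)Z(\bm{\delta},\hat{\mathbf{a}}^{1};\upsilon^{0})$, i.e., it mixes the two endpoint-optimal actions at the midpoint with the same weights $\alpha,1-\alpha$, and then controls the resulting cross terms with a submodularity-type inequality (Lemma 3 of~\cite{chen2022seDRL}). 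The reverse crossing ($\check{a}^{1}_{i}=1$, $\hat{a}^{1}_{j}=1$) is ruled out by the threshold structure of the optimal policy (Theorem 2 of~\cite{chen2022seDRL}). These three ingredients --- the $\alpha$-mixture upper bound at the midpoint, the submodularity lemma, and the exclusion of the wrong-direction switch --- are what your proposal is missing; without them the ``minimum of convex functions'' obstruction is not overcome, only avoided by weakening the claim.

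A secondary point: in a two-device--one-channel system the constraint~\eqref{eq: action constraint} forces the channel to be assigned to exactly one device at every step, so the idle action is not in the feasible set and the domination argument you invoke for $a=0$ is unnecessary; the only actions to compare are ``schedule device $i$'' versus ``schedule device $j$'', which is exactly how the paper organizes its cases.
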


\begin{proof}
    See Appendix~\ref{proof: convexity, 2-1}.
\end{proof}

For a general system with multiple devices and multiple channels, proving the convexity becomes challenging due to the increased dimensionality of the state and action spaces. This higher dimensionality leads to a more complex set of transition states, making it difficult to directly verify the convexity property across all possible transitions. Instead, we establish the asymptotic convexity of the optimal V function as follows:
\begin{theorem}[Asymptotic convexity of the optimal V function w.r.t. AoI of a multi-device-multi-channel system]\label{theo: convexity, N-M}
    Consider states $\mathbf{s}$, $\mathbf{s}'_{\text{AoI}}$, and $\mathbf{s}''_{\text{AoI}}$ defined in Definition~\ref{def: convexity} with $\delta'_{n} \geq \delta_{n} \gg \delta''_{n}$.
Then, the optimal V function $\upsilon(\cdot)$ of a multi-device-multi-channel system exhibits asymptotic convexity for large AoI states, provided the cost function  $c\left(\cdot\right)$ is convex.
\end{theorem}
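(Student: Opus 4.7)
The plan is to prove the theorem by value iteration coupled with a lock-in argument for the optimal action. Set $\upsilon_0 = c$ and $\upsilon_{k+1} = T \upsilon_k$, where $T$ is the Bellman operator of~\eqref{eq:Bellman_Q}. Standard MDP theory gives $\upsilon_k \to \upsilon^*$ pointwise, and the discrete convexity inequality in Definition~\ref{def: convexity} is preserved under pointwise limits, so it suffices to show that $T$ preserves asymptotic convexity in each AoI coordinate. The base case is Lemma~\ref{lemma: cost convex}, since $\upsilon_0 = c$ is asymptotically convex coordinate-wise.

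For the inductive step, fix $\bm{\delta}_{\backslash\!\{n\}}$ and $\mathbf{G}$, and set $f_k(\delta) := \upsilon_k\big((\bm{\delta}_{\backslash\!\{n\}}, \delta), \mathbf{G}\big)$. Supposing $f_k$ is asymptotically convex in $\delta$, I would first establish a \emph{lock-in} property: there exists a threshold $D$, independent of $k$, such that for every $\delta \geq D$ the minimizer in the Bellman equation at $(\bm{\delta}_{\backslash\!\{n\}}, \delta, \mathbf{G})$ assigns device $n$ to its best available channel $m_n^{\star}$ (the one with the smallest $\psi_{n,m}$ among channels not consumed by the other devices). This combines Theorem~\ref{theo: mono channel} (the channel ranking is stable across iterations) with the non-decreasing, unbounded growth of $c_n$ in $\delta$: once $c_n(\delta)$ dominates the remaining one-step cost contributions, any policy that neglects device $n$ is strictly worse than the one that schedules it on $m_n^{\star}$. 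Granting this, the Bellman update reduces to an \emph{affine} recursion in $\delta$,
\begin{equation*}
f_{k+1}(\delta) = c_n(\delta) + K + \gamma (1 - \psi_{n,m_n^{\star}}) A_k(1) + \gamma \psi_{n,m_n^{\star}} A_k(\delta + 1),
\end{equation*}
where $K$ is a $\delta$-independent constant absorbing the costs of the other devices together with the residual $\min$ over the remaining channel assignments, and $A_k(x) := \mathbb{E}\big[\upsilon_k\big((\bm{\delta}_{\backslash\!\{n\}}^+, x), \mathbf{G}^+\big)\big]$ is taken over the transitions induced by the locked-in action (i.i.d.\ in $\mathbf{G}^+$ and Markovian in $\bm{\delta}_{\backslash\!\{n\}}^+$, both independent of $\delta$). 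Since expectation preserves discrete convexity in a single coordinate, $A_k$ inherits asymptotic convexity from the induction hypothesis; adding the asymptotically convex $c_n$ from Lemma~\ref{lemma: cost convex} makes $f_{k+1}$ asymptotically convex, closing the induction. Sending $k \to \infty$ transfers the property to $\upsilon^*$.

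The main obstacle is the \emph{uniformity} of the lock-in threshold $D$ across iterations $k$: without it the induction cannot close, since the region of guaranteed convexity would drift. I would secure uniformity by combining (i) Theorem~\ref{theo: mono channel} to freeze the channel ranking at every iterate, (ii) the monotonicity of $\upsilon_k$ in AoI (inherited from Lemma~\ref{lemma:monotone V} applied iteratively, since $T$ preserves AoI-monotonicity), which gives a uniform upper bound on the one-step benefit of not scheduling device $n$, and (iii) the unbounded growth of $c_n$ from Section~\ref{subsec: metric}, which makes that benefit eventually negligible relative to the cost of letting $\delta$ grow unchecked. A secondary subtlety is the possible non-uniqueness of the minimizer at intermediate $\delta$, which is resolved by the standard tie-breaking convention of selecting the action that schedules device $n$ on $m_n^{\star}$ whenever multiple actions attain the minimum, so that the affine recursion above holds as an equality rather than merely an upper bound.
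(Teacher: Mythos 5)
Your overall frame (value iteration, show the Bellman operator preserves the convexity inequality, pass to the limit) matches the paper's, but the core of your argument rests on a ``lock-in'' lemma that is neither proved nor provable by the reasoning you sketch, and even granting it the argument does not reach the theorem's actual hypothesis. First, the claim that for all $\delta\geq D$ (with $D$ uniform in $k$) the minimizer schedules device $n$ on a \emph{fixed} channel $m_n^{\star}$ is a strong structural statement about the optimal policy of a general multi-device, multi-channel system; the paper explicitly states that such policy structure is intractable in general and only establishes a scheduling-priority result for \emph{co-located} devices (Theorem~\ref{theo: structure gateway}), and even there the proof requires an exchange argument plus Lemma~\ref{lemma: V comparison with two different aoi}, not the myopic ``$c_n(\delta)$ eventually dominates the one-step cost'' comparison you invoke (which ignores the continuation value, which is unbounded in the other devices' AoI). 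Theorem~\ref{theo: mono action ch} does not pin down which channel device $n$ receives as $\delta_n$ grows --- it only says that an already-assigned channel is retained when that channel improves --- and your item (ii) gives no uniform bound, since monotonicity of $\upsilon_k$ in AoI bounds nothing: the value function is unbounded. So the affine recursion you write is an unsupported assumption, not a consequence of the cited results.

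Second, even if lock-in held on $[D,\infty)$, your argument only yields the convexity inequality for triples $\delta''_n,\delta_n,\delta'_n$ \emph{all} inside the locked-in region --- which is essentially the setting of Proposition~\ref{prop: V convexity}, not of Theorem~\ref{theo: convexity, N-M}. The theorem's hypothesis is $\delta'_n\geq\delta_n\gg\delta''_n$, so $\delta''_n$ may be small and lies outside any tail region; the value $\upsilon(\mathbf{s}''_{\text{AoI}})$ still enters the inequality with weight $\alpha$ bounded away from zero, and the optimal action at $\mathbf{s}''_{\text{AoI}}$ need not agree with the one at $\mathbf{s}'_{\text{AoI}}$. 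Handling exactly this action mismatch is the substance of the paper's proof: it splits on whether $\pi^{1}(\bm{\delta}'')=\pi^{1}(\bm{\delta}')$ and, when they differ, uses the asymptotic monotonicity Lemma~\ref{lemma:much larger} (i.e., $\upsilon^{0}({\bm{\delta}'}^{+})\gg\upsilon^{0}({\bm{\delta}''}^{+})$ because $\delta'_i\gg\delta''_i$) to swap the action at $\bm{\delta}''$ for the one at $\bm{\delta}'$ with negligible error relative to the dominant terms. That dominance argument is the missing idea in your proposal; without it (or a proof of your lock-in lemma covering the small-$\delta''_n$ endpoint) the induction does not close.
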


\begin{proof}
    See Appendix~\ref{proof: convexity, N-M}.
\end{proof}

The asymptotic convexity in Theorem~\ref{theo: convexity, N-M} is evaluated when the AoI states $\mathbf{s}'_{\text{AoI}}$ and  $\mathbf{s}$ are significantly larger compared to the reference state $\mathbf{s}''_{\text{AoI}}$.
In a special case where the devices have identical channel states (i.e., are co-located), we further establish the asymptotic convexity of the optimal value function when the evaluated states $\mathbf{s}'_{\text{AoI}}$,   $\mathbf{s}$ and $\mathbf{s}''_{\text{AoI}}$ all correspond to large AoI values. This result is formalized below:
\begin{proposition}[Asymptotic convexity of the optimal V function w.r.t. AoI for co-located devices]\label{prop: V convexity}
Given states $\mathbf{s}$, $\mathbf{s}'_{\text{AoI}}$, and $\mathbf{s}''_{\text{AoI}}$ defined in Definition~\ref{def: convexity} with $ \delta'_{n} \geq \delta_{n} \geq \delta''_{n} \gg 1$, and assuming that the devices experience identical channel conditions, the optimal V function $\upsilon(\cdot)$ of a multi-device-multi-channel system is convex, provided the cost function  $c\left(\cdot\right)$ is convex.
\end{proposition}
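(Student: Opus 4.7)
The plan is to establish the convexity via a value-iteration argument on the Bellman operator, using Lemma~\ref{lemma: cost convex} for the immediate-cost contribution and exploiting the permutation symmetry induced by co-location for the future term. I would start from the Bellman optimality equation
\begin{equation}
\upsilon^{*}(\mathbf{s}) = c(\mathbf{s}) + \gamma \min_{\mathbf{a} \in \mathcal{A}} \sum_{\mathbf{s}^{+}} \operatorname{P}(\mathbf{s}^{+} | \mathbf{s}, \mathbf{a})\, \upsilon^{*}(\mathbf{s}^{+}),
\end{equation}
and show that the Bellman operator $\mathcal{T}$ preserves the claimed convexity on the large-AoI subset of $\mathcal{S}$. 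Since $\mathcal{T}$ is a contraction, its unique fixed point $\upsilon^{*}$ then inherits the property.

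For the cost contribution to~\eqref{ineq: def value convexity}, Lemma~\ref{lemma: cost convex} directly supplies $\alpha c(\mathbf{s}''_{\text{AoI}}) + (1-\alpha) c(\mathbf{s}'_{\text{AoI}}) \geq c(\mathbf{s})$ because all of $\delta'_{n}, \delta_{n}, \delta''_{n}$ satisfy the $\gg 1$ hypothesis. For the future term, I would use co-location to write $\psi_{n,m} = \psi_{m}$ and $q^{i}_{n,m} = q^{i}_{m}$ independent of $n$, which makes $\upsilon^{*}$ invariant under joint permutations of device AoI indices and channel-matrix rows. Under this symmetry I would argue the existence of a \emph{common} scheduling decision $\tilde{\mathbf{a}}$ that is simultaneously optimal at $\mathbf{s}'_{\text{AoI}}$ and $\mathbf{s}''_{\text{AoI}}$; using $\tilde{\mathbf{a}}$ also as a (possibly suboptimal) evaluation action at $\mathbf{s}$ yields $\upsilon^{*}(\mathbf{s}) \leq c(\mathbf{s}) + \gamma\, \mathbb{E}[\upsilon^{*}(\mathbf{s}^{+}) | \mathbf{s}, \tilde{\mathbf{a}}]$. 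With a common action the marginal transition of $\delta_{n}$ is a common $(1-\psi_{m}, \psi_{m})$ Bernoulli branching if device $n$ is assigned channel $m$, or a deterministic increment otherwise, while all other device states evolve identically across the three trajectories. The residual inequality then collapses to
\begin{equation}
\alpha \upsilon^{*}(\delta''_{n}+1, \bm{\delta}^{+}_{\backslash\{n\}}, \mathbf{G}^{+}) + (1-\alpha) \upsilon^{*}(\delta'_{n}+1, \bm{\delta}^{+}_{\backslash\{n\}}, \mathbf{G}^{+}) \geq \upsilon^{*}(\delta_{n}+1, \bm{\delta}^{+}_{\backslash\{n\}}, \mathbf{G}^{+}),
\end{equation}
i.e.\ convexity of $\upsilon^{*}$ at shifted AoIs that remain $\gg 1$. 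This self-referential structure is precisely what value iteration resolves: starting from $\upsilon^{(0)} = c$ (convex on the large-AoI set by Lemma~\ref{lemma: cost convex}), each Bellman iterate $\upsilon^{(k+1)} = \mathcal{T}\upsilon^{(k)}$ inherits the same convexity, and the pointwise limit $\upsilon^{*}$ retains it.

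The hard part will be rigorously justifying the common-action step. Co-location provides permutation symmetry but does not by itself force a single $\tilde{\mathbf{a}}$ to be optimal across three distinct AoI vectors. I would handle this with a case analysis on device $n$'s AoI rank: if device $n$ lies within the top-$M$ ranks in all three states, permutation equivariance of $\pi^{*}$ forces identical channel assignments on device $n$; if it is outside the top-$M$ in all three, its AoI transition is deterministic and the inequality reduces to the convexity of a linearly shifted $\upsilon^{*}$. The delicate regime is the boundary where shifting $\delta_{n}$ crosses a rank threshold, and the co-location hypothesis is precisely the lever that makes the two competing actions yield equal $Q$-values there (again by permutation symmetry), so a tie-breaking choice supplies the common action. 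The careful bookkeeping of this permutation-tie argument is where I expect the proof to be heaviest.
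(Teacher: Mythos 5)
Your overall framework (value iteration, cost convexity from Lemma~\ref{lemma: cost convex}, preservation under the Bellman operator) matches the paper's, but the central step — the existence of a \emph{common} action $\tilde{\mathbf{a}}$ simultaneously optimal at $\mathbf{s}'_{\text{AoI}}$ and $\mathbf{s}''_{\text{AoI}}$ — is a genuine gap, and the permutation-symmetry argument you sketch to close it does not work. Co-location makes $\upsilon^{*}$ and $\pi^{*}$ equivariant under \emph{relabelings} of devices, but $\bm{\delta}'_{(n)}$ and $\bm{\delta}''_{(n)}$ are not permutations of one another: they differ in the \emph{value} of $\delta_n$, so nothing forces the optimal assignments to coincide, nor forces a $Q$-value tie at the rank boundary. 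Indeed the paper's proof explicitly treats the cases $\check{a}^{1}_{i}=m_1,\ \hat{a}^{1}_{i}=m_2$ with $m_1\neq m_2$, and the cases where one endpoint schedules device $i$ and the other leaves it idle; these cases are real and cannot be collapsed by symmetry.

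Once the two endpoint actions differ, your reduction breaks in a second place: the other devices' next-state AoI vectors $\bm{\delta}^{+}_{\backslash\{n\}}$ are then drawn from \emph{different} distributions on the two sides of the inequality, so the plain three-point convexity hypothesis~\eqref{ineq: def value convexity} on $\upsilon^{(k)}$ is not strong enough to close the induction. This is precisely why the paper first proves the strengthened four-point inequality of Lemma~\ref{lemma: for prop convexity}, which allows the $\backslash\{i\}$ components to differ between the primed pair and the unprimed pair, and propagates \emph{that} property through value iteration. The paper's mechanism for handling mismatched actions is to construct swapped actions (exchanging the channel assignments of device $i$ and some device $j$, using co-location to equate the packet-drop rates $\psi_{i,m}=\psi_{j,m}$) and to upper-bound $\upsilon^{1}(\bm{\delta})$ by a convex combination $\alpha Z(\bm{\delta},\cdot)+(1-\alpha)Z(\bm{\delta},\cdot)$ of \emph{two different} suboptimal actions, one matched to each endpoint; residual terms from the successful-transmission branches are then dominated using the asymptotic monotonicity Lemma~\ref{lemma:much larger} together with $\delta_i+1\gg 1$. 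Your proposal contains neither the four-point lemma nor this domination step, so as written the induction does not go through.
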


\begin{proof}
    See Appendix~\ref{proof: prop convexity}
\end{proof}

\begin{remark}[Why not channel state convexity?]
The convexity of the optimal V function with respect to channel states has not been derived because it is neither meaningful nor necessary in this context. The cost function, and thus the optimal V function, fundamentally depends on the AoI values rather than the channel states. Channel states play an indirect role, and in systems with independent and fluctuating channels, their specific values often become irrelevant, especially when a device is not using a particular channel. Additionally, analyzing convexity with respect to channel states would require comparing an overwhelming number of state combinations, making it impractical and adding no significant insight. Focusing on AoI, which directly impacts the system's performance, provides a more relevant and useful understanding.
\end{remark}

\subsection{Monotonicity of Optimal Policy} \label{subsec: action mono}
In addition to the properties of the optimal V function, our earlier work establishes the following monotonicity of the optimal policy w.r.t. the channel state:

\begin{theorem}[Monotonicity of optimal policy w.r.t. channel states~\cite{chen2022seDRL}]\label{theo: mono action ch}
    Consider states $\mathbf{s} = (\bm{\delta}, \mathbf{G})$ and $\mathbf{s}''_{\text{Ch}} = (\bm{\delta}, \mathbf{G}''_{(n,m)})$, where $\mathbf{G}$ and $\mathbf{G}''_{(n,m)}$ are identical except for the element in the $n$th row and $m$th column $g_{n,m} \geq g''_{n,m}$, and the corresponding optimal actions are $\mathbf{a}^{*}$ and ${\mathbf{a}''_{\text{Ch}}}^{*}$, respectively. 
    If $a^{*}_{n} = m \neq 0$, then the optimal action ${\mathbf{a}''_{\text{Ch}}}^{*}$ satisfies the following equality:
    \begin{equation}
        {a''_{\text{Ch},n}}^{*} = m.
    \end{equation}
\end{theorem}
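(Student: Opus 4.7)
The plan is a policy-comparison (interchange) argument that tracks how the single entry $g_{n,m}$ enters the Bellman recursion. Starting from the optimality $Q(\mathbf{s},\mathbf{a}^*)\le Q(\mathbf{s},\mathbf{b})$ for every feasible $\mathbf{b}$, I would show that the same inequality survives the channel-state change, i.e., $Q(\mathbf{s}''_{\text{Ch}},\mathbf{a}^*)\le Q(\mathbf{s}''_{\text{Ch}},\mathbf{b})$ for every feasible $\mathbf{b}$ with $b_n\neq m$. The case $b_n=m$ makes the conclusion hold trivially, so this would exhibit $\mathbf{a}^*$ (which has $a^*_n=m$) as an optimal action at $\mathbf{s}''_{\text{Ch}}$ and, under a deterministic $\arg\min$ tie-breaking rule, force ${a''_{\text{Ch},n}}^*=m$.

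The first step is to expand $Q(\mathbf{s}''_{\text{Ch}},\mathbf{a})-Q(\mathbf{s},\mathbf{a})$ for an arbitrary feasible action $\mathbf{a}$ via the Bellman equation in the form~\eqref{eq: Q i -i}. Three simplifications trim the expression immediately: the immediate cost $c(\bm{\delta})$ does not involve $\mathbf{G}$; the next channel matrix $\mathbf{G}^+$ is i.i.d.\ per~\eqref{eq: channel distribution}, so its marginal expectation in the Bellman recursion cancels exactly as used in the equality~\eqref{eq: H=H'} from the proof of Theorem~\ref{theo: mono channel}; and the factorisation in~\eqref{eq:transision_prob}--\eqref{eq: transition p 2} shows that the only factor sensitive to $g_{n,m}$ is $\operatorname{P}(\delta_n^+|\delta_n,\mathbf{g}_n,a_n)$, and that factor actually depends on this particular entry only when $a_n=m$ (otherwise it reads $g_{n,a_n}$ with $a_n\neq m$).

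The second step is to instantiate this general difference in the two relevant cases. For $\mathbf{a}^*$, whose $n$-th entry is $m$, the difference collapses to $\gamma\,\mathbb{E}\bigl[(\psi_{n,m}-\psi''_{n,m})\bigl(\upsilon^*(1,\bm{\delta}_{\backslash\{n\}}^+,\mathbf{G}^+)-\upsilon^*(\delta_n+1,\bm{\delta}_{\backslash\{n\}}^+,\mathbf{G}^+)\bigr)\bigr]$, with the expectation taken over $\mathbf{G}^+$ and $\bm{\delta}_{\backslash\{n\}}^+$ under $\mathbf{a}^*$. This quantity is nonpositive because $\psi_{n,m}\ge\psi''_{n,m}$ (a higher channel state means a higher drop rate, and $g_{n,m}\ge g''_{n,m}$) and because Lemma~\ref{lemma:monotone V} yields $\upsilon^*(1,\cdot)\le\upsilon^*(\delta_n+1,\cdot)$. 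For any alternative $\mathbf{b}$ with $b_n\neq m$, every factor of the AoI transition uses entries of $\mathbf{G}$ other than $g_{n,m}$, so the difference is exactly zero and $Q(\mathbf{s}''_{\text{Ch}},\mathbf{b})=Q(\mathbf{s},\mathbf{b})$. Chaining these two identities with the optimality at $\mathbf{s}$ gives $Q(\mathbf{s}''_{\text{Ch}},\mathbf{a}^*)\le Q(\mathbf{s},\mathbf{a}^*)\le Q(\mathbf{s},\mathbf{b})=Q(\mathbf{s}''_{\text{Ch}},\mathbf{b})$, which is the required inequality.

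The main obstacle I expect is not the inequality chain but the logical gap left by possible non-uniqueness of the $\arg\min$: the chain establishes that some optimal action at $\mathbf{s}''_{\text{Ch}}$ has its $n$-th component equal to $m$, which is formally slightly weaker than the theorem's statement. Writing the theorem in its current form implicitly assumes a deterministic tie-breaking convention (the one used when $\pi^*(\mathbf{s})=\arg\min_{\mathbf{a}}Q(\mathbf{s},\mathbf{a})$ is treated as a single-valued map in~\eqref{eq:optimal V action}), and the write-up should make this convention explicit. A secondary bookkeeping hurdle is carrying through the decomposition $\bm{\delta}=(\delta_n,\bm{\delta}_{\backslash\{n\}})$ together with the $\mathbf{G}^+$-expectation cancellation; both are templated by the derivation leading to~\eqref{eq: Q i -i} and the equality~\eqref{eq: H=H'}, so the machinery of the earlier proof can be reused with minimal modification.
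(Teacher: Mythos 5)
Your argument is correct. The paper itself does not prove Theorem~\ref{theo: mono action ch} here (it is imported from~\cite{chen2022seDRL}), but your interchange argument is exactly the machinery the paper uses for Theorem~\ref{theo: mono channel}: the case split on whether the $n$th component of the action equals $m$, the cancellation of the $\mathbf{G}^{+}$-expectation via the i.i.d.\ assumption as in~\eqref{eq: H=H'}, and Lemma~\ref{lemma:monotone V} to sign the term $(\psi_{n,m}-\psi''_{n,m})\bigl(\upsilon^{*}(1,\cdot)-\upsilon^{*}(\delta_n+1,\cdot)\bigr)$. The chain $Q(\mathbf{s}''_{\text{Ch}},\mathbf{a}^{*})\le Q(\mathbf{s},\mathbf{a}^{*})\le Q(\mathbf{s},\mathbf{b})=Q(\mathbf{s}''_{\text{Ch}},\mathbf{b})$ for every $\mathbf{b}$ with $b_n\neq m$ is the right conclusion, and the tie-breaking caveat you flag is a genuine (minor) imprecision in the theorem statement rather than a gap in your proof.
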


This monotonicity demonstrates that if the optimal action for device $n$ is to schedule it to channel $m$ for state $\mathbf{s}$, then 
for another state $\mathbf{s}''_{\text{Ch}}$, where channel $m$ of device $n$ has better quality while all other state components remain identical to $\mathbf{s}$, device $n$ should still be scheduled to channel $m$.

Please note that we have also developed optimal policy monotonicity in terms of AoI in~\cite{chen2022seDRL} but only for some special cases, e.g., a two-device-single-channel scenario. Since no general results have been derived, we will not present them here or consider them in the design of our DRL algorithm in the subsequent section.

\subsection{Greedy Structure of Optimal Policy} \label{subsec: greedy structure}
In this section, beyond analyzing the properties of the optimal V function, we aim to establish the structure of the optimal scheduling policy. Deriving the structure of the optimal scheduling policy for a general multi-sensor, multi-channel system is not feasible due to the complexity of the problem. Instead, we focus on a special case involving co-located devices with identical channel states. This simplification, which focuses solely on the AoI states of different devices while disregarding variations in their channel states, allows us to address the problem more tractably and extract meaningful insights.

To achieve this, we first define the mandatory scheduling set as follows:

\begin{definition}[Mandatory scheduling set]\label{def: priority set}
Consider an $N$-device-$M$-channel system. If there exists a threshold $\bar{\delta}$ such that the asymptotic cost function satisfies the following ordering inequality:
    \begin{equation}
        c_{i_{1}} (\delta) \geq c_{i_{2}} (\delta) \dots \geq c_{i_{N}} (\delta), \ \forall \delta \geq \bar{\delta},
        \label{eq: asymptotic cost constraint}
    \end{equation}
    where $i_n\in \{1,\dots,N\}$ represents an device index, then the following holds:
    
    Given the AoI state $(\delta_{1}, \dots, \delta_{N})$, if there exists a largest number $\bar{N} \leq M$ such that the set $\mathcal{I} \triangleq \{i_1,\dots,i_{\bar{N}}\}$ includes the $\bar{N}$ devices with the largest AoI states, each greater than $\bar{\delta}$, then $\mathcal{I}$ is defined as the mandatory scheduling set.
\end{definition}
The mandatory scheduling set is time-varying due to the dynamics of the AoI states. If the set exists, it is intuitive that all devices within it should be scheduled. This is because the instantaneous cost of scheduling any device in the set exceeds that of any device outside the set. Moreover, leaving a device in the set unscheduled keeps resulting in a higher instantaneous cost than scheduling a device not belonging to the set, thereby increasing the future long-term cost.
Consequently, the optimal scheduling action aligns with a greedy action, which seeks to minimize the immediate cost at each time step. This alignment with a greedy action justifies referring to this structure as the \textbf{greedy structure of the optimal policy}. The result and its detailed proof are provided below.

\begin{theorem}[Asymptotic greedy structure of the optimal scheduling policy for co-located devices]\label{theo: structure gateway}
Consider a multi-device-multi-channel system with co-located devices. If the mandatory scheduling set in Definition~\ref{def: priority set} exists, the optimal policy schedules all devices within the set, i.e., $a^{*}_{n} \neq 0, \forall n \in \mathcal{I}$.
\end{theorem}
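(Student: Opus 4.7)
The plan is a proof by contradiction. Suppose the mandatory scheduling set $\mathcal{I}$ exists at state $\mathbf{s}$, yet the optimal action $\mathbf{a}^{*}=\pi^{*}(\mathbf{s})$ satisfies $a^{*}_{n^\star}=0$ for some $n^\star\in\mathcal{I}$. Since $|\mathcal{I}|\leq M$ and at most $|\mathcal{I}|-1$ members of $\mathcal{I}$ can be scheduled under $\mathbf{a}^{*}$, at least one channel $m$ is either idle or assigned to some $n'\notin\mathcal{I}$. I construct an alternative action $\mathbf{a}'$ that coincides with $\mathbf{a}^{*}$ outside $\{n^\star,n'\}$ but reassigns channel $m$ to $n^\star$ (releasing $n'$ in the swap case), and aim to show $Q(\mathbf{s},\mathbf{a}')\leq Q(\mathbf{s},\mathbf{a}^{*})$ with strict inequality whenever $\psi_m<1$, contradicting the optimality of $\mathbf{a}^{*}$.

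Since $c(\mathbf{s})$ is action-independent and the co-located assumption makes $\operatorname{P}(\mathbf{G}^{+})$ and the drop probability $\psi_m$ identical under both actions, the decomposition~\eqref{eq: Q i -i} collapses $Q(\mathbf{s},\mathbf{a}^{*})-Q(\mathbf{s},\mathbf{a}')$ to $\gamma(1-\psi_m)$ times an expectation over $\bm{\delta}^{+}_{\backslash\{n^\star,n'\}}$ and $\mathbf{G}^{+}$ of a $\upsilon^{*}$-difference in which only the coordinates of $n^\star$ and $n'$ change. In the idle-channel case this becomes
\[
\gamma(1-\psi_m)\,\mathbb{E}\!\left[\upsilon^{*}(\delta_{n^\star}\!+\!1,\cdot)-\upsilon^{*}(1,\cdot)\right]\geq 0,
\]
which is non-negative by the AoI-monotonicity in Lemma~\ref{lemma:monotone V} and strict whenever $\psi_m<1$.

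The substantive case is the swap, where the same reduction gives
\[
Q(\mathbf{s},\mathbf{a}^{*})-Q(\mathbf{s},\mathbf{a}') \;=\; \gamma(1-\psi_m)\,\mathbb{E}\!\left[\upsilon^{*}(\delta_{n^\star}\!+\!1,1,\cdot)-\upsilon^{*}(1,\delta_{n'}\!+\!1,\cdot)\right],
\]
where only the $n^\star$ and $n'$ coordinates differ. I would chain this through $\upsilon^{*}(\delta_{n^\star}\!+\!1,1,\cdot)\geq\upsilon^{*}(\delta_{n^\star}\!+\!1,1,\cdot) \textrm{-exchanged} = \upsilon^{*}(1,\delta_{n^\star}\!+\!1,\cdot)\geq\upsilon^{*}(1,\delta_{n'}\!+\!1,\cdot)$. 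The last step is immediate from Lemma~\ref{lemma:monotone V} using $\delta_{n^\star}\geq\delta_{n'}$, which holds because $n^\star\in\mathcal{I}$ is in the top-$\bar N$ AoIs while $n'\notin\mathcal{I}$ is not. The first step is an interchange inequality: co-location makes the kernel~\eqref{eq:transision_prob} invariant under the permutation swapping the labels of $n^\star$ and $n'$, so a synchronous coupling of a symmetrized policy (identifying channel realizations and drop events across the two trajectories) expresses the difference as a discounted series of pathwise summands $[c_{n^\star}(X_t)-c_{n'}(X_t)]-[c_{n^\star}(Y_t)-c_{n'}(Y_t)]$, where $X_t$ and $Y_t$ are the coupled AoI processes initiated at the large and small AoI, respectively; the cost-ordering $c_{n^\star}\geq c_{n'}$ for $\delta\geq\bar\delta$ from Definition~\ref{def: priority set} then forces non-negativity on the large-AoI portion of the trajectory.

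The principal obstacle is closing this coupling argument watertightly. The cost-ordering is only asymptotic, so coupled excursions into AoI values below $\bar\delta$ do not automatically satisfy the sign requirement, and the ordering $X_t\geq Y_t$ can be temporarily violated once the symmetrized policy resets the initially-large AoI. I anticipate absorbing these residual contributions using the geometric weight $\gamma^{t}$, the monotonicity of each $c_n$, and the regime $\delta_{n^\star}\gg\bar\delta$, combined with an induction on the value-iteration iterates $\upsilon^{*}_{k}$ in the spirit of the proof of Proposition~\ref{prop: V convexity}; passing to the limit $k\to\infty$ then transfers the inequality from the iterates to $\upsilon^{*}$. Once the interchange is established, the resulting strict inequality $Q(\mathbf{s},\mathbf{a}')<Q(\mathbf{s},\mathbf{a}^{*})$ contradicts the optimality of $\mathbf{a}^{*}$, and iterating over every unscheduled element of $\mathcal{I}$ yields $a^{*}_{n}\neq 0$ for all $n\in\mathcal{I}$.
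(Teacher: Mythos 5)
Your reduction matches the paper's: you consider the swap action $\dot{\mathbf{a}}$ that gives channel $m$ to $n^\star\in\mathcal{I}$ instead of $n'\notin\mathcal{I}$, observe that the immediate cost and the $\operatorname{P}(\mathbf{G}^{+})$ factor are action-independent, that co-location makes the drop probability the same under both actions so the $\psi_m$-weighted terms cancel, and arrive at exactly the comparison $\upsilon^{*}(\delta_{n^\star}\!+\!1,1,\cdot)\geq\upsilon^{*}(1,\delta_{n'}\!+\!1,\cdot)$ that the paper isolates (its inequality~\eqref{ineq: V comparison with two different aoi delta=1}). Your "idle channel" case is vacuous under constraint~\eqref{eq: action constraint}, which forces every channel to be assigned to exactly one device, so only the swap case arises; this is harmless. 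The proof-by-contradiction framing versus the paper's "every such action is dominated" framing is also an immaterial difference.

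The genuine gap is that the interchange inequality $\upsilon^{*}(\delta_{n^\star}\!+\!1,1,\cdot)\geq\upsilon^{*}(1,\delta_{n^\star}\!+\!1,\cdot)$ — the entire substance of the theorem — is not proven. You correctly diagnose why your coupling argument fails to close: the cost ordering $c_{n^\star}\geq c_{n'}$ holds only for $\delta\geq\bar\delta$, and the pathwise ordering of the coupled AoI processes is destroyed after the first successful delivery, so the summands can change sign. But the proposed repair ("absorbing these residual contributions using the geometric weight\ldots combined with an induction on the value-iteration iterates") is a statement of intent, not an argument. The paper resolves precisely this difficulty with a dedicated result (its Lemma~\ref{lemma: V comparison with two different aoi}, stated slightly more generally as $\upsilon^{*}(\delta_i',\delta_j'',\cdot)\geq\upsilon^{*}(\delta_i,\delta_j,\cdot)$ for $i\in\mathcal{I}$, $j\notin\mathcal{I}$, $\delta_i'\geq\delta_j\geq\delta_j''$, $\delta_i'\gg\delta_i$), proven by induction on the value-iteration iterates with a case analysis on the optimal actions at $\mathbf{s}^{\circ}$; the terms where the ordering is violated are not controlled by a coupling but are swamped using the asymptotic monotonicity Lemma~\ref{lemma:much larger}, i.e., $\upsilon^{0}(\delta_i'\!+\!1,\delta_j''\!+\!1,\cdot)\gg\upsilon^{0}(\delta_i\!+\!1,1,\cdot)$, together with the cost-ordering inequality~\eqref{ineq: cost comparison with two different aoi} and a further swap of the actions of $i$ and $j$ inside the iteration. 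Until you supply that lemma (or an equivalent), the proof is incomplete at its central step.
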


\begin{proof}
    See Appendix~\ref{proof: structure gateway}.
\end{proof}

\section{Structure-guided unified on-off policy DRL} \label{sec: DRL}
In this section, we leverage the theoretical results obtained to develop a structure-guided unified dual on-off policy (SUDO) DRL method. This approach combines the strengths of both off-policy and on-policy DRL, utilizing the state-of-the-art on-policy PPO algorithm, widely regarded as one of the most advanced DRL methods available.
First, we briefly overview PPO, which is less familiar than commonly used off-policy DRL methods. Next, we present the proposed SUDO algorithm.

\subsection{Overview of PPO Algorithm} \label{subsec: ppo}
A PPO agent has two neural networks (NNs): an actor NN and a critic NN. \textbf{The actor NN}, with the parameter set $\bm{\varphi}$, approximates the original deterministic policy $\pi^{*}(\mathbf{s})$ by a stochastic policy $\pi(\tilde{\mathbf{a}} | \mathbf{s};\bm{\varphi})$, which outputs a probability distribution over actions $\tilde{\mathbf{a}}$ given the state $\mathbf{s}$.
Note that in our original scheduling problem, the action $\mathbf{a}$ is selected from the discrete action space of size $N!/(N-M)!$. Here to apply the PPO algorithm, which operates in a continuous action space, we implement an action mapping method~\cite{pang2022drl}. This approach maps the $N$-dimensional continuous action $\tilde{\mathbf{a}}$ generated by the actor NN into a corresponding discrete action $\mathbf{a}$. 
For simplicity in notation, the process of obtaining $\mathbf{a}$ from the actor NN $\bm \varphi$ is represented as a stochastic function:
\begin{equation}\label{eq:f()}
 \mathbf{a} = f(\mathbf{s};\bm{\varphi}). 
\end{equation}
\textbf{The critic NN}, with the parameter set $\bm{\nu}$, approximates the optimal V function $\upsilon^{*} (\mathbf{s})$ as $\upsilon(\mathbf{s}; \bm{\nu})$, outputting the estimated value of the optimal V function for a given state $\mathbf{s}$.
Training a PPO agent involves \textbf{two iterative steps}: generating an experience trajectory and updating both NNs.

\underline{Step 1: Experience generation.} The PPO agent generates a trajectory of length $T$, resulting in the trajectory:
\begin{equation}
    \mathcal{T}_{\text{On}} \triangleq \{(\mathbf{s}_{t}, \tilde{\mathbf{a}}_{t}, c_{t}) \}_{t=0}^{T-1}.
    \label{eq: definition Tu}
\end{equation}
At each time step $t$, the actor NN uses the current stochastic policy $\pi(\tilde{\mathbf{a}}_{t}|\mathbf{s}_{t};\bm{\varphi}_{\text{old}})$ to sample a continuous action$\tilde{\mathbf{a}}_{t}$, which is then mapped to the discrete (real) scheduling action $\mathbf{a}_{t}$.
The next state $\mathbf{s}_{t+1}$ and the cost $c_{t}$  are obtained by executing the real action$\mathbf{a}_{t}$. The critic NN computes the estimated optimal V function of the state $\upsilon(\mathbf{s}_{t}; \bm{\nu})$.
Using the trajectory $\mathcal{T}_{\text{On}}$, the advantage function $A_{t}$ and the cost-to-go function $C_{t}$ are calculated as
\begin{align}
    A_{t} &= \!\sum_{\tilde{t}=0}^{T-t-1}(\gamma \lambda)^{\tilde{t}} \zeta_{t+\tilde{t}},
    \label{eq: advantage function}
    \\
    C_{t} &= c_{t} + \gamma \upsilon(\mathbf{s}_{t+1}; \bm{\nu}),
    \label{eq: cost to go}
\end{align}
where $\lambda$ is the generalized advantage estimation (GAE) parameter, and $\zeta_{t} = c_{t} + \gamma \upsilon(\mathbf{s}_{t+1}; \bm{\nu}) - \upsilon(\mathbf{s}_{t}; \bm{\nu})$.
The trajectory is then updated as
\begin{equation}
    \mathcal{T}'_{\text{On}} \triangleq \{(\mathbf{s}_{t}, \tilde{\mathbf{a}}_{t}, A_{t}, C_{t}) \}_{t=0}^{T-1}.
    \label{eq: definition Tu'}
\end{equation}

\underline{Step 2: NN update.} To update the actor and critic NNs, the PPO agent randomly samples $B_{1}$ data points from $\mathcal{T}'_{\text{On}}$ to create a mini-batch dataset:
\begin{equation}
    \left\{ (\mathbf{s}_{l}, \tilde{\mathbf{a}}_{l}, A_{l},C_{l}) \right\}_{l=1}^{B_{1}}.
    \label{PPO mini batch}
\end{equation}
For the critic NN, the temporal difference (TD) error is defined as:
\begin{equation}
    \mathsf{TD}_{l} \triangleq C_{l} - \upsilon(\mathbf{s}_{l}; \bm{\nu}),
    \label{eq: TD error}
\end{equation}
and the loss function is given by
\begin{equation}
    L(\bm{\nu}) = \frac{1}{B_{1}} \sum_{l=1}^{B_{1}} \mathsf{TD}^{2}_{l}.
    \label{eq: critic loss PPO}
\end{equation}
For the actor NN, the loss function is defined as:
\begin{align}
    L (\bm{\varphi}) = \frac{1}{B_{1}} \sum_{l=1}^{B_{1}} \Big( & \min \{ q(\mathbf{s}_{l}; \bm{\varphi}) A_{l}, \\
    & \text{clip}\left(q(\mathbf{s}_{l}; \bm{\varphi}), 1-\epsilon, 1+\epsilon \right) A_{l}\} \\
    & - \omega H(\mathbf{s}_{l};\bm{\varphi}) \Big),
    \label{eq: actor loss PPO}
\end{align}
where 
\begin{equation}
    q(\mathbf{s}_{l}; \bm{\varphi}) = \frac{\pi(\tilde{\mathbf{a}}_{l}|\mathbf{s}_{l};\bm{\varphi})}{\pi(\tilde{\mathbf{a}}_{l}|\mathbf{s}_{l};\bm{\varphi}_{\text{old}})}
\end{equation}
is the probability ratio, and
\begin{equation}
    \text{clip}(x, x_{\min}, x_{\max}) = \max\{ \min \{ x, x_{\max} \}, x_{\min} \}
\end{equation}
is a clip function. Here, $\epsilon$ is a clipping hyper-parameter, $\omega$ is the weight for the entropy loss, and
$$H(\mathbf{s}_{l};\bm{\varphi}) = \frac{1}{2} \ln(2\pi \cdot e \cdot \sigma_{l}^2)$$ represents the policy entropy loss used to encourage exploration, where $\sigma_{l}$ is the deviation for action $\tilde{\mathbf{a}}_{l}$ when in state $\mathbf{s}_{l}$ following the current policy.
The clip function ensures stable training by constraining large updates.

Finally, the critic and actor NNs are updated by minimizing their respective loss functions using gradient-based optimization methods, such as the Adam optimizer.

\subsection{Proposed SUDO-DRL} \label{subsec: SUDO DRL}
To leverage the advantages of on-policy DRL, known for its stable training performance, and off-policy DRL, which offers higher sampling efficiency by reusing past data and facilitates better exploration without getting trapped in local minima, the proposed SUDO-DRL algorithm innovatively integrates concepts from both on-policy and off-policy approaches.

Fundamentally, the effectiveness of SUDO-DRL lies in its carefully designed loss functions for the actor and critic NNs. These loss functions combine both on-policy and off-policy components as follows:
\begin{align}
    L_{\text{SUDO}}(\bm{\nu}) &= L_{\text{On}}(\bm{\nu}) + \beta_{1} L_{\text{Off}}(\bm{\nu}) 
    \label{eq: SUDO critic loss}
    \\
    L_{\text{SUDO}}(\bm{\varphi}) &= L_{\text{On}}(\bm{\varphi}) + \beta_{2} L_{\text{Off}}(\bm{\varphi}),
    \label{eq: SUDO actor loss}
\end{align}
where $\beta_{1}$ and $\beta_{2}$ are the hyperparameters to balance the contributions of the on-policy and off-policy loss functions.

In the following, we first present a holistic structural property evaluation framework based on the theoretical results discussed in the previous section.\footnote{Please note that although some of the theoretical results apply only to specific scenarios (e.g., Theorem~\ref{theo: convexity, N-M} holds in an asymptotic setting), we still utilize these structural results in designing SUDO-DRL. This is because these properties, even when holding under limited conditions, provide valuable guidance for improving the general performance and stability of the algorithm. The effectiveness of this approach will be further illustrated through performance improvements in the following section.} Building on this foundation, we then propose methods for constructing the on-policy and off-policy loss functions, respectively.

\subsubsection{Structural Property Evaluation Framework} \label{subsubsec: evaluate NN structure}
For each state-action pair $(\mathbf{s}, \mathbf{a})$, we evaluate the critic NN $\upsilon(\mathbf{s}; \bm{\nu})$ based on the proven structural properties of the optimal V function: monotonicity w.r.t. AoI state (Lemma~\ref{lemma:monotone V}) and channel state (Theorem~\ref{theo: mono channel}), as well as convexity w.r.t. AoI state (Theorem~\ref{theo: convexity, N-M}).
Additionally, we assess the monotonicity of the actor $\pi(\tilde{\mathbf{a}}|\mathbf{s};\bm{\varphi})$'s output action w.r.t. channel state in the vicinity of $\mathbf{s}$ (Theorem~\ref{theo: mono action ch}) using similar penalty metrics.
The overall evaluation framework is illustrated in Fig.~\ref{fig:structural evaluation metric}.

\begin{figure}[t]
    \centering
    \includegraphics[width=1\linewidth]{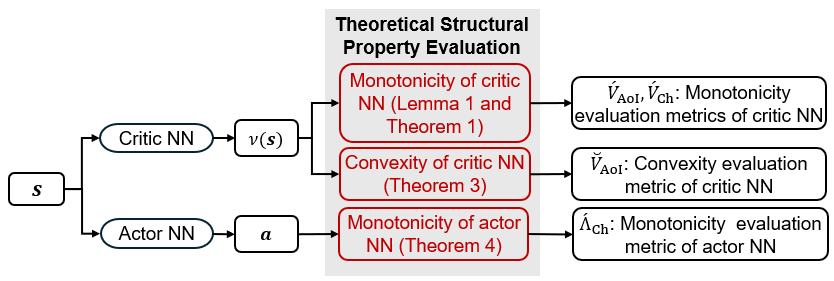}
    \caption{Critic and Actor NNs’ structural property evaluation framework.}
    \label{fig:structural evaluation metric}
\end{figure}
\emph{\underline{Monotonicity of the critic NN.}} For state $\mathbf{s} = (\bm{\delta}, \mathbf{G})$, we define $\acute{V}_{\text{AoI}}$ and $\acute{V}_{\text{Ch}}$ to evaluate the monotonicity of the critic NN w.r.t. AoI state and channel state, respectively:
\begin{align}
    \acute{V}_{\text{AoI}} & = \max\left(0,  \upsilon(\mathbf{s}; \bm{\nu})- \upsilon(\hat{\mathbf{s}}_{(n)}; \bm{\nu})\right)
    \label{eq: V_k,monotonicity aoi}
    \\
    \acute{V}_{\text{Ch}} & =  \max\left(0, \upsilon(\mathbf{s}; \bm{\nu})-\upsilon(\hat{\mathbf{s}}_{(n,m)}; \bm{\nu})\right),
    \label{eq: V_k,monotonicity channel}
\end{align}
where $\hat{\mathbf{s}}_{(n)} = (\hat{\bm{\delta}}_{(n)}, \mathbf{G})$ is identical to $\mathbf{s}$ except for a single-step increase in the $n$-th AoI, i.e.,
\begin{equation}
    \hat{\bm{\delta}}_{(n)} = (\delta_1, \dots, \delta_{n} + 1, \dots, \delta_N),
\end{equation}
and $\hat{\mathbf{s}}_{(n,m)} = (\bm{\delta}, \hat{\mathbf{G}}_{(n,m)})$, where $\hat{\mathbf{G}}_{(n,m)}$ is identical to $\mathbf{G}$ except for the element at the $n$th row and the $m$th column as $\min(g_{n,m} + 1, \bar{g})$.

The monotonicity metrics~\eqref{eq: V_k,monotonicity aoi} and \eqref{eq: V_k,monotonicity channel} indicate that when monotonicity is satisfied, the corresponding metric is zero. However, if monotonicity is violated, the penalty becomes positive and increases proportionally with the extent of the violation.

\emph{\underline{Convexity evaluation of the critic NN.}}
Similarly, the evaluation metric for convexity of the critic NN is defined as:
\begin{align}
    \breve{V}_{\text{AoI}} & =\!  \max\left(0, 2 \upsilon(\mathbf{s}; \bm{\nu}) -(\upsilon(\check{\mathbf{s}}_{(n)}; \bm{\nu}) + \upsilon(\hat{\mathbf{s}}_{(n)}; \bm{\nu})) \right)\!,
    \label{eq: V_k,convexity}
\end{align}
where $\check{\mathbf{s}}_{(n)} = (\check{\bm{\delta}}_{(n)}, \mathbf{G})$ is identical to $\mathbf{s}$ except for a single-step decrease in the $n$th AoI,
\begin{equation}
    \check{\bm{\delta}}_{(n)} = (\delta_1, \dots, \delta_{n} - 1, \dots, \delta_N).
\end{equation}

\emph{\underline{Monotonicity evaluation of the actor NN.}}
As established in Theorem~\ref{theo: mono action ch}, the monotonicity of the actor NN differs from the structural properties of the critic NN, which are evaluated over the entire state vector. Instead, the monotonicity of the actor NN is assessed for each device's action individually.

Given state $\mathbf{s}$ and a corresponding sampled action for device $n$, is $a_n=m \neq 0$, we define the state 
$\check{\mathbf{s}}_{(n,m)} = (\bm{\delta}, \check{\mathbf{G}}_{(n,m)})$, where $\check{\mathbf{G}}_{(n,m)}$ is identical to $\mathbf{G}$ except for the element at the $n$th row and $m$th column is $\max(g_{n,m} - 1, 1)$.
The actor NN's action for device $n$ at state $\check{\mathbf{s}}_{(n,m)}$ is then sampled as $a_{\text{ch},n}$.

The evaluation metric for device $n$'s action monotonicity w.r.t. the channel state is defined as
\begin{equation}    
\acute{\Lambda}_{\text{Ch},n} = \mathbbm{1} \left( a_{n} \neq 0  \text{ and } a_{\text{Ch},n} \neq  a_{n} \right).
    \label{eq: action mono channel}
\end{equation}

\emph{\uline{Sampling over the trajectory for structural property evaluation.}}  
To efficiently evaluate the structural properties of a trajectory, we sample data points from it rather than considering all data points. 
We randomly and uniformly sample \(K\) state-action pairs \(\left((\mathbf{s}_{1},\mathbf{a}_{1}), \dots, (\mathbf{s}_{k},\mathbf{a}_{k}), \dots, (\mathbf{s}_{K},\mathbf{a}_{K})\right)\) from the trajectory. For each state \(\mathbf{s}_{k}\), the structural evaluation metrics defined above, \eqref{eq: V_k,monotonicity aoi}, \eqref{eq: V_k,monotonicity channel},~\eqref{eq: V_k,convexity}, and~\eqref{eq: action mono channel}, require analyzing changes in different device AoI and channel states. To simplify this process, we uniformly sample \(\Xi\) devices for AoI-related evaluations and \(\Xi\) elements from the \(N \times M\) channel state matrix for channel state evaluations.

To account for these sampled points, we introduce subscripts \(k\) and \(\xi\) to the evaluation metrics, i.e., $\acute{V}_{\text{AoI}, k, \xi}$, $\acute{V}_{\text{Ch}, k, \xi}$, $\breve{V}_{\text{AoI}, k, \xi}$, and $\acute{\Lambda}_{\text{Ch}, k, \xi}$.
The sampled data will be used in both the on-policy and off-policy parts.

\emph{\underline{Trajectory structure evaluation.}}
Using the monotonicity evaluation metrics of the critic NN, i.e.,~\eqref{eq: V_k,monotonicity aoi} and \eqref{eq: V_k,monotonicity channel}, we define the \textbf{critic-monotonicity (CM) score}, which is calculated based on the sampled states as:
\begin{equation}
    \mathsf{CM} \triangleq \frac{\sum_{k=1}^{K} \sum_{\xi=1}^{\Xi} \left[ \mathbbm{1}(\acute{V}_{\text{AoI}, k, \xi} = 0) + \mathbbm{1}(\acute{V}_{\text{Ch}, k, \xi} = 0)\right]}{2K\Xi}\times 100.
    \label{eq: monotonicity score V}
\end{equation}
Similarly, based on the convexity evaluation metric in~\eqref{eq: V_k,convexity}, we define the \textbf{critic-convexity (CC) score} as:  
\begin{equation}
    \mathsf{CC} \triangleq \frac{\sum_{k=1}^{K} \sum_{\xi=1}^{\Xi} \mathbbm{1}(\breve{V}_{\text{AoI}, k, \xi} = 0)}{K\Xi}\times 100.
    \label{eq: convexity score V}
\end{equation}  
Then, using the monotonicity evaluation metric of the actor NN from~\eqref{eq: action mono channel}, we define the \textbf{actor-monotonicity (AM) score} as:  
\begin{equation}
    \mathsf{AM} \triangleq \frac{\sum_{k=1}^{K} \sum_{\xi=1}^{\Xi} \mathbbm{1}(\acute{\Lambda}_{\text{Ch}, k, \xi} = 0)}{K\Xi}\times 100.
    \label{eq: actor score}
\end{equation}  

A higher score for $\mathsf{CM}$, $\mathsf{CC}$ and $\mathsf{AM}$ indicates that the trajectory data aligns more closely with the theoretical structural properties of the optimal policy, suggesting that the policy being evaluated is closer to the optimal policy.
These scores will be used in the off-policy part to select trajectories for storage in a replay buffer.

\subsubsection{On-Policy Loss Function}  

The on-policy loss function in SUDO-DRL leverages the current trajectory to create a mini-batch data set of size \(B_{1}\), following the PPO algorithm described in Section~\eqref{subsec: ppo}. However, the key difference lies in the introduction of a penalty term for violations of the structural properties in the critic loss function. This penalty is computed based on \(\{\acute{V}_{\text{AoI}, k, \xi}\}\), \(\{\acute{V}_{\text{Ch}, k, \xi}\}\), and \(\{\breve{V}_{\text{AoI}, k, \xi}\}\) derived from the earlier structural property evaluation samplings.

The loss function for the critic NN in the on-policy component is defined as:  
\begin{align}
    L_{\text{On}}(\bm{\nu})  = \frac{1}{B_{1}} \!\sum_{l=1}^{B_{1}} \mathsf{TD}^{2}_{l} \!+\!  \frac{1}{K\Xi}\sum_{k=1}^{K} \sum_{\xi=1}^{\Xi}  \Big( &\acute{V}_{\text{AoI}, k, \xi}\! + \!\acute{V}_{\text{Ch}, k, \xi} \\
    & \!+\! \breve{V}_{\text{AoI}, k, \xi}\Big),
    \label{eq: critic loss on policy}
\end{align}
where the first term represents the TD loss, and the second term penalizes deviations from the theoretical structural properties. 

The loss function for the actor NN remains the same as the conventional PPO algorithm, as shown in~\eqref{eq: actor loss PPO}:  
\begin{align}
    L_{\text{On}} (\bm{\varphi}) = L (\bm{\varphi}).
    \label{eq: actor loss on policy}
\end{align}
Note that the evaluation metric $\acute{\Lambda}_{\text{Ch},n}$ for the actor NN is based on the executed action after mapping, rather than the action generated directly by the actor NN. Consequently, this metric cannot be directly incorporated into the loss function for training the critic NN.

\subsubsection{Off-policy reply buffer} \label{subsubsec: NN score}
Unlike on-policy DRL, which discards sampled data from old policies entirely, off-policy DRL retains this data in a replay buffer \(\mathcal{R}\) and samples from it for updating the actor and critic NNs.
However, data generated by a policy that is far from optimal can negatively impact training because it introduces bias and instability, hindering the convergence toward the optimal policy. To address this, the off-policy part of the proposed SUDO-DRL framework selectively stores high-quality data that aligns well with the theoretical structural properties of the optimal policy, ensuring more effective and stable training.

\emph{\uline{Structure-Guided Data Storage Scheme.}}  
Given the current trajectory $\{\mathbf{s}_t,\tilde{\mathbf{a}}_t,c_t\}^{T-1}_{t=0}$  with index \(u\), we first calculate the average structure scores of the past \(\bar{u}\) trajectories. The average CM score is computed based on~\eqref{eq: monotonicity score V} as:  
\begin{equation}
    \mathsf{CM}_{\text{Avg}, u} = \frac{1}{\bar{u}} \sum_{\tilde{u}=u-\bar{u}-1}^{u-1} \mathsf{CM}_{\tilde{u}}, 
    \label{eq: average score CM}
\end{equation}
Similarly, the average CC and AM scores are calculated as \(\mathsf{CC}_{\text{Avg}, u}\) and \(\mathsf{AM}_{\text{Avg}, u}\), respectively.  

Next, we define the condition for storing trajectory \(u\) in the replay buffer as:  
\begin{equation}
     \mathsf{CM}_{u} \geq \mathsf{CM}_{\text{Avg}, u},\ \mathsf{CC}_{u} \geq \mathsf{CC}_{\text{Avg}, u},\ \mathsf{AM}_{u} \geq \mathsf{AM}_{\text{Avg}, u}.
     \label{eq: score constraint for store}
\end{equation}  

If the trajectory scores satisfy the constraint~\eqref{eq: score constraint for store}, all transitions (i.e., state-action-cost-next-state tuples) within the trajectory is stored in \(\mathcal{R}\) as:  
\begin{equation}
    \mathcal{X}_{\text{Off}, t} \triangleq \left(\mathbf{s}_{t}, \tilde{\mathbf{a}}_{t}, c_{t}, \mathbf{s}_{t+1}, p_{t}\right), \ t=0,\dots,t-1,
\end{equation}
where \(p_{t}\) represents the \textbf{transition priority indicator}, defined based on the structural scores of the trajectory as:  
\begin{equation}
    p_{u} = \mathsf{CM}_{u} + \mathsf{CC}_{u} + \mathsf{AM}_{u}.
    \label{eq: priority of trajectory}
\end{equation}

\subsubsection{Off-Policy Replay Buffer Sampling and Loss Functions}  
To compute the loss functions for updating the actor and critic NNs, the off-policy component of SUDO-DRL samples a batch of size \(B_{2}\) from the replay buffer \(\mathcal{R}\) based on priority indicators as:
\begin{equation}
    \left\{\mathcal{X}_{\text{Off},b}\right\}_{b=1}^{B_{2}}
\end{equation} 
with the sampling probability of $\mathcal{X}_{\text{Off},b}$ as 
\begin{equation}
    P_{b} \triangleq \frac{p_{b}\cdot \varrho^{b}}{\sum_{b=1}^{R} \left(p_{b} \cdot \varrho^{b}\right)},
    \label{eq: sampling probability}
\end{equation}
where $R$ is the size of the replay buffer $\mathcal{R}$, and $\varrho \in (0, 1]$ is a hyperparameter that controls the decay rate of sampling priority to emphasize more recent trajectories. Trajectories with higher structure scores and greater recency are assigned higher sampling probabilities as determined by $\varrho$.

The loss function for the critic NN in the off-policy component is defined as:
\begin{equation}
    L_{\text{Off}}(\bm{\nu}) = \frac{1}{B_{2}} \sum_{b=1}^{B_{2}} \mathsf{TD}_{b},
    \label{eq: critic loss off-policy}
\end{equation}
where \(\mathsf{TD}_{b}\) is the TD error, as defined in~\eqref{eq: TD error}, and is computed based on the sampled transition \(\mathcal{X}_{\text{Off},b}\).  

The actor NN loss function is designed based on the soft actor-critic (SAC) DRL algorithm, which is an off-policy DRL method. In SAC, the critic NN outputs the state-action value function \(Q(\mathbf{s},\mathbf{a})\) instead of the state value function \(\upsilon(\mathbf{s})\). The state-action value function \(Q(\mathbf{s},\mathbf{a})\) evaluates the long-term expected cost starting from the current state-action pair. To adapt this for the current framework, we approximate \(Q(\mathbf{s},\mathbf{a})\) using \(\upsilon(\mathbf{s})\), as follows:
\begin{equation} \label{eq:Q_approx}
    Q(\mathbf{s},\mathbf{a}) \approx c_{} + \gamma \mathbb{E} \left[ \upsilon(\tilde{\mathbf{s}};\bm{\nu}) \right],
\end{equation}
where \(\tilde{\mathbf{s}}\) represents the next state generated by the environment based on the current state \(\mathbf{s}\) and the action \(\mathbf{a}\) sampled from the actor NN \(\bm{\varphi}\).  

Using this approximation, the actor NN loss function in the off-policy component is expressed as:
\begin{equation}
    L_{\text{Off}}(\bm{\varphi}) = \frac{1}{B_{2}} \sum_{b=1}^{B_{2}} \varpi \log(\pi(\tilde{\mathbf{a}}_b|\mathbf{s}_b;\bm{\varphi})) + \left(c_{b} + \gamma \upsilon(\tilde{\mathbf{s}}_{b+1};\bm{\nu})\right),
    \label{eq: actor loss off-policy}
\end{equation}
where \(\varpi\) is a hyperparameter that weights the entropy term, \(\log(\pi(\tilde{\mathbf{a}}_b|\mathbf{s}_b;\bm{\varphi}))\) is the entropy term encouraging action exploration, and the term \(c_{b} + \gamma \upsilon(\tilde{\mathbf{s}}_{b+1};\bm{\nu})\) approximates the expected long-term cost from \eqref{eq:Q_approx} to reduce computational complexity.

\subsubsection{Structure-Guided Action Selection for Pre-Training} \label{subsubsec: SG action selection}

In the pre-training stage, in addition to the procedures described for the formal training stage, we propose a structure-guided action selection method for the trajectory sampling process. The goal of pre-training is to quickly identify a ``good" initial policy to serve as a starting point for formal training, rather than beginning entirely from scratch.

To achieve this, we leverage the greedy structure outlined in Theorem~\ref{theo: structure gateway} to guide action selection during training. This approach prioritizes AoI differences between devices while disregarding channel state variations. Although effective and computationally efficient for pre-training, this policy is strictly suboptimal and limited to use in this stage.

At each time step, based on the current state and the properties of the system, we determine the mandatory scheduling set \(\mathcal{I}\) as defined in Definition~\ref{def: priority set}. Subsequently, the stochastic policy \(\pi(\tilde{\mathbf{a}}|\mathbf{s};\bm{\varphi})\) generates actions iteratively until either the set \(\mathcal{I}\) becomes empty or all devices in \(\mathcal{I}\) are scheduled, satisfying:
\begin{equation}
    a_{n} \neq 0, \forall n \in \mathcal{I}.
    \label{eq: action requirement}
\end{equation}
The selected virtual action \(\tilde{\mathbf{a}}\) is then stored in the trajectory \(\mathcal{T}_{\text{On}}\) for use during the pre-training stage.

The architecture and details of the SUDO-DRL algorithm are shown in Fig.~\ref{fig:SUDO-DRL} and Algorithm~\ref{alg:SUDO-DRL}, respectively.
\begin{figure}[t]
    \centering
    \includegraphics[width=1\linewidth]{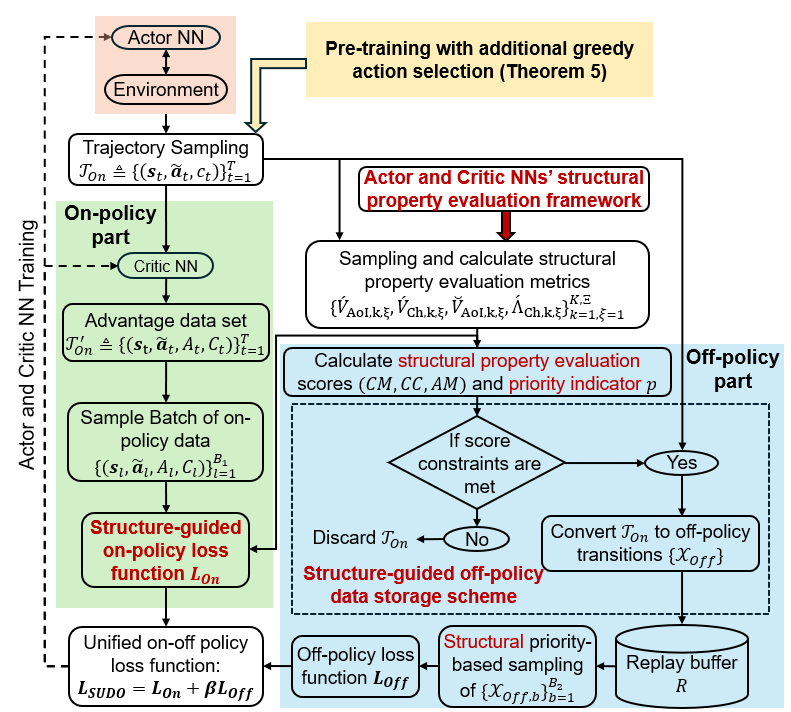}
    \caption{SUDO-DRL Architecture.}
    \label{fig:SUDO-DRL}
\end{figure}

\begin{algorithm}[t]
    \small
    \caption{\small{SUDO-DRL for transmission scheduling}}
    \label{alg:SUDO-DRL}
    \begin{algorithmic}[1]
        \State Initialize the environment with the goal-oriented communication system parameters
        \State Initialize critic and policy network with random weights $\bm{\nu}$ and $\bm{\varphi}$, respectively        
        \For {$\mathsf{episode}= 1, 2, \dots, I$}    
            \Statex \rule{8.2cm}{0.7pt}
            \Statex $\triangleright$ \textbf{Trajectory Sampling}
            \State Initialize state $\mathbf{s}_0$
            \For {$t = 0, 1, \dots, T$}
            \State Generate a trajectory $\mathcal{T}_{\text{On}} \triangleq \{ (\mathbf{s}_{t}, \tilde{\mathbf{a}}_{t}, c_{t})\}_{t=0}^{T-1}$ using the actor NN and the environment. If $\mathsf{episode} < I_1$ (pre-training), apply the structure-guided action selection from Section~\ref{subsec: greedy structure}.
            \EndFor
            \Statex \rule{8.2cm}{0.7pt}
            \Statex $\triangleright$ \textbf{Structural Property Evaluation}
            \State Calculate and generate a batch of structural property evaluation metrics $\{\acute{V}_{\text{AoI}, k, \xi}, \acute{V}_{\text{Ch}, k, \xi}, \breve{V}_{\text{AoI}, k, \xi}, \acute{\Lambda}_{\text{Ch},k,\xi}\}_{k=1, \xi=1}^{K, \Xi}$ based on \eqref{eq: V_k,monotonicity aoi},~\eqref{eq: V_k,monotonicity channel},~\eqref{eq: V_k,convexity}, and~\eqref{eq: actor score}
            \Statex \rule{8.2cm}{0.7pt}
            \Statex $\triangleright$ \textbf{On-policy Part}
            \For {$t = 0, 1, \dots, T$}
                \State Compute the advantage function $A_{t}$ in~\eqref{eq: advantage function} and cost-to-go function $C_{t}$ in~\eqref{eq: cost to go} and generate the data set $\mathcal{T}'_{\text{On}}$ in~\eqref{eq: definition Tu'}
            \EndFor
            \For {$l=1, \dots, B_{1}$}
                \State Sample a random mini-batch of data $ \{(\mathbf{s}_{l}, \tilde{\mathbf{a}}_{l}, A_{l}, C_{l}) \}_{l=1}^{B_{1}}$ from data set $\mathcal{T}'_{\text{On}}$ of the current trajectory
                \State Calculate the on-policy loss function $L_{\text{On}}(\bm{\nu})$ in~\eqref{eq: critic loss on policy} and $L_{\text{On}} (\bm{\varphi})$ in~\eqref{eq: actor loss on policy}
            \EndFor
            \Statex \rule{8.2cm}{0.7pt}
            \Statex $\triangleright$ \textbf{Off-policy Part}
            \State Calculate the structure score of the generated trajectory $\mathsf{CM}$, $\mathsf{CC}$, and $\mathsf{AM}$ according to~\eqref{eq: monotonicity score V},~\eqref{eq: convexity score V}, and~\eqref{eq: actor score}, and the priority indicator $p$ according to~\eqref{eq: priority of trajectory}
            \State Calculate the average structure scores $\mathsf{CM}_{\text{Avg}}$, $\mathsf{CC}_{\text{Avg}}$, and $\mathsf{AM}_{\text{Avg}}$ according to~\eqref{eq: average score CM}
            \State Store transitions $\{\mathcal{X}_{\text{Off},t}\}$ in $\mathcal{R}$ based on $\mathcal{T}_{\text{On}}$ and $p$, if the structure scores satisfy the constraints~\eqref{eq: score constraint for store}
            \For {$b= 1, \dots, B_{2}$}
                \State Sample a batch of transitions $\{\mathcal{X}_{\text{Off},b}\}_{b=1}^{B_{2}}$ based on sampling priority~\eqref{eq: priority of trajectory} from $\mathcal{R}$
                 \State Calculate the off-policy loss functions $L_{\text{Off}}(\bm{\nu})$ in~\eqref{eq: critic loss off-policy} and $L_{\text{Off}} (\bm{\varphi})$ in~\eqref{eq: actor loss off-policy}
            \EndFor
            \Statex \rule{8.2cm}{0.7pt}
            \Statex $\triangleright$ \textbf{Unified dual on-off policy-based parameter updating}
            \State Update $\bm{\nu}$ and $\bm{\varphi}$ by minimizing $L_{\text{SUDO}}(\bm{\nu})$ and $L_{\text{SUDO}}(\bm{\varphi})$ in~\eqref{eq: SUDO critic loss} and ~\eqref{eq: SUDO actor loss}, respectively
        \EndFor
    \end{algorithmic}
\end{algorithm}

% \parbox[t]{0.8\linewidth}

\section{Numerical Experiments} \label{sec: simulation}
In this section, we evaluate and compare the performance of the proposed SUDO-DRL with PPO, the benchmark on-policy DRL, and three off-policy DRL algorithms: DDPG~\cite{lillicrap2015DDPG}, structure-enhanced DDPG (SE-DDPG)~\cite{chen2022seDRL}, and type II monotonicity-regularized DDPG (MRII-DDPG)~\cite{chen2023semantic}. Notably, the latter two algorithms represent state-of-the-art structure-guided DRL approaches for addressing goal-oriented communication scheduling problems.

\subsection{Experiment Setups}
Our numerical experiments were conducted on a computational platform equipped with an Intel Core i7 9700 CPU @ 3.0 GHz, 32GB RAM, and an NVIDIA RTX 3060Ti GPU. The experimental environment is based on a remote state estimation system as described in Example~\ref{example: remote estimation}, with the estimation MSE considered as the performance metric. For this system, the dimensions of the process state and measurement are set to \(r_{n} = 2\) and \(c_{n} = 1\), respectively. The system matrices \(\mathbf{A}_{n}\) are randomly generated with spectral radii uniformly drawn from the range \((1, 1.3)\).

The discrete fading channel state is quantized into \(\bar{g} = 5\) levels, with corresponding packet drop rates set to \(0.2, 0.15, 0.1, 0.05,\) and \(0.01\). These values are derived from the Rayleigh distribution with a scale parameter randomly generated within the range \((0.5, 2)\)~\cite{huang2019deep}.

For a fair comparison, the actor and critic NNs of the SUDO-DRL and benchmark agents are implemented as fully connected NNs, each with three hidden layers, as described in~\cite{pang2022drl}. The input dimension of the actor NN matches the state dimension, i.e., \(N + N \times M\), for all algorithms. The output dimension is configured as \(2N\) for SUDO-DRL and PPO, and \(N\) for the benchmark off-policy DRL algorithms. Regarding the critic NN, the input dimension is \(N + N \times M\) for SUDO-DRL and PPO, while it is \(2N + N \times M\) for the benchmark off-policy DRL algorithms. The output dimension of the critic NN is set to \(1\) for all evaluated algorithms.

The training hyperparameters of the SUDO-DRL and PPO algorithms are summarized in Table~\ref{tab:Setup}.

\begin{table}[t]
    % \vspace{-6.7cm}
    \footnotesize
    \setlength\tabcolsep{7.5pt}
    \centering
    \caption{Summary of Training Hyperparameters}
    \renewcommand{\arraystretch}{1.3}
    \setlength{\tabcolsep}{1mm}
    \label{tab:Setup}
    \begin{tabular}{lr}
         \Xhline{1px}
         \textbf{Hyperparameters of SUDO-DRL and benchmarks} & Value \\
         \hline
         Critic NN learning rate & 0.001 \\
         Actor NN learning rate & 0.0001 \\
         Decay rate of learning rate                & 0.001 \\
         % \rowcolor[HTML]{EFEFEF} 
         % Initial channel exploration probability $\nu$ & 1 \\
         % $\nu$ decay rate                           & 0.99 \\
         % Minimum $\nu$                              & 0.01 \\
         Discount factor, $\gamma$                      & 0.99 \\
         % \rowcolor[HTML]{EFEFEF} 
         GAE parameter, $\lambda$        & 0.99 \\
         Clipping parameter, $\epsilon$                  & 0.2 \\
         Policy entropy loss weight, $\omega$  & 0.01 \\
         Decay rate of sampling priority, $\varrho$                & 0.95 \\
         Unified on-off policy loss function hyperparameter, $\beta_{1}, \beta_{2}$ & 0.9 \\
%         Number of epochs for on-policy part, $E_{1}$ & 2 \\
%         Number of epochs for on-policy part, $E_{2}$ & 2 \\
         On-policy and off-policy batch size, $B_{1}, B_2$ & 128 \\
%         Off-policy batch size, $B_{2}$ & 4 \\
         Number of sampled states for score scheme, $K$ & 50 \\
         Number of tested AoI and channel state, $\Xi$ & 4 \\
         Number of past trajectories for average score computing, $\bar{u}$ & 50 \\
         Time horizon of each episode, $T$ & 128 \\
         Size of replay buffer, $R$ & $200$ \\
         Number of episodes for pre-training, $I_{1}$ & $10 \times N$  \\
         Total number of episodes for training, $I$ & $10000$  \\
         Optimizer during training & Adam \\
         \Xhline{1px}
    \end{tabular}
\end{table}

\subsection{Performance Comparison of Different DRL Algorithms}
\begin{figure}[t]
    \centering
        \includegraphics[width=0.9\linewidth]{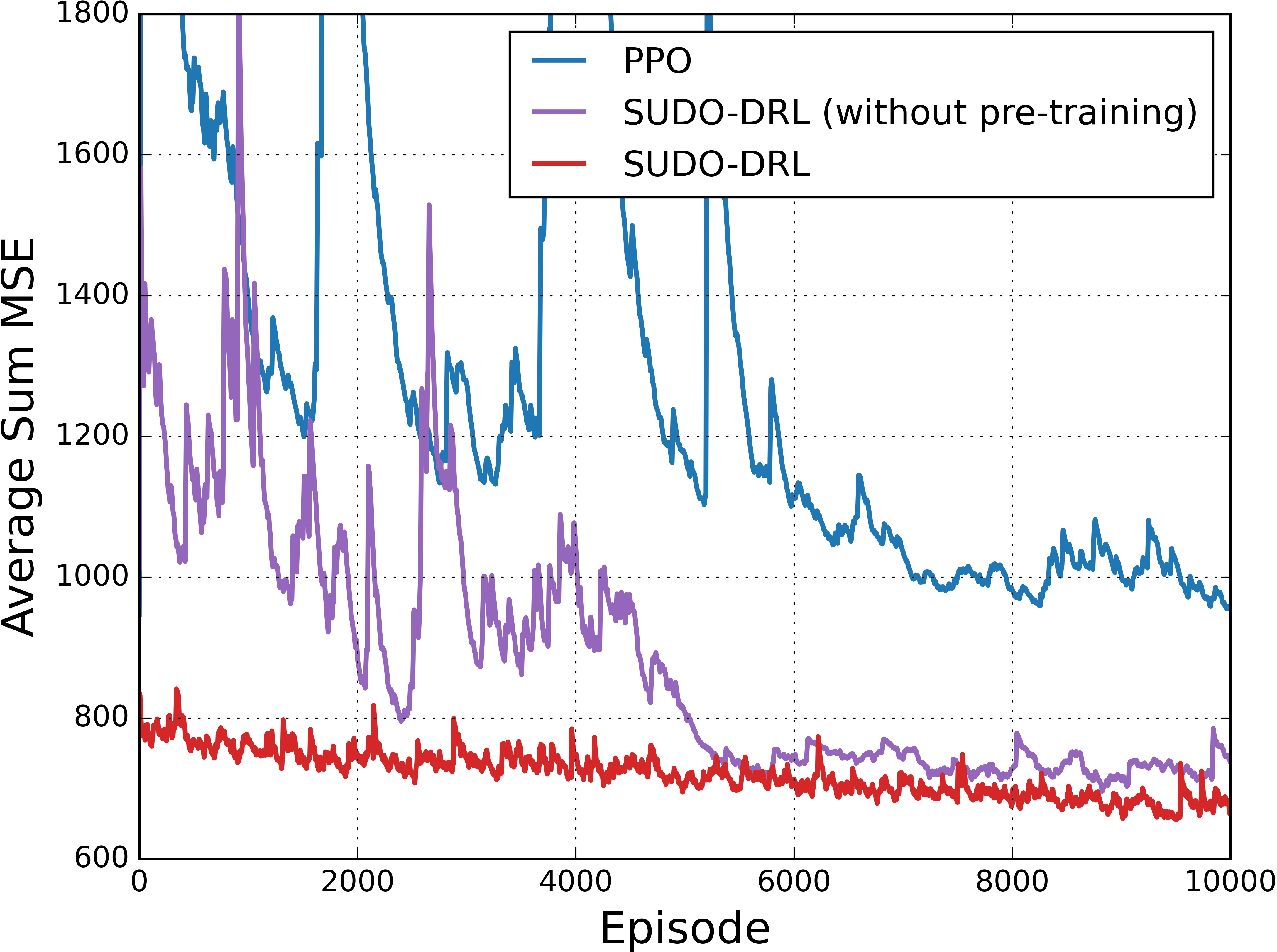}
        \caption{Average cost during training with $N\!\!=\!40, M\!\!=\!20$.}
        \label{fig:training curve DRL}
\end{figure}

\begin{table}[t]
    \centering
    \footnotesize
	\setlength\tabcolsep{7.5pt}
	\centering
    % \vspace{-0.1cm}
	\caption{Emperical average cost of the SUDO-DRL algorithm and the Benchmarks}
	\label{tab:test result}
	\setlength{\tabcolsep}{1.5mm}
	\begin{tabular}{c|c|c|c|c|c|c}
		\hline 
		\thead{System \\ Scale \\ $(N,M)$} &  \thead{Para.} & DDPG & \thead{SE-\\DDPG\\ \cite{chen2022seDRL}} & \thead{MRII- \\DDPG\\ \cite{chen2023semantic}} & PPO & \thead{\textbf{SUDO-DRL}} \\
		\hline
		$(10, 5)$ & 1 & 89.26 & 77.14 & 84.00 & 119.63 & \textbf{85.52} \\
		& 2 & 98.87 & 87.30 & 90.28 & 123.01 & \textbf{95.82} \\
		& 3 & $-$ & 106.60 & 119.55 & 195.37 & \textbf{121.31}  \\
		& 4 & 83.12 & 78.21 & 80.88 & 120.93 & \textbf{80.68} \\
		\hline
		 
		$(20, 10)$ & 5 & $-$ & 357.14 & 369.14 & 569.96  & \textbf{370.63}  \\
            & 6 & $-$ & $-$ & 445.87 & 584.37 & \textbf{426.29} \\
		  & 7 & $-$ & 407.20 & 441.73 & 731.94 & \textbf{417.90} \\
		  & 8 & $-$ & 290.72 & 307.94 & 376.16 & \textbf{308.91} \\
		  \hline

		$(30, 15)$ & 9 & $-$ & $-$ & $-$ & 805.45 & \textbf{519.78} \\
		& 10 & $-$ & $-$ & $-$ & 739.97 & \textbf{575.34}  \\
		& 11 & $-$ & $-$ & $-$ & 900.71 & \textbf{518.03} \\
		& 12 & $-$ & $-$ & $-$ & 901.42 &  \textbf{551.97} \\
		\hline
		$(40, 20)$ & 13 & $-$ & $-$ & $-$ & 1057.05 & \textbf{719.45} \\
		& 14 & $-$ & $-$ & $-$ & 971.35 & \textbf{689.81} \\
		& 15 & $-$ & $-$ & $-$ & 1291.54 & \textbf{994.80} \\
		& 16 & $-$ & $-$ & $-$ & 1012.20 & \textbf{771.91} \\
		\hline 
	\end{tabular}
\end{table}

Fig.~\ref{fig:training curve DRL} illustrates the average sum MSE cost during the training of the proposed SUDO-DRL algorithm, both with and without the pre-training stage, and compares it with the benchmark PPO algorithm under a 40-device-20-channel system setting. Notably, other off-policy benchmarks, such as DDPG-related algorithms, fail to converge in this large-scale setup, further highlighting the robustness of the proposed method.
The results demonstrate that the SUDO-DRL algorithm with pre-training significantly outperforms PPO, reducing the average cost by approximately \(35\%\). Additionally, the pre-training stage enables SUDO-DRL to converge in fewer episodes, achieving over \(40\%\) faster convergence compared to the variant without pre-training.
Moreover, the pre-training-based SUDO-DRL achieves a lower average cost compared to one without pre-training. This indicates that the pre-training phase effectively guides the policy towards a better initialization point, ultimately improving performance and stability during formal training.

In Table~\ref{tab:test result}, we examine the performance of the SUDO-DRL algorithm and benchmark algorithms. The performance is evaluated based on the empirical average MSE over 20,000-step simulations under 16 different system settings (i.e., Para. 1-16). These settings include parameters of dynamic processes for remote estimation and wireless channel statistics, i.e., $\mathbf{A}_{n}$, $\mathbf{C}_{n}$, and $q^{1}_{n,m}, \dots, q^{\bar{g}}_{n,m}$, as well as different system scales, i.e., $(N, M)$.
We observe that DDPG only works for the 10-device-5-channel system setting, while SE-DDPG and MRII-DDPG can converge up to 20-device-10-channel systems. However, both PPO and SUDO-DRL can handle large-scale systems with 40-device-20-channel settings. SUDO-DRL consistently achieves a 25\%-40\% reduction in average MSE compared to PPO, with the performance gap increasing as the system scale grows.
In particular, for small-scale 10-device-5-channel systems, we find that SUDO-DRL achieves performance comparable to advanced off-policy methods such as SE-DDPG and MRII-DDPG but is generally slightly worse. This is because off-policy methods are better suited for converging to optimal solutions in small-scale systems.

We also evaluate the effectiveness of the proposed structural property evaluation scheme during the training of SUDO-DRL, i.e., CM and CC scores for the critic NN and AM scores for the actor NN, as shown in Figs.~\ref{fig: monotonicity score curve DRL}, \ref{fig: convexity score curve DRL}, and \ref{fig: actor score curve DRL}.
For the critic NN monotonicity, we observe that SUDO-DRL achieves a full score (i.e., 100) very quickly, while PPO consistently reaches the full score only after 1000 episodes. For the critic NN convexity, SUDO-DRL guarantees a full score after 200 episodes, whereas PPO remains below 80 until the end of training.
For the actor NN monotonicity, SUDO-DRL ensures the property is satisfied after 2000 episodes, whereas PPO achieves less than 75 and shows no further improvement during training. These results indicate that PPO struggles to fully exploit the structural properties of the optimal policy, particularly in terms of critic convexity and actor monotonicity. This limitation is likely a key reason why SUDO-DRL outperforms PPO.

\begin{figure}[t]
    \centering
        \includegraphics[width=0.9\linewidth]{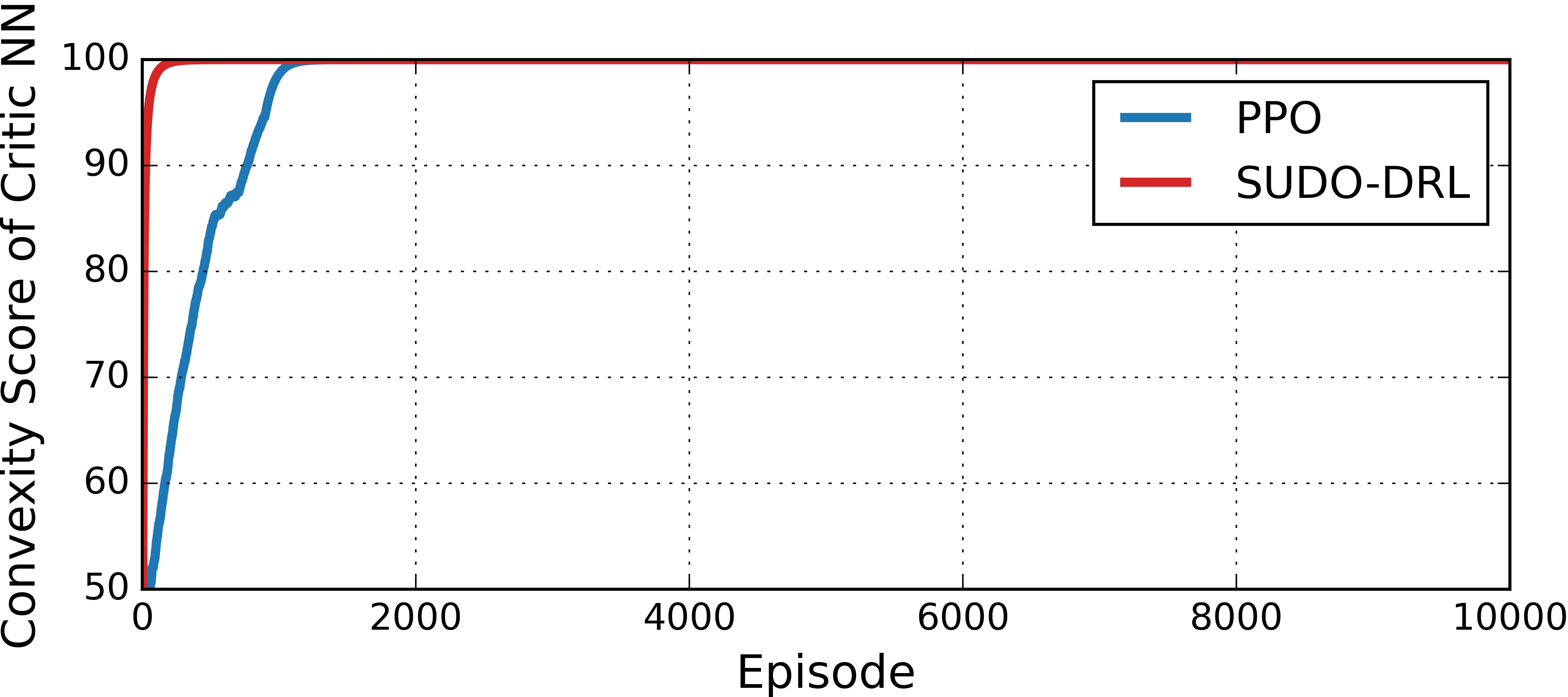}
        \caption{Critic monotonicity (CM) score during training with $N\!\!=\!40, M\!\!=\!20$.}
        \label{fig: monotonicity score curve DRL}
\end{figure}

\begin{figure}[t]
    \centering
        \includegraphics[width=0.9\linewidth]{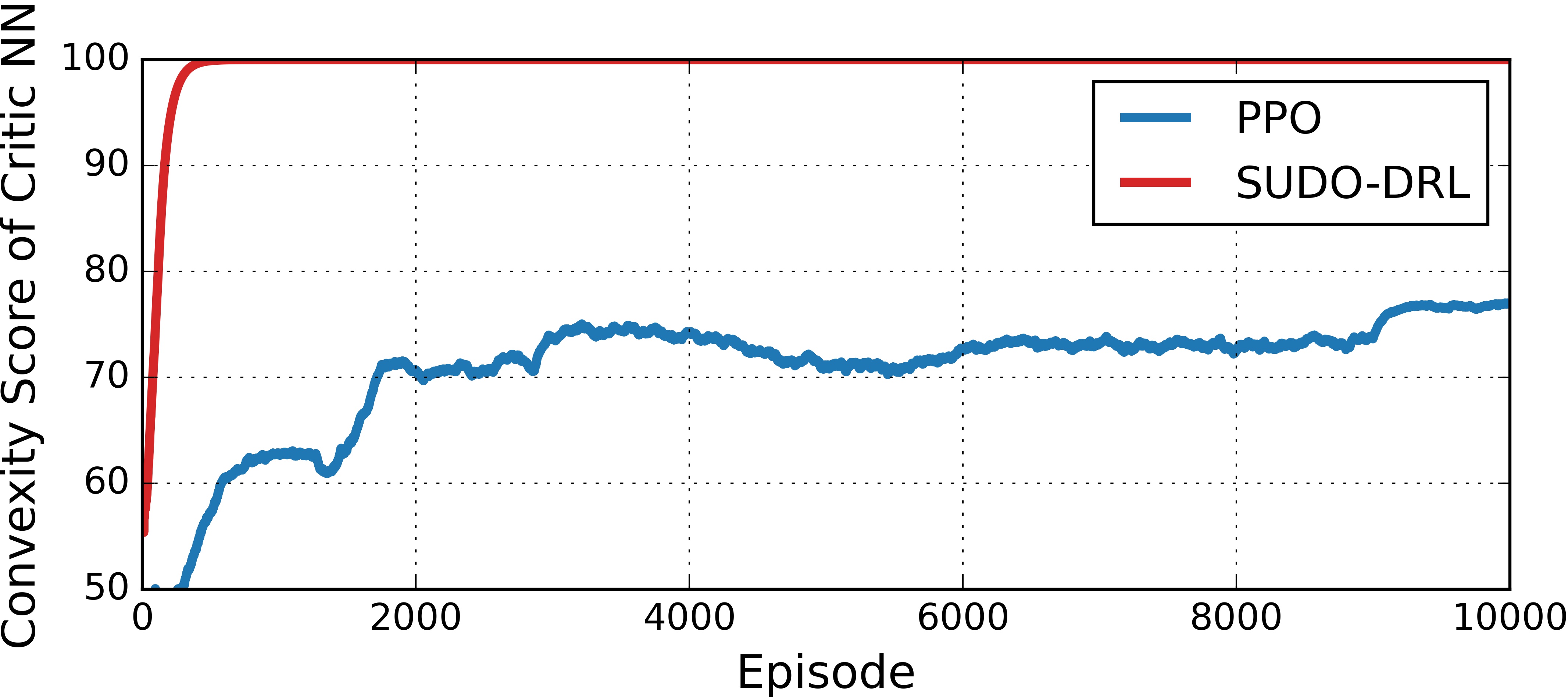}
        \caption{Critic convexity (CC) score during training with $N\!\!=\!40, M\!\!=\!20$.}
        \label{fig: convexity score curve DRL}
\end{figure}

\begin{figure}[t]
    \centering
        \includegraphics[width=0.9\linewidth]{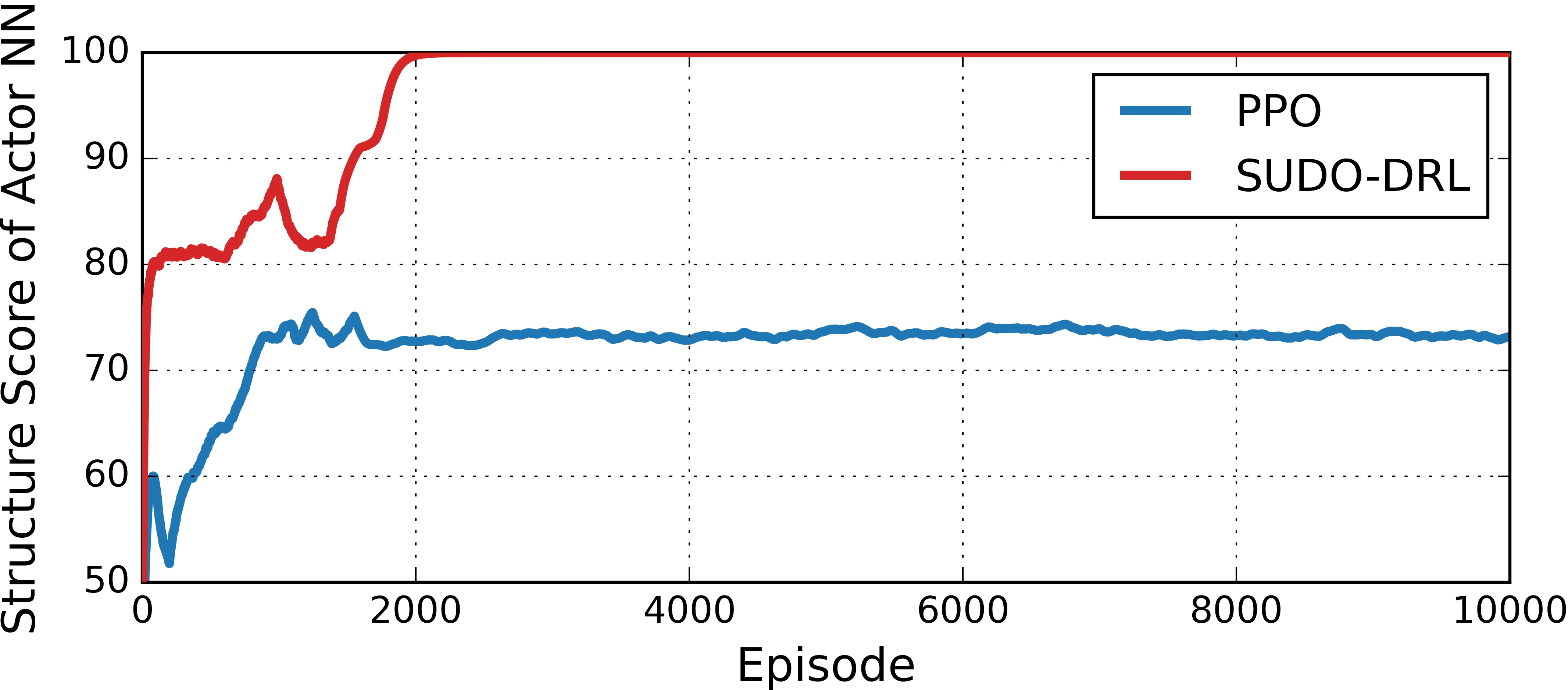}
        \caption{Actor monotonicity (AM) score during training with $N\!\!=\!40, M\!\!=\!20$.}
        \label{fig: actor score curve DRL}
\end{figure}

\section{Conclusion} \label{sec: conclusion}
We have derived key structural properties of the optimal solution to the goal-oriented scheduling problem, establishing monotonicity and asymptotic convexity for the optimal value function and policy. Leveraging these insights, we have developed SUDO-DRL, a hybrid algorithm combining on-policy stability and off-policy efficiency. SUDO-DRL has achieved up to 45\% performance improvement and 40\% faster convergence compared to state-of-the-art methods, while scaling effectively in large systems where other approaches fail. Our work has demonstrated the potential of SUDO-DRL to advance goal-oriented communications.
Future directions include exploring additional structural properties to further enhance the theoretical framework and extending SUDO-DRL to address comprehensive resource allocation problems, such as power allocation and advanced multiple access schemes like NOMA, to broaden its capabilities in goal-oriented communication scheduling.

\appendices
\section{Proof of Lemma~\ref{lemma: cost convex}} \label{proof: cost convexity}
The convexity of the overall cost function holds immediately if the individual cost function does. Therefore, we only need to prove the inequality~\eqref{ineq: each cost convexity}, which is sufficient to establish that  
\begin{equation}
    \operatorname{Tr} (h^{d}(\bar{\mathbf{P}})) + \operatorname{Tr} (h^{d+2}(\bar{\mathbf{P}})) \geq 2 \operatorname{Tr} (h^{d+1}(\bar{\mathbf{P}})).
    \label{ineq: cost P convexity 2}
\end{equation}

To proceed, we derive some linear algebra properties related to \(\bar{\mathbf{P}}\). Based on the properties of a stabilized Kalman filter~\cite{shi2012schedulin}, we have
\begin{equation}
    \bar{\mathbf{P}} = \mathbf{A} \bar{\mathbf{P}} \mathbf{A}^{\top} + \mathbf{W} - \mathbf{K},
\end{equation}
where  
\begin{equation}
\begin{aligned}
    \mathbf{K} = (\mathbf{A} \bar{\mathbf{P}} \mathbf{A}^{\top} + \mathbf{W}) \mathbf{C}^\top  & \left[\mathbf{C} (\mathbf{A} \bar{\mathbf{P}} \mathbf{A}^{\top} +  \mathbf{W}) \mathbf{C}^{\top} + \mathbf{V}\right]^{-1} \\
    &\times \mathbf{C} \left( \mathbf{A} \bar{\mathbf{P}} \mathbf{A}^{\top} + \mathbf{W} \right)^{\top},
\end{aligned}
\end{equation}
and \(\mathbf{K}\) is symmetric and positive definite.  
Thus, we have:  
\begin{equation}
    \bar{\mathbf{P}} - \mathbf{A} \bar{\mathbf{P}} \mathbf{A}^{\top} = \mathbf{W} - \mathbf{K}.
    \label{eq: stable P}
\end{equation}
From~\eqref{eq: stable P}, we derive:  
\begin{equation}
    \mathbf{A}^{2} \bar{\mathbf{P}} \mathbf{A}^{2^{\top}}  + \mathbf{A} \mathbf{W} \mathbf{A}^{\top} = \mathbf{A} \bar{\mathbf{P}} \mathbf{A}^{\top} + \mathbf{A} \mathbf{K} \mathbf{A}^{\top}.
    \label{eq: stable P 2}
\end{equation}

Next, we express \(\mathbf{A}\) in its Jordan normal form:  
\begin{equation}
    \mathbf{A} = \mathbf{F} \mathbf{J} \mathbf{F}^{-1},
    \label{eq: A jordan form}
\end{equation}
where \(\mathbf{J}\) is a block-diagonal matrix composed of Jordan blocks. Each block is a square, triangular matrix with a single eigenvalue of \(\mathbf{A}\) along its diagonal, ones on the superdiagonal, and zeros elsewhere.  

Now, returning to prove~\eqref{ineq: cost P convexity 2}, we proceed as follows:  
\begin{align}
    & \operatorname{Tr} (h^{d}(\bar{\mathbf{P}})) + \operatorname{Tr} (h^{d+2}(\bar{\mathbf{P}})) - 2 \operatorname{Tr} (h^{d+1}(\bar{\mathbf{P}})) \\
    &= \operatorname{Tr} \left[\mathbf{A}^{d} \!\left(\mathbf{A}^{2} \bar{\mathbf{P}} \mathbf{A}^{2^{\top}} + \bar{\mathbf{P}} - 2\mathbf{A} \bar{\mathbf{P}} \mathbf{A}^{\top} + \mathbf{A} \mathbf{W} \mathbf{A}^{\top} - \mathbf{W}\right) \mathbf{A}^{d^{\top}} \right] \\
    &= \operatorname{Tr} \left[\mathbf{A}^{d} \left( \mathbf{A} \mathbf{K} \mathbf{A}^{\top} - \mathbf{K} \right) \mathbf{A}^{d^{\top}} \right],
    \label{eq:remote_estimation_proof_simplify}
\end{align}
where the first equality uses \(h(\mathbf{X}) = \mathbf{A} \mathbf{X} \mathbf{A}^{\top} + \mathbf{W}\) and \(h^{\delta+1}(\cdot) = h(h^{\delta}(\cdot))\). The second equality substitutes~\eqref{eq: stable P} and~\eqref{eq: stable P 2}. 

After further simplification with $\mathbf{A}$'s Jordan form,~\eqref{eq:remote_estimation_proof_simplify} is equivalent to:  
\[
y(d+1) - y(d),
\]
where  
\begin{equation}
    y(d) \triangleq \operatorname{Tr}\left[\mathbf{F} \mathbf{J}^{d} \sqrt{\mathbf{K}} \left(\mathbf{F} \mathbf{J}^{d} \sqrt{\mathbf{K}}\right)^{\top}\right].
\end{equation}

Clearly, since \(\mathbf{A}\) has \(\bar{r}\) eigenvalues, \(y(d)\) can be expressed as the summation of \(\bar{r}\) terms, each with the form \(\lambda_{r}^{2d} \times (\text{polynomial in } d)\), where \(\lambda_{r}\) is the \(r\)th eigenvalue.  

The dominant term corresponds to the largest eigenvalue \(\lambda^{*}\) with a non-zero polynomial coefficient with a sufficiently large \(d\). As \(y(d)\) is strictly non-negative, the associated polynomial term is positive for large \(d\).  

Thus, \(y(d+1) - y(d) > 0\) for \(d \gg 1\), completing the proof.  

\section{Proof of Theorem~\ref{theo: convexity, 2-1}} \label{proof: convexity, 2-1}
Before proceeding further, we need a technical lemma regarding the property of the value function during value iteration.

During the conventional value iteration method, which theoretically guarantees the optimality of convergence~\cite{puterman1990markov}, we denote $\upsilon^{k}(\mathbf{s}) \in \mathcal{V}$ and $\upsilon^{0}(\mathbf{s})\in \mathcal{V}$ as the $k$-th iteration and the initial value function, respectively, where $\mathcal{V}$ is the measurable function set, i.e., $\mathcal{V}: \mathcal{S} \to \mathbb{R}$.
To execute the value iteration, we define a Bellman operation as follows:
\begin{align}
    \upsilon^{k+1} (\mathbf{s}) = c(\mathbf{s}) + \gamma \min_{\mathbf{a} \in \mathcal{A}} \left[ \sum_{\mathbf{s}^{+}} \operatorname{P}(\mathbf{s}^{+}|\mathbf{s}, \mathbf{a}) \upsilon^{k} (\mathbf{s}^{+})\right].
    \label{eq: bellman operator}
\end{align}
The optimality and convergence of value iterations are given below.
\begin{lemma}[Optimality and convergence of value iteration~\cite{puterman1990markov}]\label{lemma:converge of V*}
    If there exists an optimal policy, then the sequence $\{\upsilon^{k}(\cdot)\}$ produced by the Bellman operation~\eqref{eq: bellman operator} converges in norm to the unique optimal value function $\upsilon^{*}(\cdot) \in \mathcal{V}$, i.e., 
    \begin{equation}
        \lim_{k \to \infty} \upsilon^{k}(\cdot) = \upsilon^{*}(\cdot),
    \end{equation}
    for all initial value function $\upsilon^{0}(\cdot) \in \mathcal{V}$.
\end{lemma}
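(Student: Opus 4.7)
The plan is to prove the lemma by recognizing the Bellman operator as a contraction mapping on an appropriate complete metric space and then invoking the Banach fixed-point theorem. Define the operator $T: \mathcal{V} \to \mathcal{V}$ by $(T\upsilon)(\mathbf{s}) \triangleq c(\mathbf{s}) + \gamma \min_{\mathbf{a} \in \mathcal{A}} \sum_{\mathbf{s}^+} \operatorname{P}(\mathbf{s}^+|\mathbf{s},\mathbf{a}) \upsilon(\mathbf{s}^+)$, so that the value iteration~\eqref{eq: bellman operator} reads $\upsilon^{k+1} = T\upsilon^{k}$. The lemma then reduces to showing $T$ has a unique fixed point in $\mathcal{V}$ and that $T^{k}\upsilon^{0} \to \upsilon^{*}$ for every initialization $\upsilon^{0} \in \mathcal{V}$.

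The key step is the $\gamma$-contraction property. For any $\upsilon_{1},\upsilon_{2} \in \mathcal{V}$ and any state $\mathbf{s}$, I would use the elementary inequality $|\min_{\mathbf{a}} f(\mathbf{a}) - \min_{\mathbf{a}} g(\mathbf{a})| \leq \max_{\mathbf{a}} |f(\mathbf{a}) - g(\mathbf{a})|$ applied to $f(\mathbf{a}) = \sum_{\mathbf{s}^+} \operatorname{P}(\mathbf{s}^+|\mathbf{s},\mathbf{a})\upsilon_{1}(\mathbf{s}^+)$ and the analogous $g$, then bound the difference by $\|\upsilon_{1}-\upsilon_{2}\|$ times the total probability mass, which is 1. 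This yields
\begin{equation}
    \|T\upsilon_{1} - T\upsilon_{2}\| \leq \gamma \|\upsilon_{1} - \upsilon_{2}\|,
\end{equation}
so $T$ is a contraction with modulus $\gamma \in (0,1)$. Banach's theorem then delivers a unique fixed point $\upsilon^{\infty} \in \mathcal{V}$ with $T^{k}\upsilon^{0} \to \upsilon^{\infty}$ geometrically at rate $\gamma^{k}$ from any $\upsilon^{0} \in \mathcal{V}$. Finally, I would identify $\upsilon^{\infty} = \upsilon^{*}$ by noting that $\upsilon^{\infty}$ satisfies the Bellman optimality equation $\upsilon^{\infty} = T\upsilon^{\infty}$, which from~\eqref{eq:Bellman_Q}--\eqref{eq: optimal state-value function} is exactly the equation characterizing the optimal state-value function under the assumed existence of an optimal stationary policy.

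The main obstacle is that our MDP has a countably infinite AoI state space and the cost $c(\mathbf{s}) = \sum_{n} c_{n}(\delta_{n,t})$ is unbounded because $c_{n}(\delta)$ grows with $\delta$ (e.g., exponentially in the remote estimation example). Consequently the plain sup-norm may make $\mathcal{V}$ contain functions of infinite norm and the standard contraction argument above does not directly apply. I would address this by restricting to a weighted sup-norm space with weight $w(\mathbf{s})$ chosen so that $c(\mathbf{s})/w(\mathbf{s})$ is bounded and $\sum_{\mathbf{s}^+} P(\mathbf{s}^+|\mathbf{s},\mathbf{a}) w(\mathbf{s}^+) \leq \kappa w(\mathbf{s})$ for some $\kappa$ with $\gamma \kappa < 1$; such a weight exists for the AoI dynamics~\eqref{eq: transition p 2} because AoI increases by at most one per step, so polynomial or suitably tempered exponential weights suffice. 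Under this weighted norm $T$ is a $(\gamma \kappa)$-contraction on a complete space, and the remainder of the argument proceeds as before. Alternatively, one can appeal directly to the standard discounted-MDP convergence theorem~\cite{puterman2014markov} under the existence assumption, which is exactly how the lemma is phrased (stated as a known result).
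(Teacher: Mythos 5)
The paper does not actually prove this lemma: it is imported verbatim as a known result with a citation to Puterman, and the appendix simply uses it as a black box to propagate properties of $\upsilon^{0}$ through the Bellman operation. Your contraction-mapping argument is the standard textbook proof underlying that citation, and the core steps—the inequality $|\min_{\mathbf{a}} f - \min_{\mathbf{a}} g| \leq \max_{\mathbf{a}} |f-g|$, the modulus-$\gamma$ contraction, Banach's theorem, and the identification of the fixed point with $\upsilon^{*}$ via the optimality equation—are all correct for bounded costs. So your route is sound and, if anything, more explicit than what the paper offers.

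The one place you are too quick is the claim that a suitable weight $w$ with $\sum_{\mathbf{s}^{+}} \operatorname{P}(\mathbf{s}^{+}|\mathbf{s},\mathbf{a})\, w(\mathbf{s}^{+}) \leq \kappa\, w(\mathbf{s})$ and $\gamma\kappa < 1$ "exists because AoI increases by at most one per step." That argument works when $c$ grows polynomially in $\delta$, but for the paper's own running example the cost $c_{n}(\delta) = \operatorname{Tr}\left(h_{n}^{\delta}(\bar{\mathbf{P}}_{n})\right)$ grows geometrically at rate $\rho(\mathbf{A}_{n})^{2} > 1$. Any weight dominating $c$ must then satisfy $w(\delta+1)/w(\delta) \geq \rho(\mathbf{A}_{n})^{2}$ along the unscheduled branch (which occurs with probability one when $a_{n}=0$), forcing $\kappa \geq \rho(\mathbf{A}_{n})^{2}$ and hence requiring $\gamma\,\rho(\mathbf{A}_{n})^{2} < 1$. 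With the paper's parameters ($\gamma = 0.99$, spectral radii up to $1.3$) this product can exceed one, so the weighted sup-norm contraction does not hold unconditionally; finiteness of $\upsilon^{*}$ itself hinges on stability conditions of the type studied in the cited remote-estimation literature. Your fallback—invoking the general discounted-MDP convergence theory (e.g., monotone value iteration for nonnegative costs) under the lemma's explicit hypothesis that an optimal policy exists—is the honest way to close this, and is effectively what the paper's citation is doing.
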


At the $k$th iteration of the value iteration, the action derived based on the previous value function $\upsilon^{k-1}(\cdot)$ is defined as
    \begin{align}
        \mathbf{a}^{k} \triangleq \pi^{k}(\mathbf{s}) = c(\mathbf{s}) + \gamma \mathop{\arg\max}_{\mathbf{a}\in \mathcal{A}}  \left[ \sum_{\mathbf{s}^{+}} \operatorname{P}(\mathbf{s}^{+}|\mathbf{s}, \mathbf{a})\upsilon^{k-1}(\mathbf{s}^{+})\right],
        \label{eq:optimal action V_t}
    \end{align}
where $\pi^{k}(\cdot)$ is the corresponding policy at the $k$th iteration.
    
    In order to prove Theorem~\ref{theo: convexity, 2-1}, we define a $Z$-function at $k$th value iteration similar to the Q-function, $Z(\mathbf{s}, \mathbf{a}; \upsilon^{k}): \mathcal{S} \times \mathcal{A} \times \mathcal{V} \to \mathbb{R}$:
    \begin{equation}
        Z(\mathbf{s}, \mathbf{a}; \upsilon^{k})= c(\mathbf{s}) + \gamma \sum_{\mathbf{s}^{+}} \operatorname{P}(\mathbf{s}^{+}|\mathbf{s}, \mathbf{a}) \upsilon^{k} (\mathbf{s}^{+}),
        \label{eq: Z-function}
    \end{equation}
    which can be derived to the same equation as~\eqref{eq: Q i -i} but replacing $\upsilon^{*}(\mathbf{s}^{+})$ with $\upsilon^{k}(\mathbf{s}^{+})$.
    From~\eqref{eq: bellman operator},~\eqref{eq:optimal action V_t}, and~\eqref{eq: Z-function}, the relationship of the value function and the $Z$-function is given as
    \begin{equation}
        \upsilon^{k+1} (\mathbf{s}) = Z(\mathbf{s}, \mathbf{a}^{k}; \upsilon^{k}) \leq Z(\mathbf{s}, \mathbf{a}; \upsilon^{k}). 
        \label{ineq:V&Z}
    \end{equation}
    
    Based on Lemma~\ref{lemma:converge of V*}, the convergence of the optimal V function does not depend on the initial value function $\upsilon^{0}(\mathbf{s})$ and the properties of $\upsilon^{0}(\mathbf{s})$ are propagated by the Bellman operation~\eqref{eq: bellman operator} to $\upsilon^{*}(\mathbf{s})$.
    So, to prove Theorem~\ref{theo: convexity, 2-1} from Lemma~\ref{lemma:converge of V*}, it is sufficient to prove that $\upsilon^{1}(\mathbf{s})$ holds the convexity i.e.,
    \begin{equation}
        \alpha \upsilon^{1}(\mathbf{s}''_{\text{AoI}}) + (1-\alpha) \upsilon^{1}(\mathbf{s}'_{\text{AoI}}) \geq \upsilon^{1}(\mathbf{s}),
        \label{eq: convexity, 1}
    \end{equation}
    under the assumption that $\upsilon^{0}(\mathbf{s})$ is convex, i.e.,
    \begin{equation}
        \alpha \upsilon^{0}(\mathbf{s}''_{\text{AoI}}) + (1-\alpha) \upsilon^{0}(\mathbf{s}'_{\text{AoI}}) \geq \upsilon^{0}(\mathbf{s}),
        \label{eq: convexity, 0}
    \end{equation}
    where $\mathbf{s}''_{\text{AoI}} = \left(\left(\delta''_{i},\delta_{j}\right), \mathbf{G}\right)$, $\mathbf{s}'_{\text{AoI}} = \left(\left(\delta'_{i},\delta_{j}\right), \mathbf{G}\right)$, $\mathbf{s} = \left(\left(\delta_{i},\delta_{j}\right), \mathbf{G}\right)$, and $\alpha\delta''_{i} + (1-\alpha)\delta'_{i} = \delta_{i}$ for $\alpha \in [0, 1]$ and $\delta'_{i} \geq \delta_{i} \geq \delta''_{i}$.

    Since the channel states are considered to be i.i.d. and the transition probability $\operatorname{P}(\mathbf{G}^{+})$ is dependent only on the probability distribution~\eqref{eq: channel distribution}, the channel states are constant during each iteration.
    Thus, we write the state as $\mathbf{s} = \bm{\delta}$, $\mathbf{s}'_{\text{AoI}} = \bm{\delta}'$, and $\mathbf{s}''_{\text{AoI}} = \bm{\delta}''$ for the writing simplicity.
    
    Thus, we prove~\eqref{eq: convexity, 1} by cases with different optimal actions of states for the first iteration, i.e., $\check{\mathbf{a}}^{1} = {\pi^{1}} (\mathbf{s}''_{\text{AoI}}), \hat{\mathbf{a}}^{1} = {\pi^{1}} (\mathbf{s}'_{\text{AoI}})$, in the following.
    \begin{itemize}
        \item[(a)] If $\check{a}^{1}_{i} = \hat{a}^{1}_{i} = 1$, then
        \begin{align}
            &\!\!\!\!\!\!\!\!\!\!\! \alpha \upsilon^{1}(\bm{\delta}'') + (1-\alpha) \upsilon^{1}(\bm{\delta}') - \upsilon^{1}(\bm{\delta}) \\
            &\!\!\!\!\!\!\!\!\!\!\! \geq \alpha Z(\bm{\delta}'', \check{\mathbf{a}}^{1}; \upsilon^{0}) + (1-\alpha) Z(\bm{\delta}', \hat{\mathbf{a}}^{1}; \upsilon^{0}) - Z(\bm{\delta}, \check{\mathbf{a}}^{1}; \upsilon^{0}) \\
            &\!\!\!\!\!\!\!\!\!\!\! = \left[\alpha c(\bm{\delta}'') + (1-\alpha) c(\bm{\delta}') - c(\bm{\delta}) \right] \\
            &\!\!\!\!\!\!\!\! + \! \gamma (1\!-\!\psi_{i,1}) \big[ \alpha \upsilon^{0}(1, \!\delta_{j} \!+\!\! 1) \!+\! (1\!-\!\alpha) \upsilon^{0}(1, \!\delta_{j} \!+\!\! 1)  \!\!\!\! \\
            & \qquad \qquad \!-\! \upsilon^{0}(1, \delta_{j} \!+\!\! 1)  \big] \\ 
            &\!\!\!\!\!\!\!\! + \! \gamma \psi_{i,1} \big[ \alpha \upsilon^{0}(\delta''_{i} \!+\! 1, \!\delta_{j} \!+\! 1) \!+\! (1\!-\!\!\alpha) \upsilon^{0}(\delta'_{i} \!+\! 1, \delta_{j} \!+\! 1) \!\! \\
            & \qquad \qquad - \upsilon^{0}(\delta_{i} + 1, \delta_{j} + 1) \big] \\
            &\!\!\!\!\!\!\!\! \geq 0,
        \end{align}
        where the first inequality is derived from and~\eqref{ineq:V&Z}, and the first equality is from~\eqref{eq: Z-function}, and the last inequality is from~\eqref{ineq: def cost convexity} and~\eqref{eq: convexity, 0}.
        
        \item[(b)] If $\check{a}^{1}_{j} = \hat{a}^{1}_{j} = 1$, then
        \begin{align}
            &\!\!\!\!\!\!\!\!\!\! \alpha \upsilon^{1}(\bm{\delta}'') + (1-\alpha) \upsilon^{1}(\bm{\delta}') - \upsilon^{1}(\bm{\delta}) \\
            &\!\!\!\!\!\!\!\!\!\!\!\! \geq \alpha Z(\bm{\delta}'', \check{\mathbf{a}}^{1}; \upsilon^{0}) + (1-\alpha) Z(\bm{\delta}', \hat{\mathbf{a}}^{1}; \upsilon^{0}) - Z(\bm{\delta}, \check{\mathbf{a}}^{1}; \upsilon^{0}) \\
            &\!\!\!\!\!\!\!\!\!\! = \left[\alpha c(\bm{\delta}'') + (1-\alpha) c(\bm{\delta}') - c(\bm{\delta}) \right] \\
            &\!\!\!\!\!\!\!\!\!\! +\! \gamma (1\!-\!\psi_{j,1}) \big[ \alpha \upsilon^{0}(\delta''_{i} \!\!+\!\! 1, \!1) \!+\! (1\!-\!\alpha) \upsilon^{0}(\delta'_{i} \!+\!\! 1, \!1) 
            \\ 
            & \qquad - \upsilon^{0}(\delta_{i} \!+\!\! 1, \!1)  \big] \!\!\!\! \\
            &\!\!\!\!\!\!\!\!\!\! + \! \gamma \psi_{j,1} \big[ \alpha \upsilon^{0}(\delta''_{i} \!+\! 1, \!\delta_{j} \!+\! 1) \!+\! (1\!-\!\!\alpha) \upsilon^{0}(\delta'_{i} \!+\! 1, \delta_{j} \!+\! 1) \!\! \\
            & \qquad  - \upsilon^{0}(\delta_{i} + 1, \delta_{j} + 1) \big] \\
            &\!\!\!\!\!\!\!\!\!\! \geq 0,
        \end{align}
        where the last inequality is derived based on~\eqref{ineq: def cost convexity},~\eqref{eq: convexity, 0}, and $\alpha(\delta''_{i} \!+\! 1) + (1-\alpha)(\delta'_{i} \!+\! 1) = \delta_{i} \!+\! 1$.
        
        \item[(c)] If $\check{a}^{1}_{j} = 1,  \hat{a}^{1}_{i} = 1$, then
        \begin{align}
            &\!\!\!\!\!\!\!\!\!\!\!\!\! \alpha \upsilon^{1}(\bm{\delta}'') + (1-\alpha) \upsilon^{1}(\bm{\delta}') - \upsilon^{1}(\bm{\delta}) \\
            &\!\!\!\!\!\!\!\!\!\!\!\!\! \geq \alpha Z(\bm{\delta}'', \check{\mathbf{a}}^{1}; \upsilon^{0}) + (1-\alpha) Z(\bm{\delta}', \hat{\mathbf{a}}^{1}; \upsilon^{0}) \\
            &\!\!\!\!\!\!\!\! - \alpha Z(\bm{\delta}, \check{\mathbf{a}}^{1}; \upsilon^{0}) -  (1-\alpha) Z(\bm{\delta}, \hat{\mathbf{a}}^{1}; \upsilon^{0})\\
            &\!\!\!\!\!\!\!\!\!\!\!\!\! = \left[\alpha c(\bm{\delta}'') + (1-\alpha) c(\bm{\delta}') - c(\bm{\delta}) \right] \\
            &\!\!\!\!\!\!\!\!\!\! +\! \alpha \gamma \big[(1\!-\!\psi_{j,1}) \upsilon^{0}(\delta''_{i}\!+\!1, 1) \!+\! (\psi_{i,1}\!-\!\psi_{j,1}) \upsilon^{0}(\delta_{i} \!+\! 1, \delta_{j} \!+\! 1) \!\! \\
            &\!\!\!\!\!\!\!\!\!\! - (1\!-\!\psi_{j,1}) \upsilon^{0}( \delta_{i} \!+\! 1, 1) \!-\! (\psi_{i,1}\!-\!\psi_{j,1})\upsilon^{0}(\delta''_{i} + 1, \delta_{j} + 1)\big] \\
            &\!\!\!\!\!\!\!\!\!\! +\! \gamma \psi_{i,1} \big[ \alpha \upsilon^{0}(\delta''_{i} \!+\! 1, \!\delta_{j} \!+\! 1) \!+\! (1\!-\!\!\alpha) \upsilon^{0}(\delta'_{i} \!+\! 1, \delta_{j} \!+\! 1) \!\! \\
            & \qquad \qquad - \upsilon^{0}(\delta_{i} + 1, \delta_{j} + 1) \big] \\
            & \!\!\!\!\!\!\!\!\!\!\!\!\! \geq 0
        \end{align}
        where the last inequality is based on~\eqref{ineq: def cost convexity},~\eqref{eq: convexity, 0}, and Lemma 3 in~\cite{chen2022seDRL}.

        \item[(d)] Based on Theorem 2 in~\cite{chen2022seDRL}, the case that $\check{a}^{1}_{i} = 1,  \hat{a}^{1}_{j} = 1$ cannot exist at all iteration during the value iteration.
    \end{itemize}
    Therefore, the Bellman operation~\eqref{eq: bellman operator} preserve the convexity of the value function $\upsilon^{0}(\mathbf{s})$ to the optimal V function $\upsilon^{*}(\mathbf{s})$.

\section{Proof of Theorem~\ref{theo: convexity, N-M}} \label{proof: convexity, N-M}
To prove Theorem~\ref{theo: convexity, N-M}, we require the lemma showing the optimal V function has the asymptotic monotonicity as below.
    \begin{lemma}[Asymptotic monotonicity of the optimal V function w.r.t. AoI state~\cite{chen2022seDRL}]\label{lemma:much larger}
    For states $\mathbf{s} = (\bm{\delta}, \mathbf{G})$ and $\mathbf{s}'_{\text{AoI}} = (\bm{\delta}'_{(i)}, \mathbf{G}) $, where $\delta'_{i} \gg \delta_{i}$, the optimal V function holds the inequality:
    \begin{equation}
        \upsilon^{*}(\mathbf{s}'_{\text{AoI}}) \gg \upsilon^{*}(\mathbf{s}).
    \end{equation}
    \end{lemma}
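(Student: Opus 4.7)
The plan is to lower-bound $\upsilon^*(\mathbf{s}'_{\text{AoI}})$ by a quantity that diverges in $\delta'_i$, while the reference value $\upsilon^*(\mathbf{s})$ remains a fixed finite constant. From the Bellman optimality equation~\eqref{eq:Bellman_Q} combined with~\eqref{eq: optimal state-value function} and the non-negativity of the cost function defined in Section~\ref{subsec: metric}, I would immediately extract the instantaneous-cost lower bound
\begin{equation}
    \upsilon^*(\mathbf{s}'_{\text{AoI}}) \;\geq\; c(\mathbf{s}'_{\text{AoI}}) \;\geq\; c_i(\delta'_i),
\end{equation}
since the discounted expected continuation term $\gamma \sum_{\mathbf{s}^+}\operatorname{P}(\mathbf{s}^+|\mathbf{s}'_{\text{AoI}},\mathbf{a}^*)\upsilon^*(\mathbf{s}^+)$ is non-negative and $c(\mathbf{s}'_{\text{AoI}}) = \sum_n c_n(\delta'_n) \geq c_i(\delta'_i)$ under the additive cost structure.

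Next I would invoke the unbounded growth of the per-device cost: for the canonical remote estimation setting of Example~\ref{example: remote estimation}, the Jordan-form argument used in Appendix~\ref{proof: cost convexity} shows that $c_i(\delta) = \operatorname{Tr}(h_i^\delta(\bar{\mathbf{P}}_i))$ grows at least geometrically in $\delta$ whenever $\mathbf{A}_i$ has spectral radius strictly greater than one. More generally, all that is needed is unboundedness of the non-decreasing $c_i$ on $\mathbb{N}$, a mild additional assumption on the cost function in Section~\ref{subsec: metric}. This gives $c_i(\delta'_i) \to \infty$ as $\delta'_i \to \infty$, so the lower bound on $\upsilon^*(\mathbf{s}'_{\text{AoI}})$ diverges.

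Combining these two observations with Lemma~\ref{lemma:monotone V}, which already guarantees $\upsilon^*(\mathbf{s}'_{\text{AoI}}) \geq \upsilon^*(\mathbf{s})$ monotonically, I would finish the argument as follows: for the fixed reference state $\mathbf{s}$, $\upsilon^*(\mathbf{s})$ is a finite constant provided the discounted MDP is well-posed (typically requiring $\gamma$ small enough relative to the growth rate of $c_i$, which is implicitly assumed throughout the paper), so
\begin{equation}
    \upsilon^*(\mathbf{s}'_{\text{AoI}}) - \upsilon^*(\mathbf{s}) \;\geq\; c_i(\delta'_i) - \upsilon^*(\mathbf{s}) \;\longrightarrow\; \infty
\end{equation}
as $\delta'_i - \delta_i \to \infty$, which is precisely the sense of ``$\gg$'' in the lemma.

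The main obstacle is making the asymptotic notation ``$\gg$'' rigorous and justifying the finiteness of $\upsilon^*(\mathbf{s})$. A clean resolution is to state the conclusion quantitatively as: for every $M > 0$ there exists $D$ such that $\delta'_i - \delta_i \geq D$ implies $\upsilon^*(\mathbf{s}'_{\text{AoI}}) - \upsilon^*(\mathbf{s}) \geq M$. Under this formulation and the standing well-posedness assumption, the proof collapses to the two-line Bellman lower bound above combined with Lemma~\ref{lemma:monotone V}, and no intricate policy-coupling or transition-path analysis is required beyond the structural properties already established in the paper.
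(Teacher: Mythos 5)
Your proposal cannot be checked against an in-paper argument because the paper does not actually prove Lemma~\ref{lemma:much larger}: it is imported verbatim from the authors' earlier work~\cite{chen2022seDRL} and used as a black box in Appendices~C, D, E and F. Judged on its own merits, your argument is sound and pleasantly elementary. The chain $\upsilon^*(\mathbf{s}'_{\text{AoI}}) \geq c(\mathbf{s}'_{\text{AoI}}) \geq c_i(\delta'_i)$ follows immediately from the Bellman equation~\eqref{eq:Bellman_Q}, the positivity of the cost, and the resulting non-negativity of the continuation term, and it correctly reduces the lemma to the divergence of $c_i(\delta'_i)$ against a fixed finite $\upsilon^*(\mathbf{s})$. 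Your quantitative reading of ``$\gg$'' (for every $M$ there is a $D$ such that $\delta'_i - \delta_i \geq D$ forces the gap to exceed $M$) is also the right way to make the informal notation usable, and it matches how the lemma is invoked downstream, where it only ever serves to let one term dominate another.

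The two caveats you flag are real and worth keeping explicit, because each is genuinely necessary for the statement to be true rather than a convenience of your particular proof. First, if $c_i$ were bounded, the entire optimal V function would be bounded by $\sum_n \sup_\delta c_n(\delta)/(1-\gamma)$ and the lemma would be false; the paper only assumes the cost is non-decreasing, so unboundedness is an unstated hypothesis that your proof (and any proof) must add. In the running Example~\ref{example: remote estimation} it holds because the spectral radii are drawn from $(1,1.3)$, so $\operatorname{Tr}(h_i^\delta(\bar{\mathbf{P}}_i))$ grows geometrically. Second, finiteness of $\upsilon^*(\mathbf{s})$ at the reference state is not automatic under geometric cost growth: it requires something like $\gamma\,\rho(\mathbf{A}_i)^2 < 1$ for all $i$, a standard well-posedness condition in this literature (cf.~\cite{liu2022stability}) that the paper never states but implicitly relies on everywhere, since otherwise Problem~\ref{pro1} itself has infinite objective. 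With those two standing assumptions made explicit, your two-line Bellman bound plus Lemma~\ref{lemma:monotone V} is a complete proof, and no policy-coupling or value-iteration induction is needed.
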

    
    Then, similar to the proof of Theorem~\ref{theo: convexity, 2-1}, to prove Theorem~\ref{theo: convexity, N-M} based on Lemma~\ref{lemma:converge of V*}, it is sufficient to prove that $\upsilon^{1}(\mathbf{s})$ is asymptotically convex, i.e., 
    \begin{equation}
        \alpha \upsilon^{1}(\mathbf{s}''_{\text{AoI}}) + (1-\alpha) \upsilon^{1}(\mathbf{s}'_{\text{AoI}}) \geq \upsilon^{1}(\mathbf{s}),
        \label{eq: asymptotic convexity, 1}
    \end{equation}
    under the assumption that the initial value function $\upsilon^{0}(\mathbf{s})$ has the asymptotic convexity, i.e.,
    \begin{equation}
        \alpha \upsilon^{0}(\mathbf{s}''_{\text{AoI}}) + (1-\alpha) \upsilon^{0}(\mathbf{s}'_{\text{AoI}}) \geq \upsilon^{0}(\mathbf{s}),
        \label{eq: asymptotic convexity, 0}
    \end{equation}
    where $\mathbf{s} = (\bm{\delta}, \mathbf{G})$, $\mathbf{s}'_{\text{AoI}} = (\bm{\delta}'_{(i)}, \mathbf{G})$, $\mathbf{s}''_{\text{AoI}} = (\bm{\delta}''_{(i)}, \mathbf{G})$, and $\alpha\delta''_{i} + (1-\alpha)\delta'_{i} = \delta_{i}$ for $\alpha \in [0, 1]$ and $\delta'_{i} \geq \delta_{i} \gg \delta''_{i}$.

    Therefore, in the following, we also write the states as $\mathbf{s} = \bm{\delta}$, $\mathbf{s}'_{\text{AoI}} = \bm{\delta}'$, and $\mathbf{s}''_{\text{AoI}} = \bm{\delta}''$, and prove~\eqref{eq: asymptotic convexity, 1} by cases (a) and (b) with different optimal actions $\check{\mathbf{a}}^{1} = {\pi^{1}} (\bm{\delta}''), \hat{\mathbf{a}}^{1} = {\pi^{1}} (\bm{\delta}')$.
    For writing simplicity, we write the AoI state as $\bm{\delta} = (\delta_{i}, \bm{\delta}_{\backslash\!\{\!i\!\}})$ in the following.
    \begin{itemize}
        \item[(a)] If $\check{\mathbf{a}}^{1} = \hat{\mathbf{a}}^{1}$, then
        \begin{align}
            &\!\!\!\!\!\!\!\!\!\!\! \alpha \upsilon^{1}(\bm{\delta}'') + (1-\alpha) \upsilon^{1}(\bm{\delta}') - \upsilon^{1}(\bm{\delta}) 
            \\
            &\!\!\!\!\!\!\!\!\!\!\! \geq \alpha Z(\bm{\delta}'', \check{\mathbf{a}}^{1}; \upsilon^{0}) + (1-\alpha) Z(\bm{\delta}', \hat{\mathbf{a}}^{1}; \upsilon^{0}) - Z(\bm{\delta}, \check{\mathbf{a}}^{1}; \upsilon^{0}) 
            \\
            &\!\!\!\!\!\!\!\!\!\!\! = \left[\alpha c(\bm{\delta}'') + (1-\alpha) c(\bm{\delta}') - c(\bm{\delta}) \right] 
            \\
            &\!\!\!\!\!\!\!\! + \gamma \sum_{\bm{\delta}_{\backslash\!\{\!i\!\}}^{+}} \operatorname{P} (\bm{\delta}_{\backslash\!\{i\}}^{+}|\bm{\delta}_{\backslash\!\{i\}}, \!\mathbf{G}_{\backslash\!\{i\}}, \!\check{\mathbf{a}}^{1}_{\backslash\!\{i\}}) 
            \\
            &\!\!\!\!\!\! \times \Bigg[ \alpha \sum_{{\delta_{i}''}^{+}} \operatorname{P} ({\delta_{i}''}^{+}|\delta''_{i},\! \mathbf{g}_{i},\! \check{\mathbf{a}}^{1}_{i}) \upsilon^{0}({\bm{\delta}''}^{+}) 
            \\
            &\!\!\!\!\!\! + (1-\alpha) \sum_{{\delta_{i}'}^{+}} \operatorname{P} ({\delta_{i}'}^{+}|\delta'_{i},\! \mathbf{g}_{i},\! \hat{\mathbf{a}}^{1}_{i}) \upsilon^{0}({\bm{\delta}'}^{+}) 
            \\
            &\!\!\!\!\!\! - \sum_{\delta_{i}^{+}} \operatorname{P} (\delta_{i}^{+}|\delta_{i},\! \mathbf{g}_{i},\! \check{\mathbf{a}}^{1}_{i}) \upsilon^{0}({\bm{\delta}}^{+}) \Bigg] 
            \\
            & \!\!\!\!\!\!\!\!\!\!\! \geq 0
        \end{align}
        where the first inequality is from~\eqref{ineq:V&Z}, and the first equality is from~\eqref{eq: Z-function} and $\check{\mathbf{a}}^{1} = \hat{\mathbf{a}}^{1}$, and the final inequality is from~\eqref{ineq: def cost convexity} and the following equality
        \begin{align}
            & \operatorname{P} (\delta_{i}''+1|\delta''_{i}, \mathbf{g}_{i}, \check{\mathbf{a}}^{1}_{i})  \upsilon^{0}(\delta_{i}''+1, \bm{\delta}^{+}_{\backslash\!\{i\}}) \\
            & + \operatorname{P} (\delta_{i}'+1|\delta'_{i}, \mathbf{g}_{i}, \hat{\mathbf{a}}^{1}_{i}) \upsilon^{0}(\delta_{i}'+1, \bm{\delta}^{+}_{\backslash\!\{i\}}) \\
            & \geq \operatorname{P} (\delta_{i}+1|\delta_{i}, \mathbf{g}_{i}, \check{\mathbf{a}}^{1}_{i}) \upsilon^{0}(\delta_{i}+1, \bm{\delta}^{+}_{\backslash\!\{i\}}),
            \label{ineq: asymptotic convexity case a1}
        \end{align}
        and 
        \begin{align}
            & \operatorname{P} (\delta_{i}''=1|\delta''_{i}, \mathbf{g}_{i}, \check{\mathbf{a}}^{1}_{i})  \upsilon^{0}(1, \bm{\delta}^{+}_{\backslash\!\{i\}}) \\
            & + \operatorname{P} (\delta_{i}'=1|\delta'_{i}, \mathbf{g}_{i}, \hat{\mathbf{a}}^{1}_{i}) \upsilon^{0}(1, \bm{\delta}^{+}_{\backslash\!\{i\}}) \\
            & \geq \operatorname{P} (\delta_{i}=1|\delta_{i}, \mathbf{g}_{i}, \check{\mathbf{a}}^{1}_{i}) \upsilon^{0}(1, \bm{\delta}^{+}_{\backslash\!\{i\}}),
            \label{ineq: asymptotic convexity case a2}
        \end{align}
        achieved by~\eqref{eq: asymptotic convexity, 0}, $\check{\mathbf{a}}^{1} = \hat{\mathbf{a}}^{1}$, and $\alpha(\delta''_{i} \!+\! 1) + (1-\alpha)(\delta'_{i} \!+\! 1) = \delta_{i} \!+\! 1$.
    
        \item[(b)] If $\check{\mathbf{a}}^{1} \neq \hat{\mathbf{a}}^{1}$, then
        \begin{align}
            &\!\!\!\!\!\!\!\!\!\!\! \alpha \upsilon^{1}(\bm{\delta}'') + (1-\alpha) \upsilon^{1}(\bm{\delta}') - \upsilon^{1}(\bm{\delta}) \\
            &\!\!\!\!\!\!\!\!\!\!\! \geq \alpha Z(\bm{\delta}'', \check{\mathbf{a}}^{1}; \upsilon^{0}) + (1-\alpha) Z(\bm{\delta}', \hat{\mathbf{a}}^{1}; \upsilon^{0}) - Z(\bm{\delta}, \hat{\mathbf{a}}^{1}; \upsilon^{0}) \\
            &\!\!\!\!\!\!\!\!\!\!\! = \left[\alpha c(\bm{\delta}'') + (1-\alpha) c(\bm{\delta}') - c(\bm{\delta}) \right] \\
            &\!\!\!\!\!\!\!\! + \gamma \Bigg[ \alpha \sum_{{\bm{\delta}''}^{+}} \operatorname{P} ({\bm{\delta}''}^{+}|\bm{\delta}'', \mathbf{G}, \check{\mathbf{a}}^{1}) \upsilon^{0}({\bm{\delta}''}^{+}) \\
            &\!\!\!\!\!\!\!\! + (1-\alpha) \sum_{{\bm{\delta}'}^{+}} \operatorname{P} ({\bm{\delta}'}^{+}|\bm{\delta}', \mathbf{G}, \hat{\mathbf{a}}^{1}) \upsilon^{0}({\bm{\delta}'}^{+}) \\
            &\!\!\!\!\!\!\!\! - \sum_{{\bm{\delta}}^{+}}  \operatorname{P} ({\bm{\delta}}^{+}|\bm{\delta}, \mathbf{G}, \hat{\mathbf{a}}^{1}) \upsilon^{0}({\bm{\delta}}^{+}) \Bigg]\\
            &\!\!\!\!\!\!\!\!\!\!\! \geq \gamma \Bigg[\alpha \sum_{{\bm{\delta}''}^{+}}  \operatorname{P} ({\bm{\delta}''}^{+}|\bm{\delta}'', \mathbf{G}, \hat{\mathbf{a}}^{1}) \upsilon^{0}({\bm{\delta}''}^{+}) \\
            &\!\!\!\!\!\!\!\! + (1-\alpha) \sum_{{\bm{\delta}'}^{+}}\operatorname{P} ({\bm{\delta}'}^{+}|\bm{\delta}', \mathbf{G}, \hat{\mathbf{a}}^{1}) \upsilon^{0}({\bm{\delta}'}^{+}) \\
            &\!\!\!\!\!\!\!\! - \sum_{{\bm{\delta}}^{+}} \operatorname{P} ({\bm{\delta}}^{+}|\bm{\delta}, \mathbf{G}, \hat{\mathbf{a}}^{1}) \upsilon^{0}({\bm{\delta}}^{+}) \Bigg]\\
            & \!\!\!\!\!\!\!\!\!\!\! \geq 0
        \end{align}
        where the first equality is derived based on~\eqref{eq: Z-function}, and the second inequality is based on~\eqref{ineq: def cost convexity}, the following inequality
        \begin{align}
            & \!\!\!\!\!\!\!\! \sum_{{\bm{\delta}'}^{+}}\operatorname{P} ({\bm{\delta}'}^{+}|\bm{\delta}', \mathbf{G}, \hat{\mathbf{a}}^{1}) \upsilon^{0}({\bm{\delta}'}^{+}) \\
            & \!\!\!\!\!\!\!\!\gg \sum_{{\bm{\delta}'}^{+}}\operatorname{P} ({\bm{\delta}''}^{+}|\bm{\delta}'', \mathbf{G}, \check{\mathbf{a}}^{1}) \upsilon^{0}({\bm{\delta}''}^{+})
        \end{align}
        achieved by Lemma~\ref{lemma:much larger} and $\delta'_{i} \gg \delta''_{i}$, and the last inequality is based on~\eqref{eq: asymptotic convexity, 0} and replacing $\check{\mathbf{a}}^{1}$ by $\hat{\mathbf{a}}^{1}$ in~\eqref{ineq: asymptotic convexity case a1} and~\eqref{ineq: asymptotic convexity case a2}.
    \end{itemize}
    
    Therefore, the asymptotic convexity of the value function $\upsilon^{0}(\mathbf{s})$ is preserved by the Bellman operation~\eqref{eq: bellman operator} to the optimal V function $\upsilon^{*}(\mathbf{s})$.

\section{Proof of Proposition~\ref{prop: V convexity}} \label{proof: prop convexity}

Before proving Proposition~\ref{prop: V convexity}, we develop the following Lemma.
\begin{lemma}\label{lemma: for prop convexity}
    If the devices are co-located, then for states $\dot{\mathbf{s}}'' = (\delta''_{i}, \dot{\bm{\delta}}_{\backslash\!\{i\}}, \mathbf{G})$, $\ddot{\mathbf{s}}' = (\delta'_{i}, \ddot{\bm{\delta}}_{\backslash\!\{i\}}, \mathbf{G})$, $\dot{\mathbf{s}} = (\delta_{i}, \dot{\bm{\delta}}_{\backslash\!\{i\}}, \mathbf{G})$, and $\ddot{\mathbf{s}} = (\delta_{i}, \ddot{\bm{\delta}}_{\backslash\!\{i\}}, \mathbf{G})$, where $\alpha\delta''_{i} + (1-\alpha)\delta'_{i} = \delta_{i}$, $\alpha \in [0, 1]$, and $\delta'_{i} \geq \delta_{i} \geq \delta''_{i} \gg 1$, then the optimal V function holds the following inequality:
    \begin{align}
        \alpha \upsilon(\dot{\mathbf{s}}'') + (1-\alpha) \upsilon(\ddot{\mathbf{s}}') \geq \alpha \upsilon(\dot{\mathbf{s}}) + (1-\alpha) \upsilon(\ddot{\mathbf{s}}).
    \end{align}
\end{lemma}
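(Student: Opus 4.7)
\emph{Proof plan.} The plan is to adapt the value-iteration machinery used in the proofs of Theorems~\ref{theo: convexity, 2-1} and~\ref{theo: convexity, N-M} to the mixed-background inequality stated above. By Lemma~\ref{lemma:converge of V*}, it suffices to show that the Bellman operation~\eqref{eq: bellman operator} propagates the inequality from $\upsilon^{k}$ to $\upsilon^{k+1}$, starting from any initial $\upsilon^{0}$ that satisfies it. Because the channel matrix $\mathbf{G}$ is shared by all four states and the channel transitions are i.i.d., $\mathbf{G}$ cancels from the Bellman-update comparison exactly as in those earlier proofs, so I would reduce notation to the AoI vector throughout the induction.

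Let $\check{\mathbf{a}}\triangleq\pi^{k+1}(\dot{\mathbf{s}}'')$ and $\hat{\mathbf{a}}\triangleq\pi^{k+1}(\ddot{\mathbf{s}}')$. Using the suboptimality bounds $\upsilon^{k+1}(\dot{\mathbf{s}})\le Z(\dot{\mathbf{s}},\check{\mathbf{a}};\upsilon^{k})$ and $\upsilon^{k+1}(\ddot{\mathbf{s}})\le Z(\ddot{\mathbf{s}},\hat{\mathbf{a}};\upsilon^{k})$ from~\eqref{ineq:V&Z}, I would lower-bound the target difference $\alpha\upsilon^{k+1}(\dot{\mathbf{s}}'')+(1-\alpha)\upsilon^{k+1}(\ddot{\mathbf{s}}')-\alpha\upsilon^{k+1}(\dot{\mathbf{s}})-(1-\alpha)\upsilon^{k+1}(\ddot{\mathbf{s}})$ by the corresponding $Z$-difference. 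Since each pair shares both the $\backslash\{i\}$-AoI component and the applied action, the factorization~\eqref{eq: transition delta} collapses this into (i) an immediate-cost term $\alpha[c_{i}(\delta''_{i})-c_{i}(\delta_{i})]+(1-\alpha)[c_{i}(\delta'_{i})-c_{i}(\delta_{i})]$, which is non-negative by the asymptotic cost convexity of Lemma~\ref{lemma: cost convex} under $\delta'_{i}\ge\delta_{i}\ge\delta''_{i}\gg 1$, and (ii) a one-step-ahead expectation in which only device $i$'s successor AoI differs across each pair, while the remaining successor AoIs evolve with a common conditional distribution. I would invoke the inductive hypothesis on $\upsilon^{k}$ to bound (ii) after shifting the successor AoIs by one.

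A case analysis on $(\check{a}_{i},\hat{a}_{i})$ follows. When $\check{\mathbf{a}}=\hat{\mathbf{a}}$, the successor-AoI distributions of device $i$ match coordinate-wise across the two pairs, and applying the inductive hypothesis with the same $\alpha$ at the successor states closes the inequality, mirroring case (a) in the proof of Theorem~\ref{theo: convexity, N-M}. The hard case is $\check{\mathbf{a}}\ne\hat{\mathbf{a}}$, in particular when $\check{a}_{i}$ and $\hat{a}_{i}$ designate different channels (or one is $0$ while the other is $m\ne 0$), so that the two expectations disagree on how device $i$'s AoI advances. Here the co-located device assumption becomes essential: identical cost functions and channel statistics across devices render $\upsilon^{k}$ invariant in expectation under a joint relabeling of device indices together with the corresponding row swap in $\mathbf{G}$, which allows me to relabel the device roles on one side so that its successor distribution aligns with the other, and then close the inequality with the inductive hypothesis together with Lemma~\ref{lemma:monotone V}, Theorem~\ref{theo: mono channel}, and the asymptotic monotonicity Lemma~\ref{lemma:much larger}. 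The principal obstacle I anticipate is precisely this symmetry-based relabeling: the bookkeeping must preserve the $\alpha$-weights and simultaneously track the $(\delta''_{i},\delta_{i},\delta'_{i})$ positions across the relabeled expectations, while verifying that the weaker condition $\delta''_{i}\gg 1$ alone is enough to control the cross terms, rather than the stronger $\delta_{i}\gg\delta''_{i}$ used in Theorem~\ref{theo: convexity, N-M}.
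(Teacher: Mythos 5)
Your overall strategy matches the paper's: reduce to showing the Bellman operation preserves the four-state inequality (via Lemma~\ref{lemma:converge of V*} and the $Z$-function bound~\eqref{ineq:V&Z}), split the $Z$-difference into an immediate-cost part handled by cost convexity and a one-step-ahead expectation handled by the inductive hypothesis, and then do a case analysis on the actions $\check{\mathbf{a}}$ and $\hat{\mathbf{a}}$ chosen at $\dot{\mathbf{s}}''$ and $\ddot{\mathbf{s}}'$. The paper's cases are (a) both schedule device $i$ on channels $m_1,m_2$ (split by whether $\psi_{i,m_1}\le\psi_{i,m_2}$), (b) $\check{a}_i=m,\hat{a}_i=0$, and (c) $\check{a}_i=0,\hat{a}_i=m$, and it closes each with the cost convexity~\eqref{ineq: def cost convexity}, the inductive hypothesis, Lemma~\ref{lemma:monotone V}, and Lemma~\ref{lemma:much larger} under $\delta''_i\gg1$ — all of which you correctly anticipate, including the observation that $\delta''_i\gg1$ (hence $\delta''_i+1,\delta_i+1,\delta'_i+1\gg1$) is what feeds Lemma~\ref{lemma:much larger}.

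The one genuine problem is the mechanism you propose for the mismatched-action case. You invoke invariance of $\upsilon^{k}$ under a joint relabeling of device indices (with the corresponding row swap in $\mathbf{G}$), justified by ``identical cost functions and channel statistics across devices.'' Co-location as defined in Proposition~\ref{prop: V convexity} gives you identical channel states only; the devices' cost functions $c_n(\cdot)$ are \emph{not} assumed identical, so the value function is not permutation-symmetric in the device indices and that relabeling step is not available. The paper avoids this entirely: it never permutes devices or states, but instead constructs an auxiliary \emph{action} $\dot{\mathbf{a}}^{1}$ that swaps only the channel assignments of devices $i$ and $j$ (e.g., $\dot{a}^{1}_i=m_1,\dot{a}^{1}_j=m_2$ versus $\hat{a}^{1}_i=m_2,\hat{a}^{1}_j=m_1$), applies~\eqref{ineq:V&Z} with that suboptimal action at $\dot{\mathbf{s}}$ or $\ddot{\mathbf{s}}$, and uses co-location only to conclude $\psi_{i,m}=\psi_{j,m}$ so that the swapped action reproduces the needed transition probabilities for device $i$ while each device's cost stays attached to its own AoI. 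Your proof would go through if you replace the device-relabeling step with this channel-swap construction; as written, the key step rests on an assumption the lemma does not grant.
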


\begin{proof}
    See Appendix~\ref{proof: for prop convexity}.
\end{proof}

Next, similar to the proof of Theorem~\ref{theo: convexity, 2-1}, to prove Proposition~\ref{prop: V convexity} based on Lemma~\ref{lemma:converge of V*}, it is sufficient to prove that $\upsilon^{1}(\mathbf{s})$ holds the following inequality
\begin{equation}
    \alpha \upsilon^{1}(\mathbf{s}''_{\text{AoI}}) + (1-\alpha) \upsilon^{1}(\mathbf{s}'_{\text{AoI}}) \geq \upsilon^{1}(\mathbf{s}),
    \label{eq: prop convexity, 1}
\end{equation}
under the assumption that the initial value function $\upsilon^{0}(\mathbf{s})$ holds the inequality:
\begin{equation}
    \alpha \upsilon^{0}(\mathbf{s}''_{\text{AoI}}) + (1-\alpha) \upsilon^{0}(\mathbf{s}'_{\text{AoI}}) \geq \upsilon^{0}(\mathbf{s}),
    \label{eq: prop convexity, 0}
\end{equation}
where $\mathbf{s}''_{\text{AoI}} = (\bm{\delta}''_{(i)}, \mathbf{G})$, $\mathbf{s}'_{\text{AoI}} = (\bm{\delta}'_{(i)}, \mathbf{G})$ and $\mathbf{s} = (\bm{\delta}, \mathbf{G})$, and $\alpha\delta''_{i} + (1-\alpha)\delta'_{i} = \delta_{i}$ for $\alpha \in [0, 1]$ and $\delta'_{i} \geq \delta_{i} \geq \delta''_{i} \gg 1$.

We also write the states as $\mathbf{s} = \bm{\delta}$, $\mathbf{s}'_{\text{AoI}} = \bm{\delta}'$, and $\mathbf{s}''_{\text{AoI}} = \bm{\delta}''$, and prove~\eqref{eq: asymptotic convexity, 1} by cases (a), (b), and (c) with different optimal actions $\check{\mathbf{a}}^{1} = {\pi^{1}} (\bm{\delta}''), \hat{\mathbf{a}}^{1} = {\pi^{1}} (\bm{\delta}')$.
For writing simplicity, we write the AoI state as $\bm{\delta} = (\delta_{i}, \bm{\delta}_{\backslash\!\{\!i\!\}})$ in the following.

\begin{itemize}
    \item[(a)] If $\check{a}_{i}^{1} = m_{1}, \hat{a}_{i}^{1}= m_{2}$, then there are 2 cases with different packet drop rate: (a.1) $\psi_{i,m_{1}} \leq \psi_{i,m_{2}}$ and (a.2) $\psi_{i,m_{1}} > \psi_{i,m_{2}}$.
    \item[(a.1)] If $\psi_{i,m_{1}} \leq \psi_{i,m_{2}}$ and $\hat{a}_{j}^{1}= m_{1}$, we define another action $\dot{\mathbf{a}}^{1}$, where $\dot{a}_{i}^{1} = m_{1}, \dot{a}_{j}^{1} = m_{2}$, and $\dot{\mathbf{a}}^{1}_{\backslash\!\{\!i,j\!\}} = \hat{\mathbf{a}}^{1}_{\backslash\!\{\!i,j\!\}}$, then we have $\psi_{j,m_{1}} < \psi_{j,m_{2}}$ and
    \begin{align}
        & \hspace{-1cm}\operatorname{P} (\delta_{i}'\!+\!1|\delta_{i}', \!\mathbf{G}, \hat{a}_{i}^{1}) \sum_{\delta_{j}^{+}}  \operatorname{P} (\delta_{j}^{+}|\delta_{j}, \!\mathbf{G}, \hat{a}_{j}^{1}) \upsilon^{0} (\delta_{i}'\!+\!1, \bm{\delta}_{\backslash\!\{i\}}^{+}) 
        \\ 
        & \hspace{-1cm}\geq\! \operatorname{P} (\delta_{i}'\!+\!1|\delta_{i}', \!\mathbf{G}, \dot{a}_{i}^{1}) \!\sum_{\delta_{j}^{+}} \operatorname{P} (\delta_{j}^{+}|\delta_{j}, \!\mathbf{G}, \dot{a}_{j}^{1}) \upsilon^{0} (\delta_{i}'\!+\!1, \bm{\delta}_{\backslash\!\{\!i\!\}}^{+}).
        \label{eq: prop convexity, case a.1}
    \end{align}
    Next, we derive that
    \begin{align}
        & \!\hspace{-0.6cm} \alpha \upsilon^{1}(\bm{\delta}'') + (1-\alpha) \upsilon^{1}(\bm{\delta}') - \upsilon^{1}(\bm{\delta}) 
        \\
        &\hspace{-0.6cm} \!\geq  \alpha Z(\bm{\delta}'', \check{\mathbf{a}}^{1}; \upsilon^{0}) + (1-\alpha) Z(\bm{\delta}', \hat{\mathbf{a}}^{1}; \upsilon^{0}) 
        \\
        &\hspace{-0.5cm} \!- \alpha Z(\bm{\delta}, \check{\mathbf{a}}^{1}; \upsilon^{0}) - (1-\alpha) Z(\bm{\delta}, \dot{\mathbf{a}}^{1}; \upsilon^{0}) 
        \\
        &\hspace{-0.6cm} \!= \left[\alpha c(\bm{\delta}'') + (1-\alpha) c(\bm{\delta}') - \alpha c(\bm{\delta}) - (1-\alpha) c(\bm{\delta}) \right] 
        \\
        &\hspace{-0.5cm} \!\!+\! \alpha \! \sum_{{\delta_{i}''}^{+}} \sum_{\bm{\delta}_{\backslash\!\{\!i\!\}}^{+}} \!\operatorname{P} ({\delta_{i}''}^{+}|\delta_{i}'',\! \mathbf{G}, \!\check{a}_{i}^{1}) \!\operatorname{P} (\bm{\delta}_{\backslash\!\{\!i\!\}}^{+}|\bm{\delta}_{\backslash\!\{\!i\!\}},\! \mathbf{G}, \!\check{\mathbf{a}}_{\backslash\!\{\!i\!\}}^{1}) \upsilon^{0} ({\bm{\delta}''}^{+})  \!\!\!\!\!
        \\
        &\hspace{-0.5cm} \!+ (1\!-\!\alpha) \sum_{{\delta_{i}'}^{+}} \sum_{\delta_{j}^{+}} \!\sum_{\bm{\delta}_{\backslash\!\{\!i,j\!\}}^{+}} \!
        \operatorname{P} ({\delta_{i}'}^{+}|\delta_{i}', \mathbf{G}, \hat{a}_{i}^{1}) 
        \operatorname{P} (\delta_{j}^{+}|\delta_{j}, \mathbf{G}, \hat{a}_{j}^{1}) 
        \\
        & \times \operatorname{P} (\bm{\delta}_{\backslash\!\{\!i,j\!\}}^{+}|\bm{\delta}_{\backslash\!\{\!i,j\!\}}, \mathbf{G}, \hat{\mathbf{a}}_{\backslash\!\{\!i,j\!\}}^{1}) \upsilon^{0} ({\bm{\delta}'}^{+}) 
        \\
        &\hspace{-0.5cm} \!-\! \alpha \! \sum_{\delta_{i}^{+}} \! \sum_{\bm{\delta}_{\backslash\!\{\!i\!\}}^{+}} \!\operatorname{P} (\delta_{i}^{+}|\delta_{i}, \!\mathbf{G}, \!\check{a}_{i}^{1}) \!\operatorname{P} (\bm{\delta}_{\backslash\!\{i\}}^{+}|\bm{\delta}_{\backslash\!\{\!i\!\}}, \!\mathbf{G}, \!\check{\mathbf{a}}_{\backslash\!\{\!i\!\}}^{1}) \upsilon^{0} (\bm{\delta}^{+}) 
        \\
        & \hspace{-0.5cm} \!- (1-\alpha) \!\sum_{\delta_{i}^{+}}\! \sum_{\delta_{j}^{+}} \sum_{\bm{\delta}_{\backslash\!\{\!i,j\!\}}^{+}}
        \operatorname{P} (\delta_{i}^{+}|\delta_{i},\! \mathbf{G}, \dot{a}_{i}^{1}) 
        \operatorname{P} (\delta_{j}^{+}|\delta_{j},\! \mathbf{G}, \dot{a}_{j}^{1}) 
        \\
        & \times \operatorname{P} (\bm{\delta}_{\backslash\!\{\!i,j\!\}}^{+}|\bm{\delta}_{\backslash\!\{\!i,j\!\}},\! \mathbf{G}, \dot{\mathbf{a}}_{\backslash\!\{\!i,j\!\}}^{1}) \upsilon^{0} (\bm{\delta}^{+}) \!\!\!
        \\
        & \hspace{-0.6cm} \!\geq\! \alpha\! \operatorname{P} ({\delta_{i}''}\!+\!1|\delta_{i}'', \!\mathbf{G}, \check{a}_{i}^{1}) \! \sum_{\bm{\delta}_{\backslash\!\{\!i\!\}}^{+}} \! \operatorname{P} (\bm{\delta}_{\backslash\!\{\!i\!\}}^{+}|\bm{\delta}_{\backslash\!\{\!i\!\}}, \!\mathbf{G}, \check{\mathbf{a}}_{\backslash\!\{\!i\!\}}^{1}) \upsilon^{0} ({\bm{\delta}''}^{+})  \!\!\!\!
        \\
        & \hspace{-0.5cm} \!+ (1\!-\!\alpha) \operatorname{P} ({\delta_{i}'}+1|\delta_{i}', \mathbf{G}, \dot{a}_{i}^{1}) \sum_{\delta_{j}^{+}} \sum_{\bm{\delta}_{\backslash\!\{\!i,j\!\}}^{+}}
        \operatorname{P} (\delta_{j}^{+}|\delta_{j}, \mathbf{G}, \dot{a}_{j}^{1}) 
        \\
        & \times \operatorname{P} (\bm{\delta}_{\backslash\!\{\!i,j\!\}}^{+}|\bm{\delta}_{\backslash\!\{\!i,j\!\}}, \mathbf{G}, \dot{\mathbf{a}}_{\backslash\!\{\!i,j\!\}}^{1}) \upsilon^{0} ({\bm{\delta}'}^{+}) 
        \\
        & \hspace{-0.5cm} \!- \alpha \!\operatorname{P} (\delta_{i}\!+\!1|\delta_{i}, \!\mathbf{G}, \check{a}_{i}^{1}) \! \sum_{\bm{\delta}_{\backslash\!\{\!i\!\}}^{+}}  \operatorname{P} (\bm{\delta}_{\backslash\!\{\!i\!\}}^{+}|\bm{\delta}_{\backslash\!\{\!i\!\}}, \!\mathbf{G}, \check{\mathbf{a}}_{\backslash\!\{\!i\!\}}^{1}) \upsilon^{0} (\bm{\delta}^{+}) \!\!
        \\ 
        & \hspace{-0.5cm} \!- (1-\alpha) \operatorname{P} (\delta_{i}+1|\delta_{i},\! \mathbf{G}, \dot{a}_{i}^{1})  \! \sum_{\delta_{j}^{+}} \sum_{\bm{\delta}_{\backslash\!\{\!i,j\!\}}^{+}}
        \operatorname{P} (\delta_{j}^{+}|\delta_{j},\! \mathbf{G}, \dot{a}_{j}^{1}) 
        \\
        & \times \operatorname{P} (\bm{\delta}_{\backslash\!\{\!i,j\!\}}^{+}|\bm{\delta}_{\backslash\!\{\!i,j\!\}},\! \mathbf{G}, \dot{\mathbf{a}}_{\backslash\!\{\!i,j\!\}}^{1}) \upsilon^{0} (\bm{\delta}^{+}) \!\!\!
        \\
        & \hspace{-0.6cm} \geq 0,
    \end{align}
    where the first inequality is from~\eqref{ineq:V&Z}, and the first equality is derived based on~\eqref{eq: Z-function}, and the second inequality is from~\eqref{ineq: def cost convexity},~\eqref{eq: prop convexity, case a.1}, and the following equality
    \begin{align}
        & \operatorname{P} ({\delta_{i}''}^{+}=1|\delta_{i}'', \mathbf{G}, \check{a}_{i}^{1}) \upsilon^{0} (1, \bm{\delta}_{\backslash\!\{i\}}^{+})
        \\
        & = \operatorname{P} ({\delta_{i}}^{+}=1|\delta_{i}, \mathbf{G}, \check{a}_{i}^{1}) \upsilon^{0} (1, \bm{\delta}_{\backslash\!\{i\}}^{+}),
    \end{align}
    and
    \begin{align}
        & \operatorname{P} (\delta_{i}'+1|\delta_{i}', \mathbf{G}, \hat{a}_{i}^{1}) \upsilon^{0} (\delta_{i}'+1, \bm{\delta}_{\backslash\!\{i\}}^{+}) 
        \\
        & \gg \operatorname{P} ({\delta_{i}'}^{+}=1|\delta_{i}', \mathbf{G}, \hat{a}_{i}^{1}) \upsilon^{0} (1, \bm{\delta}_{\backslash\!\{i\}}^{+}),
    \end{align}
    and
    \begin{align}
        & \operatorname{P} (\delta_{i}+1|\delta_{i}, \mathbf{G}, \dot{a}_{i}^{1}) \upsilon^{0} (\delta_{i}+1, \bm{\delta}_{\backslash\!\{i\}}^{+}) 
        \\
        & \gg \operatorname{P} ({\delta_{i}}^{+}=1|\delta_{i}, \mathbf{G}, \dot{a}_{i}^{1}) \upsilon^{0} (1, \bm{\delta}_{\backslash\!\{i\}}^{+}),
        \label{ineq: prop convexity case a.1 much larger}
    \end{align}
    achieved by Lemma~\ref{lemma:much larger}, $\delta_{i}'+1 \gg 1$, and $\delta_{i}+1 \gg 1$, and the last inequality is derived based on Lemma~\ref{lemma: for prop convexity} and the following equality
    \begin{align}
        \operatorname{P} ({\delta_{i}''}+1|\delta_{i}'', \mathbf{G}, \check{a}_{i}^{1}) = \operatorname{P} ({\delta_{i}'}+1|\delta_{i}', \mathbf{G}, \dot{a}_{i}^{1})
        \\
        = \operatorname{P} ({\delta_{i}}+1|\delta_{i}, \mathbf{G}, \check{a}_{i}^{1}) = \operatorname{P} ({\delta_{i}}+1|\delta_{i}, \mathbf{G}, \dot{a}_{i}^{1}),
    \end{align}
    achieved by $\check{a}_{i}^{1} = \dot{a}_{i}^{1}$.

\item[(a.2)] If $\psi_{i,m_{1}} > \psi_{i,m_{2}}$ and $\check{a}_{j}^{1}= m_{1}$, we define another action $\dot{\mathbf{a}}^{1}$, where $\dot{a}_{i}^{1} = m_{2}, \dot{a}_{j}^{1} = m_{1}$, and $\dot{\mathbf{a}}^{1}_{\backslash\!\{\!i,j\!\}} = \check{\mathbf{a}}^{1}_{\backslash\!\{\!i,j\!\}}$, then we have $\psi_{j,m_{1}} > \psi_{j,m_{2}}$ and
    \begin{align}
        & \hspace{-1cm}\operatorname{P} (\delta_{i}''\!\!+\!1|\delta_{i}'', \!\mathbf{G}, \check{a}_{i}^{1}) \sum_{\delta_{j}^{+}}  \operatorname{P} (\delta_{j}^{+}|\delta_{j}, \!\mathbf{G}, \check{a}_{j}^{1}) \upsilon^{0} (\delta_{i}''\!+\!1, \!\bm{\delta}_{\backslash\!\{\!i\!\}}^{+}) 
        \\ 
        & \hspace{-1cm}\geq\! \operatorname{P} (\delta_{i}''\!\!+\!1|\delta_{i}'', \!\mathbf{G}, \!\dot{a}_{i}^{1}) \!\sum_{\delta_{j}^{+}} \!\operatorname{P} (\delta_{j}^{+}|\delta_{j}, \!\mathbf{G}, \!\dot{a}_{j}^{1}) \upsilon^{0} (\delta_{i}''\!\!+\!1, \!\bm{\delta}_{\backslash\!\{\!i\!\}}^{+}).
        \label{eq: prop convexity, case a.2}
    \end{align}
    Next, we derive that
    \begin{align}
        & \!\hspace{-0.6cm} \alpha \upsilon^{1}(\bm{\delta}'') + (1-\alpha) \upsilon^{1}(\bm{\delta}') - \upsilon^{1}(\bm{\delta}) 
        \\
        &\hspace{-0.6cm} \!\geq  \alpha Z(\bm{\delta}'', \check{\mathbf{a}}^{1}; \upsilon^{0}) + (1-\alpha) Z(\bm{\delta}', \hat{\mathbf{a}}^{1}; \upsilon^{0}) 
        \\
        &\hspace{-0.4cm} \!- \alpha Z(\bm{\delta}, \dot{\mathbf{a}}^{1}; \upsilon^{0}) - (1-\alpha) Z(\bm{\delta}, \hat{\mathbf{a}}^{1}; \upsilon^{0}) 
        \\
        & \hspace{-0.6cm} \!\geq\! \alpha \operatorname{P} ({\delta_{i}''}\!+\!1|\delta_{i}'', \!\mathbf{G}, \!\dot{a}_{i}^{1}) \! \sum_{\bm{\delta}_{\backslash\!\{\!i\!\}}^{+}}  \!\operatorname{P} (\bm{\delta}_{\backslash\!\{\!i\!\}}^{+}|\bm{\delta}_{\backslash\!\{\!i\!\}}, \!\mathbf{G},\! \dot{\mathbf{a}}_{\backslash\!\{\!i\!\}}^{1}) \upsilon^{0} ({\bm{\delta}''}^{+})  \!\!
        \\ 
        & \hspace{-0.4cm} \!+ (1\!-\!\alpha) \operatorname{P} ({\delta_{i}'}+1|\delta_{i}', \mathbf{G}, \hat{a}_{i}^{1}) \sum_{\delta_{j}^{+}} \sum_{\bm{\delta}_{\backslash\!\{\!i,j\!\}}^{+}}
        \operatorname{P} (\delta_{j}^{+}|\delta_{j}, \mathbf{G}, \hat{a}_{j}^{1}) 
        \\
        & \times \operatorname{P} (\bm{\delta}_{\backslash\!\{\!i,j\!\}}^{+}|\bm{\delta}_{\backslash\!\{\!i,j\!\}}, \mathbf{G}, \hat{\mathbf{a}}_{\backslash\!\{\!i,j\!\}}^{1}) \upsilon^{0} ({\bm{\delta}'}^{+}) 
        \\
        & \hspace{-0.4cm} \!- \alpha \operatorname{P} (\delta_{i}\!+\!1|\delta_{i},\! \mathbf{G},\! \dot{a}_{i}^{1}) \! \sum_{\bm{\delta}_{\backslash\!\{\!i\!\}}^{+}} \! \operatorname{P} (\bm{\delta}_{\backslash\!\{\!i\!\}}^{+}|\bm{\delta}_{\backslash\!\{\!i\!\}}, \!\mathbf{G}, \!\dot{\mathbf{a}}_{\backslash\!\{\!i\!\}}^{1}) \upsilon^{0} (\bm{\delta}^{+}) 
        \\ 
        & \hspace{-0.4cm} \!- (1-\alpha) \operatorname{P} (\delta_{i}+1|\delta_{i},\! \mathbf{G}, \hat{a}_{i}^{1})  \! \sum_{\delta_{j}^{+}} \sum_{\bm{\delta}_{\backslash\!\{\!i,j\!\}}^{+}}
        \operatorname{P} (\delta_{j}^{+}|\delta_{j},\! \mathbf{G}, \hat{a}_{j}^{1}) 
        \\
        & \times \operatorname{P} (\bm{\delta}_{\backslash\!\{\!i,j\!\}}^{+}|\bm{\delta}_{\backslash\!\{\!i,j\!\}},\! \mathbf{G}, \hat{\mathbf{a}}_{\backslash\!\{\!i,j\!\}}^{1}) \upsilon^{0} (\bm{\delta}^{+}) \!\!\!
        \\
        & \hspace{-0.6cm} \geq 0,
    \end{align}
    where the second inequality is from ~\eqref{ineq: def cost convexity},~\eqref{eq: prop convexity, case a.2}, and the following equality
    \begin{align}
        & \operatorname{P} ({\delta_{i}'}^{+}=1|\delta_{i}', \mathbf{G}, \hat{a}_{i}^{1}) \upsilon^{0} (1, \bm{\delta}_{\backslash\!\{i\}}^{+})
        \\
        & = \operatorname{P} ({\delta_{i}}^{+}=1|\delta_{i}, \mathbf{G}, \hat{a}_{i}^{1}) \upsilon^{0} (1, \bm{\delta}_{\backslash\!\{i\}}^{+}),
    \end{align}
    and
    \begin{align}
        & \operatorname{P} (\delta_{i}''+1|\delta_{i}'', \mathbf{G}, \check{a}_{i}^{1}) \upsilon^{0} (\delta_{i}''+1, \bm{\delta}_{\backslash\!\{i\}}^{+}) 
        \\
        & \gg \operatorname{P} ({\delta_{i}''}^{+}=1|\delta_{i}'', \mathbf{G}, \check{a}_{i}^{1}) \upsilon^{0} (1, \bm{\delta}_{\backslash\!\{i\}}^{+}),
    \end{align}
    and~\eqref{ineq: prop convexity case a.1 much larger} achieved by Lemma~\ref{lemma:much larger}, $\delta_{i}''+1 \gg 1$, and $\delta_{i}+1 \gg 1$, and the last inequality is derived based on Lemma~\ref{lemma: for prop convexity} and the following equality
    \begin{align}
        \operatorname{P} ({\delta_{i}''}+1|\delta_{i}'', \mathbf{G}, \dot{a}_{i}^{1}) = \operatorname{P} ({\delta_{i}'}+1|\delta_{i}', \mathbf{G}, \hat{a}_{i}^{1})
        \\
        = \operatorname{P} ({\delta_{i}}+1|\delta_{i}, \mathbf{G}, \dot{a}_{i}^{1}) = \operatorname{P} ({\delta_{i}}+1|\delta_{i}, \mathbf{G}, \hat{a}_{i}^{1}),
    \end{align}
    achieved by $\check{a}_{i}^{1} = \dot{a}_{i}^{1}$.
    \item[(b)] If $\check{a}_{i}^{1} = m, \hat{a}_{i}^{1}= 0$, then we assume that $\hat{a}_{j}^{1}= m$ and we have
    \begin{align}
        & \!\hspace{-0.7cm} \alpha \upsilon^{1}(\bm{\delta}'') + (1-\alpha) \upsilon^{1}(\bm{\delta}') - \upsilon^{1}(\bm{\delta}) 
        \\
        &\hspace{-0.6cm} \!\geq  \alpha Z(\bm{\delta}'', \check{\mathbf{a}}^{1}; \upsilon^{0}) + (1-\alpha) Z(\bm{\delta}', \hat{\mathbf{a}}^{1}; \upsilon^{0}) 
        \\
        &\hspace{-0.7cm} \!- \alpha Z(\bm{\delta}, \check{\mathbf{a}}^{1}; \upsilon^{0}) - (1-\alpha) Z(\bm{\delta}, \hat{\mathbf{a}}^{1}; \upsilon^{0}) 
        \\
        &\hspace{-0.7cm} \!= \left[\alpha c(\bm{\delta}'') + (1-\alpha) c(\bm{\delta}') - \alpha c(\bm{\delta}) - (1-\alpha) c(\bm{\delta}) \right] 
        \\
        &\hspace{-0.6cm} \!\!+ \! \alpha \! \sum_{{\delta_{i}''}^{+}} \! \sum_{\bm{\delta}_{\backslash\!\{\!i\!\}}^{+}} \!\operatorname{P} ({\delta_{i}''}^{+}\!|\delta_{i}'', \!\mathbf{G}, \check{a}_{i}^{1}) \!\operatorname{P} (\bm{\delta}_{\backslash\!\{\!i\!\}}^{+}\!|\bm{\delta}_{\backslash\!\{\!i\!\}}, \!\mathbf{G}, \check{\mathbf{a}}_{\backslash\!\{\!i\!\}}^{1}) \upsilon^{0} ({\bm{\delta}''}^{+})  \!\!\!
        \\
        &\hspace{-0.6cm} \!\!+\! (1\!-\!\alpha) \!\sum_{{\delta_{i}'}^{+}} \!\! \sum_{\bm{\delta}_{\backslash\!\{\!i\!\}}^{+}} \!\!\operatorname{P} ({\delta_{i}'}^{+}|\delta_{i}', \!\mathbf{G}, \!\hat{a}_{i}^{1}) \!\operatorname{P} (\bm{\delta}_{\backslash\!\{\!i\!\}}^{+}|\bm{\delta}_{\backslash\!\{\!i\!\}}, \!\mathbf{G}, \!\hat{\mathbf{a}}_{\backslash\!\{\!i\!\}}^{1}) \upsilon^{0} ({\bm{\delta}'}^{+}) \!\!\!
        \\
        &\hspace{-0.6cm} \!- \alpha \! \sum_{\delta_{i}^{+}} \! \sum_{\bm{\delta}_{\backslash\!\{\!i\!\}}^{+}}\! \operatorname{P} (\delta_{i}^{+}|\delta_{i}, \!\mathbf{G}, \!\check{a}_{i}^{1}) \!\operatorname{P} (\bm{\delta}_{\backslash\!\{\!i\!\}}^{+}|\bm{\delta}_{\backslash\!\{\!i\!\}},\! \mathbf{G}, \!\check{\mathbf{a}}_{\backslash\!\{\!i\!\}}^{1}) \upsilon^{0} (\bm{\delta}^{+}) 
        \\
        & \hspace{-0.6cm} \!-\! (1\!-\!\alpha) \!\sum_{\delta_{i}^{+}}\!\! \sum_{\bm{\delta}_{\backslash\!\{\!i\!\}}^{+}}
        \!\operatorname{P} (\delta_{i}^{+}|\delta_{i},\! \mathbf{G}, \!\hat{a}_{i}^{1}) 
        \!\operatorname{P} (\bm{\delta}_{\backslash\!\{\!i\!\}}^{+}|\bm{\delta}_{\backslash\!\{\!i\!\}},\! \mathbf{G}, \!\hat{\mathbf{a}}_{\backslash\!\{\!i\!\}}^{1}) \upsilon^{0} (\bm{\delta}^{+}) \!\!\!
        \\
        & \hspace{-0.7cm} \geq\! \alpha\! \operatorname{P} ({\delta_{i}''}\!\!+\!1|\delta_{i}'', \!\mathbf{G}, \!\check{a}_{i}^{1}) \!\! \sum_{\bm{\delta}_{\backslash\!\{\!i\!\}}^{+}} \!\operatorname{P} (\bm{\delta}_{\backslash\!\{\!i\!\}}^{+}|\bm{\delta}_{\backslash\!\{\!i\!\}}, \!\mathbf{G}, \!\check{\mathbf{a}}_{\backslash\!\{\!i\!\}}^{1}) \upsilon^{0} ({\bm{\delta}''}^{+})
        \\
        & \hspace{-0.7cm} + (1\!-\!\alpha) \operatorname{P} ({\delta_{i}'}+1|\delta_{i}', \mathbf{G}, \hat{a}_{i}^{1}) \operatorname{P} ({\delta_{j}}+1|\delta_{j}, \mathbf{G}, \hat{a}_{j}^{1}) 
        \\
        & \times \sum_{\bm{\delta}_{\backslash\!\{\!i,j\!\}}^{+}} \operatorname{P} (\bm{\delta}_{\backslash\!\{\!i,j\!\}}^{+}|\bm{\delta}_{\backslash\!\{\!i\!\}}, \mathbf{G}, \hat{\mathbf{a}}_{\backslash\!\{\!i,j\!\}}^{1}) \upsilon^{0} ({\bm{\delta}'}^{+})
        \\
        & \hspace{-0.6cm} - \!\alpha\! \operatorname{P} ({\delta_{i}}\!+\!1|\delta_{i}, \!\mathbf{G}, \!\check{a}_{i}^{1}) \!\! \sum_{\bm{\delta}_{\backslash\!\{\!i\!\}}^{+}} \!\operatorname{P} (\bm{\delta}_{\backslash\!\{\!i\!\}}^{+}|\bm{\delta}_{\backslash\!\{\!i\!\}}, \!\mathbf{G}, \!\check{\mathbf{a}}_{\backslash\!\{\!i\!\}}^{1}) \upsilon^{0} ({\bm{\delta}}^{+})
        \\
        & \hspace{-0.6cm} + (1\!-\!\alpha) \operatorname{P} ({\delta_{i}}+1|\delta_{i}, \mathbf{G}, \hat{a}_{i}^{1}) \operatorname{P} ({\delta_{j}}+1|\delta_{j}, \mathbf{G}, \hat{a}_{j}^{1}) 
        \\
        & \times \sum_{\bm{\delta}_{\backslash\!\{\!i,j\!\}}^{+}} \operatorname{P} (\bm{\delta}_{\backslash\!\{\!i,j\!\}}^{+}|\bm{\delta}_{\backslash\!\{\!i\!\}}, \mathbf{G}, \hat{\mathbf{a}}_{\backslash\!\{\!i,j\!\}}^{1}) \upsilon^{0} ({\bm{\delta}}^{+})
        \\
        & \hspace{-0.7cm} \geq 0,
    \end{align}
    where the second inequality is derived from~\eqref{ineq: def cost convexity} and the following equality
    \begin{align}
        & \!\!\!\!\!\!\!\operatorname{P} ({\delta_{i}''}^{+}\!\!=\!1|\delta_{i}'', \!\mathbf{G}, \!\check{a}_{i}^{1}) \!\!\sum_{\bm{\delta}_{\backslash\!\{\!i\!\}}^{+}} \!\!\operatorname{P} (\bm{\delta}_{\backslash\!\{\!i\!\}}^{+}|\bm{\delta}_{\backslash\!\{\!i\!\}}, \!\mathbf{G}, \!\check{\mathbf{a}}_{\backslash\!\{\!i\!\}}^{1}) \upsilon^{0} (1, \!\bm{\delta}_{\backslash\!\{\!i\!\}}^{+}) \!\!\!\!
        \\
        & \!\!\!\!\!\!\!\!\! =\! \operatorname{P} ({\delta_{i}}^{+}\!\!\!=\!1|\delta_{i}, \!\mathbf{G}, \check{a}_{i}^{1}) \!\!\!\sum_{\bm{\delta}_{\backslash\!\{\!i\!\}}^{+}} \!\!\operatorname{P} (\bm{\delta}_{\backslash\!\{\!i\!\}}^{+}\!|\bm{\delta}_{\backslash\!\{\!i\!\}}, \!\mathbf{G}, \check{\mathbf{a}}_{\backslash\!\{\!i\!\}}^{1}) \upsilon^{0} (1, \!\bm{\delta}_{\backslash\!\{\!i\!\}}^{+}), \!\!\!\!
    \end{align}
    and
    \begin{equation}
        \operatorname{P} ({\delta_{i}'}^{+}=1|\delta_{i}', \mathbf{G}, \hat{a}_{i}^{1}) = \operatorname{P} ({\delta_{i}}^{+}=1|\delta_{i}, \mathbf{G}, \hat{a}_{i}^{1}) = 0
    \end{equation}
    from $\hat{a}_{i}^{1}= 0$, and the inequality
    \begin{align}
        & \!\!\!\operatorname{P} ({\delta_{i}'}\!+\!1|\delta_{i}', \!\mathbf{G}, \hat{a}_{i}^{1}) \operatorname{P} ({\delta_{j}}^{+}\!=\!1|\delta_{j}, \!\mathbf{G}, \hat{a}_{j}^{1}) \upsilon^{0} ({\delta_{i}'}\!+\!1, 1, {\bm{\delta}'}_{\backslash\!\{\!i,j\!\}}^{+}) \!\!
        \\
        & \!\!\!\!\geq\! \operatorname{P} ({\delta_{i}}\!+\!1|\delta_{i}, \!\mathbf{G}, \hat{a}_{i}^{1}) \operatorname{P} ({\delta_{j}}^{+}\!=\!1|\delta_{j}, \!\mathbf{G}, \hat{a}_{j}^{1}) \upsilon^{0} ({\delta_{i}'}\!+\!1, 1, {\bm{\delta}}_{\backslash\!\{\!i,j\!\}}^{+})\!\!\!
    \end{align}
    achieved from Lemma~\ref{lemma:monotone V} with $\delta_{i}' \geq \delta_{i}$, and the last inequality is based on Lemma~\ref{lemma: for prop convexity} with the equality
    \begin{equation}
        \operatorname{P} ({\delta_{i}'}+1|\delta_{i}, \mathbf{G}, \hat{a}_{i}^{1}) = \operatorname{P} ({\delta_{i}}+1|\delta_{i}, \mathbf{G}, \hat{a}_{i}^{1}) = 1, 
    \end{equation}
    and 
    \begin{equation}
        \!\operatorname{P} ({\delta_{i}''}\!+\!1|\delta_{i}, \!\mathbf{G}, \check{a}_{i}^{1}) \!=\! \operatorname{P} ({\delta_{i}}\!+\!1|\delta_{i}, \!\mathbf{G}, \check{a}_{i}^{1}) \!=\! \operatorname{P} ({\delta_{j}}\!+\!1|\delta_{j}, \!\mathbf{G}, \hat{a}_{j}^{1})
    \end{equation}
    as $\check{a}_{i}^{1} = \hat{a}_{j}^{1}$.
    \item[(c)] If $\check{a}_{i}^{1} = 0, \hat{a}_{i}^{1}= m$ and $\check{a}_{j}^{1}= m$, we define another action $\dot{\mathbf{a}}^{1}$ where $\dot{a}_{i}^{1} = m, \dot{a}_{j}^{1} = 0$, and $\dot{\mathbf{a}}_{\backslash\!\{\!i,j\!\}} = \check{\mathbf{a}}_{\backslash\!\{\!i,j\!\}}$, then we have
    \begin{align}
        & \!\hspace{-0.7cm} \alpha \upsilon^{1}(\bm{\delta}'') + (1-\alpha) \upsilon^{1}(\bm{\delta}') - \upsilon^{1}(\bm{\delta}) 
        \\
        &\hspace{-0.7cm} \!\geq  \alpha Z(\bm{\delta}'', \check{\mathbf{a}}^{1}; \upsilon^{0}) + (1-\alpha) Z(\bm{\delta}', \hat{\mathbf{a}}^{1}; \upsilon^{0}) 
        \\
        &\hspace{-0.6cm} \!- \alpha Z(\bm{\delta}, \dot{\mathbf{a}}^{1}; \upsilon^{0}) - (1-\alpha) Z(\bm{\delta}, \hat{\mathbf{a}}^{1}; \upsilon^{0}) 
        \\
        &\hspace{-0.7cm} \!= \left[\alpha c(\bm{\delta}'') + (1-\alpha) c(\bm{\delta}') - \alpha c(\bm{\delta}) - (1-\alpha) c(\bm{\delta}) \right] 
        \\
        &\hspace{-0.6cm} \!+ \alpha \! \sum_{{\delta_{i}''}^{+}} \sum_{{\delta_{j}}^{+}} \! \sum_{\bm{\delta}_{\backslash\!\{\!i,j\!\}}^{+}} \operatorname{P} ({\delta_{i}''}^{+}|\delta_{i}'', \mathbf{G}, \check{a}_{i}^{1}) \operatorname{P} ({\delta_{j}}^{+}|\delta_{j}, \mathbf{G}, \check{a}_{j}^{1})
        \\
        &\times \operatorname{P} (\bm{\delta}_{\backslash\!\{\!i,j\!\}}^{+}|\bm{\delta}_{\backslash\!\{\!i,j\!\}}, \mathbf{G}, \check{\mathbf{a}}_{\backslash\!\{\!i,j\!\}}^{1}) \upsilon^{0} ({\bm{\delta}''}^{+}) 
        \\
        &\hspace{-0.6cm} \!+\! (1\!-\!\alpha) \!\sum_{{\delta_{i}'}^{+}} \! \sum_{\bm{\delta}_{\backslash\!\{\!i\!\}}^{+}} \operatorname{P} ({\delta_{i}'}^{+}|\delta_{i}', \!\mathbf{G}, \!\hat{a}_{i}^{1}) \operatorname{P} (\bm{\delta}_{\backslash\!\{\!i\!\}}^{+}|\bm{\delta}_{\backslash\!\{\!i\!\}}, \!\mathbf{G}, \hat{\mathbf{a}}_{\backslash\!\{\!i\!\}}^{1}) \upsilon^{0} ({\bm{\delta}'}^{+}) \!\!\!
        \\
        &\hspace{-0.6cm} \!- \alpha \! \sum_{{\delta_{i}}^{+}} \! \sum_{{\delta_{j}}^{+}} \! \sum_{\bm{\delta}_{\backslash\!\{\!i,j\!\}}^{+}} \operatorname{P} ({\delta_{i}}^{+}|\delta_{i}, \mathbf{G}, \dot{a}_{i}^{1}) \operatorname{P} ({\delta_{j}}^{+}|\delta_{j}, \mathbf{G}, \dot{a}_{j}^{1})
        \\
        & \times \operatorname{P} (\bm{\delta}_{\backslash\!\{\!i,j\!\}}^{+}|\bm{\delta}_{\backslash\!\{\!i,j\!\}}, \mathbf{G}, \dot{\mathbf{a}}_{\backslash\!\{\!i,j\!\}}^{1}) \upsilon^{0} ({\bm{\delta}}^{+}) 
        \\
        &\hspace{-0.6cm} \!-\! (1\!-\!\alpha) \!\sum_{{\delta_{i}}^{+}} \! \sum_{\bm{\delta}_{\backslash\!\{\!i\!\}}^{+}} \operatorname{P} ({\delta_{i}}^{+}|\delta_{i}, \!\mathbf{G}, \!\hat{a}_{i}^{1}) \operatorname{P} (\bm{\delta}_{\backslash\!\{\!i\!\}}^{+}|\bm{\delta}_{\backslash\!\{\!i\!\}}, \!\mathbf{G}, \hat{\mathbf{a}}_{\backslash\!\{\!i\!\}}^{1}) \upsilon^{0} ({\bm{\delta}}^{+}) \!\!\!
        \\
        &\hspace{-0.7cm} \!\geq \alpha \operatorname{P} ({\delta_{i}''}+1|\delta_{i}'', \mathbf{G}, \check{a}_{i}^{1}) \operatorname{P} ({\delta_{j}}+1|\delta_{j}, \mathbf{G}, \check{a}_{j}^{1})
        \\
        &\times \sum_{\bm{\delta}_{\backslash\!\{\!i,j\!\}}^{+}} \operatorname{P} (\bm{\delta}_{\backslash\!\{\!i,j\!\}}^{+}|\bm{\delta}_{\backslash\!\{\!i,j\!\}}, \mathbf{G}, \check{\mathbf{a}}_{\backslash\!\{\!i,j\!\}}^{1}) \upsilon^{0} ({\bm{\delta}''}^{+}) 
        \\
        &\hspace{-0.6cm} \!+\! (1\!-\!\alpha) \operatorname{P} ({\delta_{i}'}+1|\delta_{i}', \!\mathbf{G}, \hat{a}_{i}^{1}) \sum_{\bm{\delta}_{\backslash\!\{\!i\!\}}^{+}} \operatorname{P} (\bm{\delta}_{\backslash\!\{\!i\!\}}^{+}|\bm{\delta}_{\backslash\!\{\!i\!\}}, \!\mathbf{G}, \hat{\mathbf{a}}_{\backslash\!\{\!i\!\}}^{1}) \upsilon^{0} ({\bm{\delta}'}^{+})\!\!\!\!
        \\
        &\hspace{-0.6cm} - \!\alpha \operatorname{P} ({\delta_{i}}+1|\delta_{i}, \mathbf{G}, \dot{a}_{i}^{1}) \operatorname{P} ({\delta_{j}}+1|\delta_{j}, \mathbf{G}, \dot{a}_{j}^{1})
        \\
        & \times \sum_{\bm{\delta}_{\backslash\!\{\!i,j\!\}}^{+}} (\bm{\delta}_{\backslash\!\{\!i,j\!\}}^{+}|\bm{\delta}_{\backslash\!\{\!i,j\!\}}, \mathbf{G}, \dot{\mathbf{a}}_{\backslash\!\{\!i,j\!\}}^{1}) \upsilon^{0} ({\bm{\delta}}^{+}) 
        \\
        &\hspace{-0.6cm} \!-\! (1\!-\!\alpha) \operatorname{P} ({\delta_{i}}\!+\!1|\delta_{i}, \!\mathbf{G}, \hat{a}_{i}^{1}) \! \sum_{\bm{\delta}_{\backslash\!\{\!i,j\!\}}^{+}} \operatorname{P} ({\delta_{i}}\!+\!1|\delta_{i}, \!\mathbf{G}, \hat{a}_{i}^{1}) \upsilon^{0} ({\bm{\delta}}^{+}) 
        \\
        &\hspace{-0.6cm} + \alpha \operatorname{P} ({\delta_{i}''}+1|\delta_{i}'', \mathbf{G}, \check{a}_{i}^{1}) \operatorname{P} ({\delta_{j}}^{+}=1|\delta_{j}, \mathbf{G}, \check{a}_{j}^{1}) 
        \\
        &\times \sum_{\bm{\delta}_{\backslash\!\{\!i,j\!\}}^{+}} \operatorname{P} (\bm{\delta}_{\backslash\!\{\!i,j\!\}}^{+}|\bm{\delta}_{\backslash\!\{\!i,j\!\}}, \mathbf{G}, \check{\mathbf{a}}_{\backslash\!\{\!i,j\!\}}^{1}) \upsilon^{0} ({\bm{\delta}''}^{+}) 
        \\
        &\hspace{-0.7cm} \geq 0,
    \end{align}
    where the second inequality is derived from~\eqref{ineq: def cost convexity} and the following equality
    \begin{align}
        &\!\!\!\!\!\!\!\!\! \operatorname{P} ({\delta_{i}'}^{+}\!\!\!=\!1|\delta_{i}', \mathbf{G}, \!\hat{a}_{i}^{1}) \! \!\sum_{\bm{\delta}_{\backslash\!\{\!i\!\}}^{+}} \!\operatorname{P} (\bm{\delta}_{\backslash\!\{\!i\!\}}^{+}|\bm{\delta}_{\backslash\!\{\!i\!\}}, \!\mathbf{G}, \!\hat{\mathbf{a}}_{\backslash\!\{\!i\!\}}^{1}) \upsilon^{0} (1, \!{\bm{\delta}}_{\backslash\!\{\!i\!\}}^{+})
        \\
        &\!\!\!\!\!\!\!\!\! =\! \operatorname{P} ({\delta_{i}}^{+}\!\!\!=\!1|\delta_{i}, \!\mathbf{G}, \!\hat{a}_{i}^{1}) \!\! \sum_{\bm{\delta}_{\backslash\!\{\!i\!\}}^{+}} \!\operatorname{P} (\bm{\delta}_{\backslash\!\{\!i\!\}}^{+}\!|\bm{\delta}_{\backslash\!\{\!i\!\}}, \!\mathbf{G}, \!\hat{\mathbf{a}}_{\backslash\!\{\!i\!\}}^{1}) \upsilon^{0} (1, \!{\bm{\delta}}_{\backslash\!\{\!i\!\}}^{+}),\!\!\!\!\!
    \end{align}
    and
    \begin{equation}
        \operatorname{P} ({\delta_{i}''}^{+}=1|\delta_{i}'', \mathbf{G}, \check{a}_{i}^{1}) = 0
    \end{equation}
    from $\check{a}_{i}^{1} = 0$, and the inequality
    \begin{align}
        & \operatorname{P} ({\delta_{i}}+1|\delta_{i}, \mathbf{G}, \dot{a}_{i}^{1}) \upsilon^{0} ({\delta_{i}}+1, {\bm{\delta}}_{\backslash\!\{\!i\!\}}^{+}) 
        \\
        & \gg \operatorname{P} ({\delta_{i}}^{+}=1|\delta_{i}, \mathbf{G}, \dot{a}_{i}^{1}) \upsilon^{0} (1, {\bm{\delta}}_{\backslash\!\{\!i\!\}}^{+}),
    \end{align}
    achieved from Lemma~\ref{lemma:much larger} and ${\delta_{i}}+1 \gg 1$, and the last inequality is based on Lemma~\ref{lemma: for prop convexity} with the equality
    \begin{equation}
        \operatorname{P} ({\delta_{i}'}+1|\delta_{i}', \mathbf{G}, \hat{a}_{i}^{1}) = \operatorname{P} ({\delta_{j}}+1|\delta_{j}, \mathbf{G}, \dot{a}_{j}^{1}) = 1, 
    \end{equation}
    and
    \begin{align}
        & \operatorname{P} ({\delta_{j}}+1|\delta_{j}, \mathbf{G}, \check{a}_{j}^{1}) = \operatorname{P} ({\delta_{i}'}+1|\delta_{i}', \mathbf{G}, \hat{a}_{i}^{1})
        \\
        & = \operatorname{P} ({\delta_{i}}+1|\delta_{i}, \mathbf{G}, \dot{a}_{i}^{1}) = \operatorname{P} ({\delta_{i}}+1|\delta_{i}, \mathbf{G}, \hat{a}_{i}^{1}),
    \end{align}
    and the inequality $\upsilon^{0} ({\bm{\delta}''}^{+}) \geq 0$.
\end{itemize}
Therefore, the Bellman operation~\eqref{eq: bellman operator} preserve this convexity of the value function $\upsilon^{0}(\mathbf{s})$ to the optimal V function $\upsilon^{*}(\mathbf{s})$.

\section{Proof of Theorem~\ref{theo: structure gateway}} \label{proof: structure gateway}

To prove Theorem~\ref{theo: structure gateway}, it is sufficient to prove that for each action $\mathbf{a}$, where $a_{i} = 0, a_{j} = m$ for $i \in \mathcal{I}, j \notin \mathcal{I}$, we can find a better action $\dot{\mathbf{a}}$, where $\dot{a}_{i} = m$, $\dot{a}_{j} = 0$ and $\dot{\mathbf{a}}_{\backslash\!\{\!i,j\!\}} = \mathbf{a}_{\backslash\!\{\!i,j\!\}}$.
Therefore, based on~\eqref{ineq:V&Q}, the actions $\mathbf{a}$ and $\dot{\mathbf{a}}$ follows the inequality
\begin{equation}
    Q(\mathbf{s}, \dot{\mathbf{a}}) \leq Q(\mathbf{s}, \mathbf{a}),
    \label{ineq: Q(bara) < Q2(a), structure gateway}
\end{equation}
which implies that the optimal action of the device $i$ cannot be idle, i.e., $a^{*} \neq 0$ as in Theorem~\ref{theo: structure gateway}.
% The inequality~\eqref{ineq: Q(bara) < Q2(a), structure gateway} means that for any action $\mathbf{a}$, we can always find another better action $\dot{\mathbf{a}}$ with lower Q value, so that the optimal action of the device $i$ cannot be idle, i.e., $a^{*} \neq 0$ as in Theorem~\ref{theo: structure gateway}.
To prove~\eqref{ineq: Q(bara) < Q2(a), structure gateway}, we develop the following lemma of the optimal V function in an asymptotic form.
For writing simplicity, we write the AoI state as $\bm{\delta} = (\delta_{i}, \delta_{j}, \bm{\delta}_{\backslash\!\{\!i, j\!\}})$ in the following.

By using~\eqref{eq:Bellman_Q}, we have
\begin{align}
    \!\!\!\!\! Q(&\mathbf{s}, \dot{\mathbf{a}}) \!=\!\!\!  \sum_{{\bm{\delta}_{\backslash\!\{\!i,j\!\}}^{+}}} \!\!\sum_{\mathbf{G}^{+}} \!\sum_{\delta_{i}^{+}} \!\sum_{\delta_{j}^{+}} \operatorname{P}\left(\bm{\delta}_{\backslash\!\{\!i,j\!\}}^{+}| \bm{\delta}_{\backslash\!\{\!i,j\!\}}, \!\dot{\mathbf{a}}_{\backslash\!\{\!i,j\!\}},\! \mathbf{G}_{\backslash\!\{\!i,j\!\}}\right) \\
    & \times \! \operatorname{P} (\mathbf{G}^{+}) \operatorname{P} (\delta_{i}^{+}| \delta_{i}, \dot{a}_{i}, \mathbf{G}_{i}) \operatorname{P} (\delta_{j}^{+}| \delta_{j}, \dot{a}_{j}, \mathbf{G}_{j}) \upsilon(\mathbf{s}^{+}),
    \label{eq: Q(s, bar(a)), gateway theo}
\end{align}
and 
\begin{align}
    \!\!\!\!\! Q(&\mathbf{s}, \mathbf{a}) \!=\!\!\!  \sum_{{\bm{\delta}_{\backslash\!\{\!i,j\!\}}^{+}}} \!\sum_{\mathbf{G}^{+}}\! \sum_{\delta_{i}^{+}} \!\sum_{\delta_{j}^{+}} \operatorname{P}\left(\bm{\delta}_{\backslash\!\{\!i,j\!\}}^{+}| \bm{\delta}_{\backslash\!\{\!i,j\!\}}, \!\mathbf{a}_{\backslash\!\{\!i,j\!\}}, \!\mathbf{G}_{\backslash\!\{\!i,j\!\}}\right) \\
    & \times \! \operatorname{P} (\mathbf{G}^{+}) \operatorname{P} (\delta_{i}^{+}| \delta_{i}, a_{i}, \mathbf{G}_{i}) \operatorname{P} (\delta_{j}^{+}| \delta_{j}, a_{j}, \mathbf{G}_{j}) \upsilon(\mathbf{s}^{+}).
    \label{eq: Q(s, a), gateway theo}
\end{align}
Since $\dot{\mathbf{a}}_{\backslash\!\{\!i,j\!\}} = \mathbf{a}_{\backslash\!\{\!i,j\!\}}$ and $a_{i} = \dot{a}_{j} = 0$, we derive that
\begin{align}
    & \operatorname{P}(\bm{\delta}_{\backslash\!\{\!i,j\!\}}^{+}| \bm{\delta}_{\backslash\!\{\!i,j\!\}}, \dot{\mathbf{a}}_{\backslash\!\{\!i,j\!\}}, \mathbf{G}_{\backslash\!\{\!i,j\!\}}) \\
    & = \operatorname{P}(\bm{\delta}_{\backslash\!\{\!i,j\!\}}^{+}| \bm{\delta}_{\backslash\!\{\!i,j\!\}}, \mathbf{a}_{\backslash\!\{\!i,j\!\}}, \mathbf{G}_{\backslash\!\{\!i,j\!\}}), \!\!
    \label{eq: Pr(bar(a)) = Pr(a) 1}
\end{align}
and 
\begin{equation}
    \operatorname{P} (\delta_{j}^{+} \!= \delta_{j}+1| \delta_{j}, \dot{a}_{j}, \mathbf{G}_{j}) \!=\! \operatorname{P} (\delta_{i}^{+} \!= \delta_{i}+1| \delta_{i}, a_{i}, \mathbf{G}_{i}) = 1.
    \label{eq: Pr(bar(a)) = Pr(a) 2}
\end{equation}
Based on~\eqref{eq: Q(s, bar(a)), gateway theo},~\eqref{eq: Q(s, a), gateway theo},~\eqref{eq: Pr(bar(a)) = Pr(a) 1}, and~\eqref{eq: Pr(bar(a)) = Pr(a) 2}, the inequality~\eqref{ineq: Q(bara) < Q2(a), structure gateway} is equivalent to
\begin{equation}
    \sum_{\delta_{i}^{+}}  \operatorname{P} (\delta_{i}^{+}| \delta_{i}, \dot{a}_{i}, \mathbf{G}_{i})  \upsilon(\mathbf{s}^{+}) 
    \leq \sum_{\delta_{j}^{+}}  \operatorname{P} (\delta_{j}^{+}| \delta_{j}, a_{j}, \mathbf{G}_{j}) \upsilon(\mathbf{s}^{+}).
    \label{ineq: Q(bara) < Q2(a) equivalent, structure gateway}
\end{equation}
Next, the following equality
\begin{align}
    &\!\!\!\!\!\!\! \operatorname{P} (\delta_{i}^{+}\!=\delta_{i}\!+\!1| \delta_{i}, \dot{a}_{i}, \mathbf{G}_{i}) \upsilon(\delta_{i}\!+\!1, \delta_{j}\!+\!1, \bm{\delta}_{\backslash\!\{\!i,j\!\}}^{+}, \mathbf{G}^{+}) \\
    &\!\!\!\!\!\!\! =\! \operatorname{P} (\delta_{j}^{+}\!=\delta_{j}\!+\!1| \delta_{j}, a_{j}, \mathbf{G}_{j}) \upsilon(\delta_{i}\!+\!1, \delta_{j}\!+\!1, \bm{\delta}_{\backslash\!\{\!i,j\!\}}^{+}, \mathbf{G}^{+})
    \label{eq: Pr(bar(a)) = Pr(a) 3}
\end{align}
and 
\begin{equation}
    \operatorname{P} (\delta_{i}^{+}=1| \delta_{i}, \dot{a}_{i}, \mathbf{G}_{i}) = \operatorname{P} (\delta_{j}^{+}=1| \delta_{j}, a_{j}, \mathbf{G}_{j})
    \label{eq: Pr(bar(a)) = Pr(a) 4}
\end{equation}
are obtained from $\dot{a}_{i} = a_{j} = m$, ${g}_{i,m} = {g}_{j,m}$ and~\eqref{eq: Pr(bar(a)) = Pr(a) 2}.
From~\eqref{eq: Pr(bar(a)) = Pr(a) 3} and~\eqref{eq: Pr(bar(a)) = Pr(a) 4}, to prove~\eqref{ineq: Q(bara) < Q2(a) equivalent, structure gateway}, it is sufficient to show that
\begin{equation}
    \upsilon(1, \delta_{j}\!+\!1, \bm{\delta}_{\backslash\!\{\!i,j\!\}}^{+},\mathbf{G}^{+}) \leq \upsilon(\delta_{i}\!+\!1, 1, \bm{\delta}_{\backslash\!\{\!i,j\!\}}^{+}, \mathbf{G}^{+}),
    \label{ineq: V comparison with two different aoi delta=1}
\end{equation}
when $\delta_{i} > \bar{\delta} \gg 1$.

The inquality~\eqref{ineq: V comparison with two different aoi delta=1} is equivalent to the following Lemma.
\begin{lemma}
Consider a multi-device-multi-channel system with co-located devices. 
For states $\mathbf{s} = (\bm{\delta}, \mathbf{G})$ and $\mathbf{s}^{\circ} = (\bm{\delta}^{\circ}, \mathbf{G})$, where $\bm{\delta} = (\delta_{i}, \delta_{j}, \bm{\delta}_{\backslash\!\{\!i,j\!\}})$, and $\bm{\delta}^{\circ} = (\delta_{i}', \delta''_{j}, \bm{\delta}_{\backslash\!\{\!i,j\!\}})$ with $\delta'_{i} \geq \delta_{j} \geq \delta''_{j}$ and $\delta'_{i} \gg \delta_{i}, \forall i \in \mathcal{I}, j \notin \mathcal{I}$, the following inequality hold
    \begin{equation}
        \upsilon^{*}(\mathbf{s}^{\circ}) \geq \upsilon^{*}(\mathbf{s}).
        \label{ineq: V comparison with two different aoi}
    \end{equation}
    \label{lemma: V comparison with two different aoi}
\end{lemma}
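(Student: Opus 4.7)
My plan is to prove Lemma~\ref{lemma: V comparison with two different aoi} via the value-iteration preservation framework used throughout the appendices: by Lemma~\ref{lemma:converge of V*}, it suffices to exhibit an initial value function $\upsilon^{0}\in\mathcal{V}$ satisfying $\upsilon^{0}(\mathbf{s}^{\circ})\geq\upsilon^{0}(\mathbf{s})$, and then to show that the Bellman operator~\eqref{eq: bellman operator} preserves the inequality across iterations. For the base case I would take $\upsilon^{0}=c$, so that $\upsilon^{0}(\mathbf{s}^{\circ})-\upsilon^{0}(\mathbf{s})=[c_{i}(\delta'_{i})-c_{i}(\delta_{i})]+[c_{j}(\delta''_{j})-c_{j}(\delta_{j})]$. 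Because $i\in\mathcal{I}$ and $j\notin\mathcal{I}$, Definition~\ref{def: priority set} supplies the asymptotic ordering $c_{i}(\delta)\geq c_{j}(\delta)$ for $\delta\geq\bar{\delta}$, while the non-decreasing, asymptotically unbounded growth of each $c_{n}$ (cf.\ Lemma~\ref{lemma: cost convex}) together with $\delta'_{i}\gg\delta_{i}$ forces the positive first bracket to dominate the bounded, non-positive second one, so $c(\mathbf{s}^{\circ})\geq c(\mathbf{s})$.

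For the inductive step, I would let $\hat{\mathbf{a}}=\pi^{k+1}(\mathbf{s}^{\circ})$ and use $\upsilon^{k+1}(\mathbf{s}^{\circ})=Z(\mathbf{s}^{\circ},\hat{\mathbf{a}};\upsilon^{k})$ together with $\upsilon^{k+1}(\mathbf{s})\leq Z(\mathbf{s},\hat{\mathbf{a}};\upsilon^{k})$ from~\eqref{ineq:V&Z}. Using the factorization~\eqref{eq: transition p 2}, the expected-future-cost term splits into components conditioned on $\hat{a}_{i}$ and $\hat{a}_{j}$, and I would treat the four subcases generated by $\hat{a}_{i}\in\{0,\text{channel}\}$ and $\hat{a}_{j}\in\{0,\text{channel}\}$ separately, mirroring the bookkeeping in Appendix~\ref{proof: prop convexity}. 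Since the $\bm{\delta}_{\backslash\{i,j\}}$-part of the transition is identical under $\mathbf{s}^{\circ}$ and $\mathbf{s}$, the residual future-cost gap reduces to a weighted difference of $\upsilon^{k}$ values on state pairs that again satisfy the hypothesis of the lemma, so the inductive hypothesis closes the step. The immediate-cost contribution in each subcase is handled exactly as in the base case.

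The main obstacle will be the subcase $\hat{a}_{j}\neq 0$: since $\delta_{j}\geq\delta''_{j}$, the action $\hat{\mathbf{a}}$ optimal at $\mathbf{s}^{\circ}$ may reset an already small AoI at device $j$, which lowers the next-step AoI for $\mathbf{s}^{\circ}$ relative to $\mathbf{s}$ and can flip the sign of the naive one-step comparison. Overcoming this requires leveraging the $\gg$-scale dominance in $\delta'_{i}\gg\delta_{i}$ through Lemma~\ref{lemma:much larger}: the large gap that the $i$-component contributes to $\upsilon^{k}$ must swamp any bounded benefit from rescheduling device $j$. Making this quantitative within a single Bellman step---rather than only in the limit---is the delicate bookkeeping I expect to be the main technical hurdle, and I would handle it by paralleling case (c) of Appendix~\ref{proof: prop convexity}: introduce an intermediate action $\dot{\mathbf{a}}$ that swaps the channel assignments of devices $i$ and $j$, invoke the co-location of channels to equate the relevant packet-drop probabilities, and then absorb the remaining discrepancy into an application of Lemma~\ref{lemma:much larger} evaluated at the post-transition state of device $i$.
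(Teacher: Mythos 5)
Your proposal follows essentially the same route as the paper's proof in Appendix~\ref{proof: V comparison with two different aoi}: preservation of the inequality under value iteration with the cost ordering of Definition~\ref{def: priority set} supplying the immediate-cost comparison, a case split on the optimal action at $\mathbf{s}^{\circ}$, an $i$--$j$ channel-swap comparison action whose transition probabilities match by co-location, and Lemma~\ref{lemma:much larger} letting the $\delta'_{i}\gg\delta_{i}$ gap absorb the reset of device $j$. The only organizational difference is that the paper first uses Lemma~\ref{lemma:monotone V} to inflate $\delta_{j}$ to $\delta'_{i}$ and the greedy-structure argument of Theorem~\ref{theo: structure gateway} to rule out $a^{1}_{i}=0$, which trims your four action subcases to two.
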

\begin{proof}
    See Appendix~\ref{proof: V comparison with two different aoi}.
\end{proof}

Therefore, the inquality~\eqref{ineq: Q(bara) < Q2(a), structure gateway} holds under Lemma~\ref{lemma: V comparison with two different aoi}, which is exactly Theorem~\ref{theo: structure gateway}.

\section{Proof of Lemma~\ref{lemma: V comparison with two different aoi}} \label{proof: V comparison with two different aoi}
Based on the monotonicity of the optimal V function in Lemma~\ref{lemma:monotone V}, to prove Lemma~\ref{lemma: V comparison with two different aoi}, it is sufficient to prove~\eqref{ineq: V comparison with two different aoi} hold when the state $\mathbf{s} = \hat{\mathbf{s}}_{(j)} = (\hat{\bm{\delta}}^{(j)}, \mathbf{G})$, where $\hat{\bm{\delta}}^{(j)} = (\delta_{i}, \delta_{j}, \bm{\delta}_{\backslash\!\{\!i,j\!\}})$ and $\delta'_{j} = \delta'_{i} \geq \delta_{j}$, i.e., 
\begin{equation}
    \upsilon^{*}(\hat{\mathbf{s}}_{(j)}) \leq \upsilon^{*}(\mathbf{s}^{\circ}).
    \label{ineq: V comparison with two different aoi 2}
\end{equation}
Similar to the proof of Theorem~\ref{theo: convexity, N-M}, proving~\eqref{ineq: V comparison with two different aoi 2} is equivalent to proving
\begin{equation}
    \upsilon^{1}(\hat{\mathbf{s}}_{(j)}) \leq \upsilon^{1}(\mathbf{s}^{\circ})
    \label{ineq: V comparison with two different aoi, 1}
\end{equation}
under the assumption of
\begin{equation}
    \upsilon^{0}(\hat{\mathbf{s}}_{(j)}) \leq \upsilon^{0}(\mathbf{s}^{\circ}).
    \label{ineq: V comparison with two different aoi, 0}
\end{equation}
From the cost function $c_{i}(\delta) \geq c_{j}(\delta) \forall i \in \mathcal{I}, j \notin \mathcal{I}, \delta \gg 1$, and $\delta'_{i} = \delta'_{j} \gg \delta_{i}$, we have
\begin{equation}
    c(\hat{\mathbf{s}}_{(j)}) \leq c(\mathbf{s}^{\circ}).
    \label{ineq: cost comparison with two different aoi}
\end{equation}
Similar to the proof of Theorem~\ref{theo: structure gateway}, in the 1st iteration, we can prove that the optimal action of the device $i$ w.r.t. the state $\mathbf{s}^{\circ}$ should be scheduled, i.e., $\mathbf{a}^{1} = \pi (\mathbf{s}^{\circ})$ and $a^{1}_{i} \neq 0$, under the assumption of the $\upsilon^{0}$ in~\eqref{ineq: V comparison with two different aoi, 0}.
Since the devices are co-located and the packet drop rate is independent of the scheduled device, we represent the packet drop rate as $\psi_{m} = \psi_{i,m}$ during the proof.

Thus, in the following, we write the state as $\mathbf{s} = \bm{\delta}$, $\mathbf{s}'_{\text{AoI}} = \bm{\delta}'$, and $\mathbf{s}''_{\text{AoI}} = \bm{\delta}''$ and prove~\eqref{ineq: V comparison with two different aoi, 1} based on different cases with different optimal actions where $a^{1}_{i} \neq 0$.
Also, for writing simplicity, we write the AoI state as $\bm{\delta} = (\delta_{i}, \delta_{j}, \bm{\delta}_{\backslash\!\{\!i, j\!\}})$.
\begin{itemize}
    \item[(a)] If $a^{1}_{i} = m_{1}$ and $a^{1}_{j} = m_{2}$, then we define another action $\dot{\mathbf{a}}^{1}$ where $\dot{a}^{1}_{i} = m_{2}$, $\dot{a}^{1}_{j} = m_{1}$, and $\dot{\mathbf{a}}^{1}_{\backslash\!\{\!i,j\!\}} = \mathbf{a}^{1}_{\backslash\!\{\!i,j\!\}}$ for the state $\hat{\bm{\delta}}^{(j)}$ in first value iteration.
    By using the actions $\mathbf{a}^{1}$ and $\dot{\mathbf{a}}^{1}$, we derive the AoI state transition probability as:
    \begin{align}
        & \!\!\!\!\!\!\!\!\!\!\!\!\!\!\!\!\!\!\operatorname{P} ({\delta'_{i}}^{+} \!\!\!=\! 1| \delta'_{i},\! a^{1}_{i}, \!\mathbf{G}_{i}) \!=\! \operatorname{P} ({\delta'_{j}}^{+} \!\!\!=\! 1| \delta'_{j}, \dot{a}^{1}_{j}, \!\mathbf{G}_{j}) \!=\! (1\!-\!\psi_{m_{1}}),
        \label{eq: P(aoi = 1), 1}
        \\
        & \!\!\!\!\!\!\!\!\!\!\!\!\!\!\!\!\!\!\operatorname{P} ({\delta''_{j}}^{+} \!\!\!=\! 1| \delta''_{j}, a^{1}_{j},\!\mathbf{G}_{j}) \!=\! \operatorname{P} ({\delta}_{i}^{+} \!\!=\! 1| \delta_{i}, \dot{a}^{1}_{i}, \!\mathbf{G}_{i}) \!=\! (1\!-\!\psi_{m_{2}}),
        \label{eq: P(aoi = 1), 2}
    \end{align}
    and
    \begin{align}
        &\!\!\!\!\!\! \operatorname{P} (\delta'_{i} \!+\! 1| \delta'_{i}, a^{1}_{i}, \mathbf{G}_{i}) \!=\! \operatorname{P} (\delta'_{j}\!+\!1| \delta'_{j}, \dot{a}^{1}_{j}, \mathbf{G}_{j}) \!=\! \psi_{m_{1}},
        \label{eq: P(aoi += 1), 1}
        \\
        &\!\!\!\!\!\! \operatorname{P} (\delta''_{j} \!+\! 1| \delta''_{j}, a^{1}_{j}, \mathbf{G}_{j}) \!=\! \operatorname{P} (\delta_{i}\!+\!1| \delta_{i}, \dot{a}^{1}_{i}, \mathbf{G}_{i}) \!=\! \psi_{m_{2}}.
        \label{eq: P(aoi += 1), 2}
    \end{align}
    Next, we have
    \begin{align}
        &\!\!\!\!\!\!\! \upsilon^{1} (\bm{\delta}^{\circ}) - \upsilon^{1}(\hat{\bm{\delta}}^{(j)}) 
        \\
        &\!\!\!\!\!\!\! \geq Z(\bm{\delta}^{\circ}, \mathbf{a}^{1}, \upsilon_{0}) - Z(\hat{\bm{\delta}}^{(j)}, \dot{\mathbf{a}}^{1}, \upsilon_{0}) \\
        &\!\!\!\!\!\!\! = \left[ c(\bm{\delta}^{\circ}) - c(\hat{\bm{\delta}}^{(j)}) \right] \!+\!\! \sum_{{\bm{\delta}_{\backslash\!\{\!i,j\!\}}^{+}}}  \operatorname{P}(\bm{\delta}_{\backslash\!\{\!i,j\!\}}^{+}| \bm{\delta}_{\backslash\!\{\!i,j\!\}}, \mathbf{a}^{1}_{\backslash\!\{\!i,j\!\}}, \mathbf{G}_{\backslash\!\{\!i,j\!\}})
        \\
        &\!\!\!\!\! \times \Bigg[ \sum_{{\delta'_{i}}^{+}} \sum_{{\delta''_{j}}^{+}} \operatorname{P} ({\delta'_{i}}^{+}| \delta'_{i}, a^{1}_{i}, \!\mathbf{G}_{i}) \operatorname{P} ({\delta''_{j}}^{+}| \delta_{j}, a^{1}_{j}, \!\mathbf{G}_{j}) \upsilon^{0}({\bm{\delta}^{\circ}}^{+}) \!\!
        \\
        &\!\!\! - \sum_{\delta_{i}^{+}} \sum_{{\delta'_{j}}^{+}} \operatorname{P} (\delta_{i}^{+}| \delta_{i}, \dot{a}^{1}_{i}, \mathbf{G}_{i}) \operatorname{P} ({\delta'_{j}}^{+}| \delta'_{j}, \dot{a}^{1}_{j}, \mathbf{G}_{j}) \upsilon^{0}{(\hat{\bm{\delta}}^{(j)}}^{+}) \Bigg] 
        \\
        &\!\!\!\!\!\!\! \geq 0,
        \label{ineq: v1 - v0 > 0, lemma4, case a}
        \end{align}
        where the first inequality is derived from~\eqref{eq: Z-function}, the first equality is from~\eqref{eq: Z-function} and $\dot{\mathbf{a}}^{1}_{\backslash\!\{\!i,j\!\}} = \mathbf{a}^{1}_{\backslash\!\{\!i,j\!\}}$, and the last inequality is from~\eqref{ineq: cost comparison with two different aoi},~\eqref{eq: P(aoi = 1), 1},~\eqref{eq: P(aoi = 1), 2},~\eqref{eq: P(aoi += 1), 1},~\eqref{eq: P(aoi += 1), 2}, and the following inequality:
        \begin{align}
            &\!\!\!\!\!\!\!\!\!\! (1 \!-\! \psi_{m_{1}}) (1 \!-\! \psi_{m_{2}}) \!\left[\upsilon^{0}(1,\!1, \bm{\delta}_{\backslash\!\{\!i,j\!\}}^{+}) \!-\! \upsilon^{0}(1,\!1, \bm{\delta}_{\backslash\!\{\!i,j\!\}}^{+})\right] \!\!\!
            \\
            &\!\!\!\!\!\!\!\!\!\! + \!\psi_{m_{1}} (1 \!-\!\psi_{m_{2}}) \!\left[\upsilon^{0}\!\left(\!\delta'_{i} \!+\! 1,\!1, \!\bm{\delta}_{\backslash\!\{\!i,j\!\}}^{+}\!\right) \!-\! \upsilon^{0}\!\left(\!1,\delta'_{j}\!+\!1,\!\bm{\delta}_{\backslash\!\{\!i,j\!\}}^{+}\!\right)\right] \!\!\!
            \\
            &\!\!\!\!\!\!\!\!\!\! +\! (1 \!-\! \psi_{m_{1}}) \psi_{m_{2}} \!\left[\upsilon^{0}\!\left(\!1,\delta''_{j} \!+\!\! 1, \bm{\delta}_{\backslash\!\{\!i,j\!\}}^{+}\!\right) \!-\! \upsilon^{0}\!\left(\!\delta_{i}\!+\!1,\!1, \bm{\delta}_{\backslash\!\{\!i,j\!\}}^{+}\!\right)\right] \!\!\!\!
            \\
            &\!\!\!\!\!\!\!\!\!\! +\! \psi_{m_{1}} \psi_{m_{2}} \!\left[\upsilon^{0}\!\!\left(\!\delta'_{i} \!+\!\! 1, \delta''_{j} \!+\!\! 1, \!\bm{\delta}_{\backslash\!\{\!i,j\!\}}^{+}\!\right) \!-\! \upsilon^{0}\!\!\left
            (\!\delta_{i}\!+\!\!1,\delta'_{j}\!+\!\!1, \!\bm{\delta}_{\backslash\!\{\!i,j\!\}}^{+}\!\right)\right] \!\!\!\!\!
            \\
            &\!\!\!\!\!\!\!\!\!\! \geq 0,
        \end{align}
        achieved from~\eqref{ineq: V comparison with two different aoi, 0} and $\upsilon^{0}(\delta'_{i} \!+\!\! 1, \delta''_{j} \!+\!\! 1, \bm{\delta}_{\backslash\!\{\!i,j\!\}}^{+}) \gg \upsilon^{0}(\delta_{i}\!+\!1,1, \bm{\delta}_{\backslash\!\{\!i,j\!\}}^{+})$ with Lemma~\ref{lemma:much larger} and $\delta'_{i} + 1 \gg \delta_{i} + 1$.
    \item[(b)] If $a^{1}_{i} = m$ and $a^{1}_{j} = 0$, then we also define the action $\dot{\mathbf{a}}^{1}$ where $\dot{a}^{1}_{i} = 0$, $\dot{a}^{1}_{j} = m$, and $\dot{\mathbf{a}}^{1}_{\backslash\!\{\!i,j\!\}} = \mathbf{a}^{1}_{\backslash\!\{\!i,j\!\}}$.
    In case (b), the transition probability of the AoI states $\delta''_{j}$ and $\delta_{i}$ in the inequality~\eqref{ineq: v1 - v0 > 0, lemma4, case a} of case (a) are replaced by
    \begin{align}
        \operatorname{P} (\delta''_{j} \!+\! 1| \delta''_{j}, a^{1}_{j}, \mathbf{G}_{j}) = \operatorname{P} (\delta_{i} \!+\! 1| \delta_{i}, a^{1}_{i}, \mathbf{G}_{i}) = 1, 
        \\
        \operatorname{P} ({\delta''_{j}}^{+} \!=\! 1| \delta''_{j}, a^{1}_{j}, \mathbf{G}_{j}) = \operatorname{P} (\delta_{i}^{+} \!=\! 1| \delta_{i}, a^{1}_{i}, \mathbf{G}_{i}) = 0,
    \end{align}
    and the other parts are kept constant.
    Therefore, to prove $\upsilon^{1} (\bm{\delta}^{\circ}) - \upsilon^{1}(\hat{\bm{\delta}}^{(j)})$ in this case based on the proof of case (a), it is sufficient to prove that 
    \begin{align}
        & (1 - \psi_{m}) \!\left[\upsilon^{0}(1,\delta''_{j} \!+\! 1, \bm{\delta}_{\backslash\!\{\!i,j\!\}}^{+}) \!-\! \upsilon^{0}(\delta_{i}\!+\!1,1, \bm{\delta}_{\backslash\!\{\!i,j\!\}}^{+})\right]
        \\
        & + \psi_{m} \!\left[\upsilon^{0}(\delta'_{i} \!+\!\! 1, \delta''_{j} \!+\!\! 1, \bm{\delta}_{\backslash\!\{\!i,j\!\}}^{+}) \!-\! \upsilon^{0}(\delta_{i}\!+\!\!1,\delta'_{j}\!+\!\!1, \bm{\delta}_{\backslash\!\{\!i,j\!\}}^{+})\right] 
        \\
        & \geq 0,
    \end{align}
    which is derived based on~\eqref{ineq: V comparison with two different aoi, 0} and $\upsilon^{0}(\delta'_{i} \!+\!\! 1, \delta''_{j} \!+\!\! 1, \bm{\delta}_{\backslash\!\{\!i,j\!\}}^{+}) \gg \upsilon^{0}(\delta_{i}\!+\!1,1, \bm{\delta}_{\backslash\!\{\!i,j\!\}}^{+})$ by using Lemma~\ref{lemma:much larger} and $\delta'_{i} + 1 \gg \delta_{i} + 1$.
\end{itemize}

Therefore, the property of the value function $\upsilon^{0}(\mathbf{s})$ as shown in~\eqref{ineq: V comparison with two different aoi, 0} can be preserved by the Bellman operation~\eqref{eq: bellman operator} to the optimal V function $\upsilon^{*}(\mathbf{s})$.

\section{Proof of Lemma~\ref{lemma: for prop convexity}} \label{proof: for prop convexity}
Similar to the proof of Theorem~\ref{theo: convexity, 2-1}, to prove Proposition~\ref{prop: V convexity} based on Lemma~\ref{lemma:converge of V*}, it is sufficient to prove that $\upsilon^{1}(\mathbf{s})$ holds the following inequality
\begin{equation}\label{ineq: for prop convex, 1}
    \alpha \upsilon^{1}(\dot{\mathbf{s}}'') + (1-\alpha) \upsilon^{1}(\ddot{\mathbf{s}}') \geq \alpha \upsilon^{1}(\dot{\mathbf{s}}) + (1-\alpha) \upsilon^{1}(\ddot{\mathbf{s}}),
\end{equation}
under the assumption that the value function $\upsilon^{0}(\mathbf{s})$ holds the inequality
\begin{equation}\label{ineq: for prop convex, 0}
    \alpha \upsilon^{0}(\dot{\mathbf{s}}'') + (1-\alpha) \upsilon^{0}(\ddot{\mathbf{s}}') \geq \alpha \upsilon^{0}(\dot{\mathbf{s}}) + (1-\alpha) \upsilon^{0}(\ddot{\mathbf{s}}),
\end{equation}
where $\dot{\mathbf{s}}'' = (\delta''_{i}, \dot{\bm{\delta}}_{\backslash\!\{\!i\!\}}, \mathbf{G})$, $\ddot{\mathbf{s}}' = (\delta'_{i}, \ddot{\bm{\delta}}_{\backslash\!\{\!i\!\}}, \mathbf{G})$, $\dot{\mathbf{s}} = (\delta_{i}, \dot{\bm{\delta}}_{\backslash\!\{\!i\!\}}, \mathbf{G})$, $\ddot{\mathbf{s}} = (\delta_{i}, \ddot{\bm{\delta}}_{\backslash\!\{\!i\!\}}, \mathbf{G})$, $\alpha\delta''_{i} + (1-\alpha)\delta'_{i} = \delta_{i}$, $\alpha \in [0, 1]$, and $\delta'_{i} \geq \delta_{i} \geq \delta''_{i} \gg 1$.

In the following, we write the state as $\dot{\mathbf{s}}'' = \dot{\bm{\delta}}''$, $\ddot{\mathbf{s}}' = \ddot{\bm{\delta}}'$, $\dot{\mathbf{s}} = \dot{\bm{\delta}}$, and $\ddot{\mathbf{s}} = \ddot{\bm{\delta}}$ and prove~\eqref{ineq: for prop convex, 1} based on different cases with different optimal actions $\check{\mathbf{a}}^{1} = {\pi^{1}} (\dot{\bm{\delta}}''), \hat{\mathbf{a}}^{1} = {\pi^{1}} (\ddot{\bm{\delta}}')$.

\begin{itemize}
    \item[(a)] If $\check{a}_{i}^{1} = m_{1}, \hat{a}_{i}^{1}= m_{2}$, then there are 2 cases with different packet drop rate: (a.1) $\psi_{i,m_{1}} \leq \psi_{i,m_{2}}$ and (a.2) $\psi_{i,m_{1}} > \psi_{i,m_{2}}$.
    \item[(a.1)] If $\psi_{i,m_{1}} \leq \psi_{i,m_{2}}$ and $\hat{a}_{j}^{1}= m_{1}$, we define another action $\dot{\mathbf{a}}^{1}$, where $\dot{a}_{i}^{1} = m_{1}, \dot{a}_{j}^{1} = m_{2}$, and $\dot{\mathbf{a}}^{1}_{\backslash\!\{\!i,j\!\}} = \hat{\mathbf{a}}^{1}_{\backslash\!\{\!i,j\!\}}$, then we have $\psi_{j,m_{1}} < \psi_{j,m_{2}}$ and
    \begin{align}
        & \hspace{-1cm}\operatorname{P} (\delta_{i}'\!+\!1|\delta_{i}', \!\mathbf{G}, \hat{a}_{i}^{1}) \sum_{\ddot{\delta}_{j}^{+}}  \operatorname{P} (\ddot{\delta}_{j}^{+}|\ddot{\delta}_{j}, \!\mathbf{G}, \hat{a}_{j}^{1}) \upsilon^{0} (\delta_{i}'\!+\!1, \bm{\delta}_{\backslash\!\{\!i\!\}}^{+}) 
        \\ 
        & \hspace{-1cm}\geq\! \operatorname{P} (\delta_{i}'\!+\!1|\delta_{i}', \!\mathbf{G}, \dot{a}_{i}^{1}) \!\!\sum_{\ddot{\delta}_{j}^{+}} \operatorname{P} (\ddot{\delta}_{j}^{+}|\ddot{\delta}_{j}, \!\mathbf{G}, \!\dot{a}_{j}^{1}) \upsilon^{0} (\delta_{i}'\!+\!1, \!\bm{\delta}_{\backslash\!\{\!i\!\}}^{+}).
        \label{eq: for prop convexity, case a.1}
    \end{align}
    Next, we derive that
        \begin{align}
            & \!\hspace{-0.7cm} \alpha \upsilon^{1}(\dot{\bm{\delta}}'') + (1-\alpha) \upsilon^{1}(\ddot{\bm{\delta}}') - \alpha \upsilon^{1}(\dot{\bm{\delta}}) - (1-\alpha) \upsilon^{1}(\ddot{\bm{\delta}})
            \\
            &\hspace{-0.7cm} \!\geq  \alpha Z(\dot{\bm{\delta}}'', \check{\mathbf{a}}^{1}; \upsilon^{0}) + (1-\alpha) Z(\ddot{\bm{\delta}}', \hat{\mathbf{a}}^{1}; \upsilon^{0}) 
            \\
            &\hspace{-0.6cm} \!- \alpha Z(\dot{\bm{\delta}}, \check{\mathbf{a}}^{1}; \upsilon^{0}) - (1-\alpha) Z(\ddot{\bm{\delta}}, \dot{\mathbf{a}}^{1}; \upsilon^{0}) 
            \\
            &\hspace{-0.7cm} \!= \left[\alpha c(\dot{\bm{\delta}}'') + (1-\alpha) c(\ddot{\bm{\delta}}') - \alpha c(\dot{\bm{\delta}}) - (1-\alpha) c(\ddot{\bm{\delta}}) \right] 
            \\
            &\hspace{-0.6cm} \!+ \alpha \! \sum_{{\delta_{i}''}^{+}} \! \sum_{{\dot{\bm{\delta}}_{\backslash\!\{\!i\!\}}}^{+}} \operatorname{P} ({\delta_{i}''}^{+}|\delta_{i}'', \mathbf{G}, \check{a}_{i}^{1}) \! \operatorname{P} (\dot{\bm{\delta}}_{\backslash\!\{\!i\!\}}^{+}|\dot{\bm{\delta}}_{\backslash\!\{\!i\!\}}, \mathbf{G}, \check{\mathbf{a}}_{\backslash\!\{\!i\!\}}^{1}) \upsilon^{0} (\dot{\bm{\delta}}''^{+})  \!\!
            \\
            &\hspace{-0.6cm} \!+ (1\!-\!\alpha) \sum_{{\delta_{i}'}^{+}} \sum_{\delta_{j}^{+}} \sum_{{\ddot{\bm{\delta}}_{\backslash\!\{\!i,j\!\}}}^{+}}
            \operatorname{P} ({\delta_{i}'}^{+}|\delta_{i}', \mathbf{G}, \hat{a}_{i}^{1}) 
            \operatorname{P} (\dot{\delta}_{j}^{+}|\dot{\delta}_{j}, \mathbf{G}, \hat{a}_{j}^{1}) 
            \\
            & \times \operatorname{P} (\ddot{\bm{\delta}}_{\backslash\!\{\!i,j\!\}}^{+}|\ddot{\bm{\delta}}_{\backslash\!\{\!i,j\!\}}, \mathbf{G}, \hat{\mathbf{a}}_{\backslash\!\{\!i,j\!\}}^{1}) \upsilon^{0} (\ddot{\bm{\delta}}'^{+}) 
            \\
            &\hspace{-0.6cm} \!- \alpha \! \sum_{\delta_{i}^{+}} \! \sum_{{\dot{\bm{\delta}}_{\backslash\!\{\!i\!\}}}^{+}} \operatorname{P} (\delta_{i}^{+}|\delta_{i}, \mathbf{G}, \check{a}_{i}^{1}) \! \operatorname{P} (\dot{\bm{\delta}}_{\backslash\!\{\!i\!\}}^{+}|\dot{\bm{\delta}}_{\backslash\!\{\!i\!\}}, \mathbf{G}, \check{\mathbf{a}}_{\backslash\!\{\!i\!\}}^{1}) \upsilon^{0} ({\dot{\bm{\delta}}}^{+}) 
            \\
            & \hspace{-0.6cm} \!- (1-\alpha) \!\sum_{\delta_{i}^{+}}\! \sum_{\ddot{\delta}_{j}^{+}} \sum_{{\ddot{\bm{\delta}}_{\backslash\!\{\!i,j\!\}}}^{+}}
            \operatorname{P} (\delta_{i}^{+}|\delta_{i},\! \mathbf{G}, \dot{a}_{i}^{1}) 
            \operatorname{P} (\ddot{\delta}_{j}^{+}|\ddot{\delta}_{j},\! \mathbf{G}, \dot{a}_{j}^{1}) 
            \\
            & \times \operatorname{P} (\ddot{\bm{\delta}}_{\backslash\!\{\!i,j\!\}}^{+}|\ddot{\bm{\delta}}_{\backslash\!\{\!i,j\!\}},\! \mathbf{G}, \dot{\mathbf{a}}_{\backslash\!\{\!i,j\!\}}^{1}) \upsilon^{0} (\ddot{\bm{\delta}}^{+}) \!\!\!
            \\
            & \hspace{-0.7cm} \!\geq\! \alpha \operatorname{P} ({\delta_{i}''}+1|\delta_{i}'', \mathbf{G}, \check{a}_{i}^{1}) \! \sum_{{\dot{\bm{\delta}}_{\backslash\!\{\!i\!\}}}^{+}} \operatorname{P} (\dot{\bm{\delta}}_{\backslash\!\{\!i\!\}}^{+}|\dot{\bm{\delta}}_{\backslash\!\{\!i\!\}}, \mathbf{G}, \check{\mathbf{a}}_{\backslash\!\{\!i\!\}}^{1}) \upsilon^{0} (\dot{\bm{\delta}}''^{+})  \!\!
            \\
            & \hspace{-0.6cm} \!+ (1\!-\!\alpha) \operatorname{P} ({\delta_{i}'}+1|\delta_{i}', \mathbf{G}, \dot{a}_{i}^{1}) \sum_{\ddot{\delta}_{j}^{+}} \sum_{{\ddot{\bm{\delta}}_{\backslash\!\{\!i,j\!\}}}^{+}}
            \operatorname{P} (\ddot{\delta}_{j}^{+}|\ddot{\delta}_{j}, \mathbf{G}, \dot{a}_{j}^{1}) 
            \\
            & \times \operatorname{P} (\ddot{\bm{\delta}}_{\backslash\!\{\!i,j\!\}}^{+}|\ddot{\bm{\delta}}_{\backslash\!\{\!i,j\!\}}, \mathbf{G}, \dot{\mathbf{a}}_{\backslash\!\{\!i,j\!\}}^{1}) \upsilon^{0} (\ddot{\bm{\delta}}'^{+}) 
            \\
            & \hspace{-0.6cm} \!- \alpha \operatorname{P} (\delta_{i}+1|\delta_{i}, \mathbf{G}, \check{a}_{i}^{1}) \! \sum_{{\dot{\bm{\delta}}_{\backslash\!\{\!i\!\}}}^{+}}\operatorname{P} (\dot{\bm{\delta}}_{\backslash\!\{\!i\!\}}^{+}|\dot{\bm{\delta}}_{\backslash\!\{\!i\!\}}, \mathbf{G}, \check{\mathbf{a}}_{\backslash\!\{\!i\!\}}^{1}) \upsilon^{0} (\dot{\bm{\delta}}^{+}) 
            \\ 
            & \hspace{-0.6cm} \!- (1-\alpha) \operatorname{P} (\delta_{i}+1|\delta_{i},\! \mathbf{G}, \dot{a}_{i}^{1})  \! \sum_{\ddot{\delta}_{j}^{+}} \sum_{{\ddot{\bm{\delta}}_{\backslash\!\{\!i,j\!\}}}^{+}}
            \operatorname{P} (\ddot{\delta}_{j}^{+}|\ddot{\delta}_{j},\! \mathbf{G}, \dot{a}_{j}^{1}) 
            \\
            & \times \operatorname{P} (\ddot{\bm{\delta}}_{\backslash\!\{\!i,j\!\}}^{+}|\ddot{\bm{\delta}}_{\backslash\!\{\!i,j\!\}},\! \mathbf{G}, \dot{\mathbf{a}}_{\backslash\!\{\!i,j\!\}}^{1}) \upsilon^{0} (\ddot{\bm{\delta}}^{+}) \!\!\!
            \\
            & \hspace{-0.7cm} \geq 0,
        \end{align}
         where the first inequality is from~\eqref{ineq:V&Z}, and the first equality is derived based on~\eqref{eq: Z-function}, and the second inequality is from~\eqref{ineq: def cost convexity},~\eqref{eq: for prop convexity, case a.1}, and the following equality
         \begin{align}
            & \operatorname{P} ({\delta_{i}''}^{+}=1|\delta_{i}'', \mathbf{G}, \check{a}_{i}^{1}) \upsilon^{0} (1, \dot{\bm{\delta}}_{\backslash\!\{\!i\!\}}^{+})
            \\
            & = \operatorname{P} ({\delta_{i}}^{+}=1|\delta_{i}, \mathbf{G}, \check{a}_{i}^{1}) \upsilon^{0} (1, \dot{\bm{\delta}}_{\backslash\!\{\!i\!\}}^{+}),
        \end{align}
        and inequality
        \begin{align}
            & \operatorname{P} (\delta_{i}'+1|\delta_{i}', \mathbf{G}, \hat{a}_{i}^{1}) \upsilon^{0} (\delta_{i}'+1, \ddot{\bm{\delta}}_{\backslash\!\{\!i\!\}}^{+}) 
            \\
            & \gg \operatorname{P} ({\delta_{i}'}^{+}=1|\delta_{i}', \mathbf{G}, \hat{a}_{i}^{1}) \upsilon^{0} (1, \ddot{\bm{\delta}}_{\backslash\!\{\!i\!\}}^{+}),
        \end{align}
        and
        \begin{align}
            & \operatorname{P} (\delta_{i}+1|\delta_{i}, \mathbf{G}, \dot{a}_{i}^{1}) \upsilon^{0} (\delta_{i}+1, \ddot{\bm{\delta}}_{\backslash\!\{\!i\!\}}^{+}) 
            \\
            & \gg \operatorname{P} ({\delta_{i}}^{+}=1|\delta_{i}, \mathbf{G}, \dot{a}_{i}^{1}) \upsilon^{0} (1, \ddot{\bm{\delta}}_{\backslash\!\{\!i\!\}}^{+}),
            \label{ineq: for prop convexity case a.1 much larger}
        \end{align}
        achieved by Lemma~\ref{lemma:much larger}, $\delta_{i}'+1 \gg 1$, and $\delta_{i}+1 \gg 1$, and the last inequality is derived based on~\eqref{ineq: for prop convex, 0} and the following equality
        \begin{align}
            &\operatorname{P} ({\delta_{i}''}+1|\delta_{i}'', \mathbf{G}, \check{a}_{i}^{1}) = \operatorname{P} ({\delta_{i}'}+1|\delta_{i}', \mathbf{G}, \dot{a}_{i}^{1})
            \\
            & = \operatorname{P} ({\delta_{i}}+1|\delta_{i}, \mathbf{G}, \check{a}_{i}^{1}) = \operatorname{P} ({\delta_{i}}+1|\delta_{i}, \mathbf{G}, \dot{a}_{i}^{1})
        \end{align}
        achieved by $\check{a}_{i}^{1} = \dot{a}_{i}^{1}$.

        \item[(a.2)] If $\psi_{i,m_{1}} > \psi_{i,m_{2}}$ and $\check{a}_{j}^{1}= m_{1}$, we define another action $\dot{\mathbf{a}}^{1}$, where $\dot{a}_{i}^{1} = m_{2}, \dot{a}_{j}^{1} = m_{1}$, and $\dot{\mathbf{a}}^{1}_{\backslash\!\{\!i,j\!\}} = \check{\mathbf{a}}^{1}_{\backslash\!\{\!i,j\!\}}$, then we have $\psi_{j,m_{1}} > \psi_{j,m_{2}}$ and
        \begin{align}
            & \hspace{-1.1cm}\operatorname{P} (\delta_{i}''\!+\!1|\delta_{i}'', \!\mathbf{G}, \check{a}_{i}^{1}) \sum_{\dot{\delta}_{j}^{+}}  \operatorname{P} (\dot{\delta}_{j}^{+}|\dot{\delta}_{j}, \!\mathbf{G}, \check{a}_{j}^{1}) \upsilon^{0} (\delta_{i}''\!+\!1, \dot{\bm{\delta}}_{\backslash\!\{\!i\!\}}^{+}) 
            \\ 
            & \hspace{-1.1cm}\!\geq\! \operatorname{P} (\delta_{i}''\!\!+\!1|\delta_{i}'', \!\mathbf{G}, \!\dot{a}_{i}^{1}) \!\sum_{\dot{\delta}_{j}^{+}} \operatorname{P} (\dot{\delta}_{j}^{+}|\dot{\delta}_{j}, \!\mathbf{G}, \!\dot{a}_{j}^{1}) \upsilon^{0} (\delta_{i}''\!+\!1, \dot{\bm{\delta}}_{\backslash\!\{\!i\!\}}^{+}).
            \label{eq: for prop convexity, case a.2}
        \end{align}
        Next, we derive that
        \begin{align}
            & \!\hspace{-0.7cm} \alpha \upsilon^{1}(\dot{\bm{\delta}}'') + (1-\alpha) \upsilon^{1}(\ddot{\bm{\delta}}') - \alpha \upsilon^{1}(\dot{\bm{\delta}}) - (1-\alpha) \upsilon^{1}(\ddot{\bm{\delta}})
            \\
            &\hspace{-0.7cm} \!\geq  \alpha Z(\dot{\bm{\delta}}'', \check{\mathbf{a}}^{1}; \upsilon^{0}) + (1-\alpha) Z(\ddot{\bm{\delta}}', \hat{\mathbf{a}}^{1}; \upsilon^{0}) 
            \\
            &\hspace{-0.6cm} \!- \alpha Z(\dot{\bm{\delta}}, \dot{\mathbf{a}}^{1}; \upsilon^{0}) - (1-\alpha) Z(\ddot{\bm{\delta}}, \hat{\mathbf{a}}^{1}; \upsilon^{0}) 
            \\
            & \hspace{-0.7cm} \!\geq\! \alpha \!\operatorname{P} ({\delta_{i}''}\!+\!1|\delta_{i}'', \!\mathbf{G}, \!\dot{a}_{i}^{1}) \! \sum_{{\dot{\bm{\delta}}_{\backslash\!\{\!i\!\}}}^{+}}  \!\operatorname{P} (\dot{\bm{\delta}}_{\backslash\!\{\!i\!\}}^{+}|\dot{\bm{\delta}}_{\backslash\!\{\!i\!\}}, \!\mathbf{G}, \!\dot{\mathbf{a}}_{\backslash\!\{\!i\!\}}^{1}) \upsilon^{0} (\dot{\bm{\delta}}''^{+})  \!\!
            \\ 
            & \hspace{-0.6cm} \!+ (1\!-\!\alpha) \operatorname{P} ({\delta_{i}'}+1|\delta_{i}', \mathbf{G}, \hat{a}_{i}^{1}) \sum_{\ddot{\delta}_{j}^{+}} \sum_{{\ddot{\bm{\delta}}_{\backslash\!\{\!i,j\!\}}}^{+}}
            \operatorname{P} (\ddot{\delta}_{j}^{+}|\ddot{\delta}_{j}, \mathbf{G}, \hat{a}_{j}^{1}) 
            \\
            & \times \operatorname{P} (\ddot{\bm{\delta}}_{\backslash\!\{\!i,j\!\}}^{+}|\ddot{\bm{\delta}}_{\backslash\!\{\!i,j\!\}}, \mathbf{G}, \hat{\mathbf{a}}_{\backslash\!\{\!i,j\!\}}^{1}) \upsilon^{0} (\ddot{\bm{\delta}}'^{+}) 
            \\
            & \hspace{-0.6cm} \!- \!\alpha \operatorname{P} (\delta_{i}\!+\!1|\delta_{i}, \!\mathbf{G}, \!\dot{a}_{i}^{1}) \! \sum_{{\dot{\bm{\delta}}_{\backslash\!\{\!i\!\}}}^{+}}  \!\operatorname{P} (\dot{\bm{\delta}}_{\backslash\!\{\!i\!\}}^{+}|\dot{\bm{\delta}}_{\backslash\!\{\!i\!\}},\! \mathbf{G}, \!\dot{\mathbf{a}}_{\backslash\!\{\!i\!\}}^{1}) \upsilon^{0} (\dot{\bm{\delta}}^{+}) 
            \\ 
            & \hspace{-0.6cm} \!- (1-\alpha) \operatorname{P} (\delta_{i}+1|\delta_{i},\! \mathbf{G}, \hat{a}_{i}^{1})  \! \sum_{\ddot{\delta}_{j}^{+}} \sum_{{\ddot{\bm{\delta}}_{\backslash\!\{\!i,j\!\}}}^{+}}
            \operatorname{P} (\ddot{\delta}_{j}^{+}|\ddot{\delta}_{j},\! \mathbf{G}, \hat{a}_{j}^{1}) 
            \\
            & \times \operatorname{P} (\ddot{\bm{\delta}}_{\backslash\!\{\!i,j\!\}}^{+}|\ddot{\bm{\delta}}_{\backslash\!\{\!i,j\!\}},\! \mathbf{G}, \hat{\mathbf{a}}_{\backslash\!\{\!i,j\!\}}^{1}) \upsilon^{0} (\ddot{\bm{\delta}}^{+}) \!\!\!
            \\
            & \hspace{-0.7cm} \geq 0,
        \end{align}
        where the second inequality is from ~\eqref{ineq: def cost convexity},~\eqref{eq: for prop convexity, case a.2}, and the following equality
        \begin{align}
            & \operatorname{P} ({\delta_{i}'}^{+}=1|\delta_{i}', \mathbf{G}, \hat{a}_{i}^{1}) \upsilon^{0} (1, \ddot{\bm{\delta}}_{\backslash\!\{\!i\!\}}^{+})
            \\
            & = \operatorname{P} ({\delta_{i}}^{+}=1|\delta_{i}, \mathbf{G}, \hat{a}_{i}^{1}) \upsilon^{0} (1, \ddot{\bm{\delta}}_{\backslash\!\{\!i\!\}}^{+}),
        \end{align}
        and
        \begin{align}
            & \operatorname{P} (\delta_{i}''+1|\delta_{i}'', \mathbf{G}, \check{a}_{i}^{1}) \upsilon^{0} (\delta_{i}''+1, \dot{\bm{\delta}}_{\backslash\!\{\!i\!\}}^{+}) 
            \\
            & \gg \operatorname{P} ({\delta_{i}''}^{+}=1|\delta_{i}'', \mathbf{G}, \check{a}_{i}^{1}) \upsilon^{0} (1, \dot{\bm{\delta}}_{\backslash\!\{\!i\!\}}^{+}),
        \end{align}
        and~\eqref{ineq: for prop convexity case a.1 much larger} achieved by Lemma~\ref{lemma:much larger}, $\delta_{i}''+1 \gg 1$, and $\delta_{i}+1 \gg 1$, and the last inequality is derived based on~\eqref{ineq: for prop convex, 0} and the following equality
        \begin{align}
            & \operatorname{P} ({\delta_{i}''}+1|\delta_{i}'', \mathbf{G}, \dot{a}_{i}^{1}) = \operatorname{P} ({\delta_{i}'}+1|\delta_{i}', \mathbf{G}, \hat{a}_{i}^{1})
            \\
            & = \operatorname{P} ({\delta_{i}}+1|\delta_{i}, \mathbf{G}, \dot{a}_{i}^{1}) = \operatorname{P} ({\delta_{i}}+1|\delta_{i}, \mathbf{G}, \hat{a}_{i}^{1})
        \end{align}
        achieved by $\check{a}_{i}^{1} = \dot{a}_{i}^{1}$.
        \item[(b)] If $\check{a}_{i}^{1} = m, \hat{a}_{i}^{1}= 0$, then we assume that $\hat{a}_{j}^{1} = m$ and we have
        \begin{align}
            & \hspace{-0.8cm} \alpha \upsilon^{1}(\dot{\bm{\delta}}'') + (1-\alpha) \upsilon^{1}(\ddot{\bm{\delta}}') - \alpha \upsilon^{1}(\dot{\bm{\delta}}) - (1-\alpha) \upsilon^{1}(\ddot{\bm{\delta}})
            \\
            &\hspace{-0.8cm} \!\geq  \alpha Z(\dot{\bm{\delta}}'', \check{\mathbf{a}}^{1}; \upsilon^{0}) + (1-\alpha) Z(\ddot{\bm{\delta}}', \hat{\mathbf{a}}^{1}; \upsilon^{0}) 
            \\
            &\hspace{-0.7cm} \!- \alpha Z(\dot{\bm{\delta}}, \check{\mathbf{a}}^{1}; \upsilon^{0}) - (1-\alpha) Z(\ddot{\bm{\delta}}, \hat{\mathbf{a}}^{1}; \upsilon^{0}) 
            \\
            &\hspace{-0.8cm} \!= \left[\alpha c(\dot{\bm{\delta}}'') + (1-\alpha) c(\ddot{\bm{\delta}}') - \alpha c(\dot{\bm{\delta}}) - (1-\alpha) c(\ddot{\bm{\delta}}) \right] 
            \\
            &\hspace{-0.7cm} \!+\! \alpha \! \sum_{{\delta_{i}''}^{+}} \! \sum_{{\dot{\bm{\delta}}_{\backslash\!\{\!i\!\}}}^{+}} \!\operatorname{P} ({\delta_{i}''}^{+}|\delta_{i}'', \!\mathbf{G}, \!\check{a}_{i}^{1}) \!\operatorname{P} (\dot{\bm{\delta}}_{\backslash\!\{\!i\!\}}^{+}|\dot{\bm{\delta}}_{\backslash\!\{\!i\!\}}, \!\mathbf{G}, \!\check{\mathbf{a}}_{\backslash\!\{\!i\!\}}^{1}) \upsilon^{0} (\dot{\bm{\delta}}''^{+})  \!\!
            \\
            &\hspace{-0.7cm} \!+\! (1\!-\!\alpha) \!\!\sum_{{\delta_{i}'}^{+}} \! \sum_{{\ddot{\bm{\delta}}_{\backslash\!\{\!i\!\}}}^{+}} \!\operatorname{P} ({\delta_{i}'}^{+}|\delta_{i}', \!\mathbf{G}, \hat{a}_{i}^{1}) \! \operatorname{P} (\ddot{\bm{\delta}}_{\backslash\!\{\!i\!\}}^{+}|\ddot{\bm{\delta}}_{\backslash\!\{\!i\!\}}, \!\mathbf{G}, \hat{\mathbf{a}}_{\backslash\!\{\!i\!\}}^{1}) \upsilon^{0} (\ddot{\bm{\delta}}'^{+}) \!\!\!
            \\
            &\hspace{-0.7cm} \!- \!\alpha \! \sum_{\delta_{i}^{+}} \! \sum_{{\dot{\bm{\delta}}_{\backslash\!\{\!i\!\}}}^{+}}\! \operatorname{P} (\delta_{i}^{+}|\delta_{i},\! \mathbf{G}, \!\check{a}_{i}^{1}) \operatorname{P} (\dot{\bm{\delta}}_{\backslash\!\{\!i\!\}}^{+}|\dot{\bm{\delta}}_{\backslash\!\{\!i\!\}}, \!\mathbf{G}, \!\check{\mathbf{a}}_{\backslash\!\{\!i\!\}}^{1}) \upsilon^{0} (\dot{\bm{\delta}}^{+}) 
            \\
            & \hspace{-0.7cm} \!-\! (1\!-\!\alpha) \!\sum_{\delta_{i}^{+}}\! \sum_{{\ddot{\bm{\delta}}_{\backslash\!\{\!i\!\}}}^{+}}
            \operatorname{P} (\delta_{i}^{+}|\delta_{i},\! \mathbf{G}, \hat{a}_{i}^{1}) \operatorname{P} (\ddot{\bm{\delta}}_{\backslash\!\{\!i\!\}}^{+}|\ddot{\bm{\delta}}_{\backslash\!\{\!i\!\}},\! \mathbf{G}, \hat{\mathbf{a}}_{\backslash\!\{\!i\!\}}^{1}) \upsilon^{0} (\ddot{\bm{\delta}}^{+}) \!\!\!
            \\
            & \hspace{-0.8cm} \geq \!\alpha \!\operatorname{P} ({\delta_{i}''}\!+\!1|\delta_{i}'', \!\mathbf{G}, \check{a}_{i}^{1}) \!\! \sum_{{\dot{\bm{\delta}}_{\backslash\!\{\!i\!\}}}^{+}} \operatorname{P} (\dot{\bm{\delta}}_{\backslash\!\{\!i\!\}}^{+}|\dot{\bm{\delta}}_{\backslash\!\{\!i\!\}}, \!\mathbf{G}, \check{\mathbf{a}}_{\backslash\!\{\!i\!\}}^{1}) \upsilon^{0} (\dot{\bm{\delta}}''^{+})
            \\
            & \hspace{-0.7cm} + (1\!-\!\alpha) \operatorname{P} ({\delta_{i}'}+1|\delta_{i}', \mathbf{G}, \hat{a}_{i}^{1}) \operatorname{P} ({\ddot{\delta}_{j}}+1|\ddot{\delta}_{j}, \mathbf{G}, \hat{a}_{j}^{1}) 
            \\
            & \times \sum_{{\ddot{\bm{\delta}}_{\backslash\!\{\!i,j\!\}}}^{+}} \operatorname{P} (\ddot{\bm{\delta}}_{\backslash\!\{\!i,j\!\}}^{+}|\ddot{\bm{\delta}}_{\backslash\!\{\!i,j\!\}}, \mathbf{G}, \hat{\mathbf{a}}_{\backslash\!\{\!i,j\!\}}^{1}) \upsilon^{0} (\ddot{\bm{\delta}}'^{+})
            \\
            & \hspace{-0.7cm} - \!\alpha \!\operatorname{P} ({\delta_{i}}\!+\!1|\delta_{i}, \!\mathbf{G}, \check{a}_{i}^{1}) \!\! \sum_{{\dot{\bm{\delta}}_{\backslash\!\{\!i\!\}}}^{+}} \operatorname{P} (\dot{\bm{\delta}}_{\backslash\!\{\!i\!\}}^{+}|\dot{\bm{\delta}}_{\backslash\!\{\!i\!\}},\! \mathbf{G}, \check{\mathbf{a}}_{\backslash\!\{\!i\!\}}^{1}) \upsilon^{0} ({\dot{\bm{\delta}}}^{+})
            \\
            & \hspace{-0.7cm} + (1\!-\!\alpha) \operatorname{P} ({\delta_{i}}+1|\delta_{i}, \mathbf{G}, \hat{a}_{i}^{1}) \operatorname{P} ({\ddot{\delta}_{j}}+1|\ddot{\delta}_{j}, \mathbf{G}, \hat{a}_{j}^{1}) 
            \\
            & \times \sum_{{\ddot{\bm{\delta}}_{\backslash\!\{\!i,j\!\}}}^{+}} \operatorname{P} (\ddot{\bm{\delta}}_{\backslash\!\{\!i,j\!\}}^{+}|\ddot{\bm{\delta}}_{\backslash\!\{\!i,j\!\}}, \mathbf{G}, \hat{\mathbf{a}}_{\backslash\!\{\!i,j\!\}}^{1}) \upsilon^{0} ({\ddot{\bm{\delta}}}^{+})
            \\
            & \hspace{-0.8cm} \geq 0,
        \end{align}
        where the second inequality is derived from~\eqref{ineq: def cost convexity} and the following equality
        \begin{align}
            &\!\!\!\!\!\!\!\!\! \operatorname{P} ({\delta_{i}''}^{+}\!\!\!=\!1|\delta_{i}'', \!\mathbf{G}, \!\check{a}_{i}^{1}) \!\sum_{{\dot{\bm{\delta}}_{\backslash\!\{\!i\!\}}}^{+}} \operatorname{P} (\dot{\bm{\delta}}_{\backslash\!\{\!i\!\}}^{+}|\dot{\bm{\delta}}_{\backslash\!\{\!i\!\}}, \!\mathbf{G}, \!\check{\mathbf{a}}_{\backslash\!\{\!i\!\}}^{1}) \upsilon^{0} (1, \!\dot{\bm{\delta}}_{\backslash\!\{\!i\!\}}^{+}) \!\!
            \\
            &\!\!\!\!\!\!\!\!\!\! =\! \operatorname{P} ({\delta_{i}}^{+}\!\!\!=\!1|\delta_{i}, \!\mathbf{G}, \!\check{a}_{i}^{1}) \!\sum_{{\dot{\bm{\delta}}_{\backslash\!\{\!i\!\}}}^{+}}\!\! \operatorname{P} (\dot{\bm{\delta}}_{\backslash\!\{\!i\!\}}^{+}|\dot{\bm{\delta}}_{\backslash\!\{\!i\!\}}, \!\mathbf{G}, \!\check{\mathbf{a}}_{\backslash\!\{\!i\!\}}^{1}) \upsilon^{0} (1, \!\dot{\bm{\delta}}_{\backslash\!\{\!i\!\}}^{+}), \!\!\!\!
        \end{align}
        and
        \begin{equation}
            \operatorname{P} ({\delta_{i}'}^{+}=1|\delta_{i}', \mathbf{G}, \hat{a}_{i}^{1}) = \operatorname{P} ({\delta_{i}}^{+}=1|\delta_{i}, \mathbf{G}, \hat{a}_{i}^{1}) = 0
        \end{equation}
        from $\hat{a}_{i}^{1}= 0$, and the inequality
        \begin{align}
            & \!\!\!\!\!\operatorname{P} ({\delta_{i}'}\!+\!1|\delta_{i}', \!\mathbf{G}, \hat{a}_{i}^{1}) \operatorname{P} ({\ddot{\delta}_{j}}^{+}\!=\!1|\ddot{\delta}_{j}, \!\mathbf{G}, \hat{a}_{j}^{1}) \upsilon^{0} ({\delta_{i}'}\!+\!1, 1, \ddot{\bm{\delta}}_{\backslash\!\{\!i,j\!\}}^{+}) \!\!
            \\
            & \!\!\!\!\!\!\geq\! \operatorname{P} ({\delta_{i}}\!+\!1|\delta_{i}, \!\mathbf{G}, \hat{a}_{i}^{1}) \operatorname{P} ({\ddot{\delta}_{j}}^{+}\!\!=\!1|\ddot{\delta}_{j}, \!\mathbf{G}, \hat{a}_{j}^{1}) \upsilon^{0} ({\delta_{i}}\!+\!1, \!1, {\ddot{\bm{\delta}}}_{\backslash\!\{\!i,j\!\}}^{+})\!\!\!
        \end{align}
        achieved from Lemma~\ref{lemma:monotone V} with $\delta_{i}' \geq \delta_{i}$, and the last inequality is based on~\eqref{ineq: for prop convex, 0} with the equality
        \begin{equation}
            \operatorname{P} ({\delta_{i}'}+1|\delta_{i}, \mathbf{G}, \hat{a}_{i}^{1}) = \operatorname{P} ({\delta_{i}}+1|\delta_{i}, \mathbf{G}, \hat{a}_{i}^{1}) = 1, 
        \end{equation}
        and 
        \begin{align}
            & \operatorname{P} ({\delta_{i}''}+1|\delta_{i}, \mathbf{G}, \check{a}_{i}^{1}) = \operatorname{P} ({\delta_{i}}+1|\delta_{i}, \mathbf{G}, \check{a}_{i}^{1}) 
            \\
            & = \operatorname{P} ({\dot{\delta}_{j}}+1|\dot{\delta}_{j}, \mathbf{G}, \hat{a}_{j}^{1}) = \operatorname{P} ({\ddot{\delta}_{j}}+1|\ddot{\delta}_{j}, \mathbf{G}, \hat{a}_{j}^{1}),
        \end{align}
        as $\check{a}_{i}^{1} = \hat{a}_{j}^{1}$.
        \item[(c)] If $\check{a}_{i}^{1} = 0, \hat{a}_{i}^{1}= m$ and $\check{a}_{j}^{1}= m$, we define another action $\dot{\mathbf{a}}^{1}$ where $\dot{a}_{i}^{1} = m, \dot{a}_{j}^{1} = 0$, and $\dot{\mathbf{a}}_{\backslash\!\{\!i,j\!\}} = \check{\mathbf{a}}_{\backslash\!\{\!i,j\!\}}$, then we have
        \begin{align}
            & \!\hspace{-0.8cm} \upsilon^{1}(\dot{\bm{\delta}}'') + (1-\alpha) \upsilon^{1}(\ddot{\bm{\delta}}') - \alpha \upsilon^{1}(\dot{\bm{\delta}}) - (1-\alpha) \upsilon^{1}(\ddot{\bm{\delta}})
            \\
            &\hspace{-0.8cm} \!\geq  \alpha Z(\dot{\bm{\delta}}'', \check{\mathbf{a}}^{1}; \upsilon^{0}) + (1-\alpha) Z(\ddot{\bm{\delta}}', \hat{\mathbf{a}}^{1}; \upsilon^{0}) 
            \\
            &\hspace{-0.7cm} \!- \alpha Z(\dot{\bm{\delta}}, \dot{\mathbf{a}}^{1}; \upsilon^{0}) - (1-\alpha) Z(\ddot{\bm{\delta}}, \hat{\mathbf{a}}^{1}; \upsilon^{0}) 
            \\
            &\hspace{-0.8cm} \!= \left[\alpha c(\dot{\bm{\delta}}'') + (1-\alpha) c(\ddot{\bm{\delta}}') - \alpha c(\dot{\bm{\delta}}) - (1-\alpha) c(\ddot{\bm{\delta}}) \right] 
            \\
            &\hspace{-0.7cm} \!+ \alpha \! \sum_{{\delta_{i}''}^{+}} \! \sum_{{\dot{\delta}_{j}}^{+}} \! \sum_{{\dot{\bm{\delta}}_{\backslash\!\{\!i,j\!\}}}^{+}} \operatorname{P} ({\delta_{i}''}^{+}|\delta_{i}'', \mathbf{G}, \check{a}_{i}^{1}) \operatorname{P} ({\dot{\delta}_{j}}^{+}|\dot{\delta}_{j}, \mathbf{G}, \check{a}_{j}^{1})
            \\
            &\times \operatorname{P} (\dot{\bm{\delta}}_{\backslash\!\{\!i,j\!\}}^{+}|\dot{\bm{\delta}}_{\backslash\!\{\!i,j\!\}}, \mathbf{G}, \check{\mathbf{a}}_{\backslash\!\{\!i,j\!\}}^{1}) \upsilon^{0} (\dot{\bm{\delta}}''^{+}) 
            \\
            &\hspace{-0.7cm} \!\!+\! (1\!-\!\alpha) \!\sum_{{\delta_{i}'}^{+}} \! \sum_{{\ddot{\bm{\delta}}_{\backslash\!\{\!i\!\}}}^{+}} \!\operatorname{P} ({\delta_{i}'}^{+}|\delta_{i}', \!\mathbf{G}, \hat{a}_{i}^{1})  \operatorname{P} (\ddot{\bm{\delta}}_{\backslash\!\{\!i\!\}}^{+}|\ddot{\bm{\delta}}_{\backslash\!\{\!i\!\}}, \!\mathbf{G}, \hat{\mathbf{a}}_{\backslash\!\{\!i\!\}}^{1}) \upsilon^{0} (\ddot{\bm{\delta}}'^{+})\!\!
            \\
            &\hspace{-0.7cm} \!- \alpha \! \sum_{{\delta_{i}}^{+}} \! \sum_{{\dot{\delta}_{j}}^{+}} \! \sum_{{\dot{\bm{\delta}}_{\backslash\!\{\!i,j\!\}}}^{+}} \operatorname{P} ({\delta_{i}}^{+}|\delta_{i}, \mathbf{G}, \dot{a}_{i}^{1}) \operatorname{P} ({\dot{\delta}_{j}}^{+}|\dot{\delta}_{j}, \mathbf{G}, \dot{a}_{j}^{1})
            \\
            & \times \operatorname{P} (\dot{\bm{\delta}}_{\backslash\!\{\!i,j\!\}}^{+}|\dot{\bm{\delta}}_{\backslash\!\{\!i,j\!\}}, \mathbf{G}, \dot{\mathbf{a}}_{\backslash\!\{\!i,j\!\}}^{1}) \upsilon^{0} (\dot{\bm{\delta}}^{+}) 
            \\
            &\hspace{-0.7cm} \!-\! (1\!-\!\alpha) \!\sum_{{\delta_{i}}^{+}} \! \sum_{{\ddot{\bm{\delta}}_{\backslash\!\{\!i\!\}}}^{+}} \!\operatorname{P} ({\delta_{i}}^{+}|\delta_{i}, \!\mathbf{G}, \hat{a}_{i}^{1})  \operatorname{P} (\ddot{\bm{\delta}}_{\backslash\!\{\!i\!\}}^{+}|\ddot{\bm{\delta}}_{\backslash\!\{\!i\!\}}, \!\mathbf{G}, \hat{\mathbf{a}}_{\backslash\!\{\!i\!\}}^{1}) \upsilon^{0} (\ddot{\bm{\delta}}^{+})\!\!\!
            \\
            &\hspace{-0.8cm} \!\geq \alpha \operatorname{P} ({\delta_{i}''}+1|\delta_{i}'', \mathbf{G}, \check{a}_{i}^{1}) \operatorname{P} ({\dot{\delta}_{j}}+1|\dot{\delta}_{j}, \mathbf{G}, \check{a}_{j}^{1})
            \\
            &\times \sum_{{\dot{\bm{\delta}}_{\backslash\!\{\!i,j\!\}}}^{+}} \operatorname{P} (\dot{\bm{\delta}}_{\backslash\!\{\!i,j\!\}}^{+}|\dot{\bm{\delta}}_{\backslash\!\{\!i,j\!\}}, \mathbf{G}, \check{\mathbf{a}}_{\backslash\!\{\!i,j\!\}}^{1}) \upsilon^{0} (\dot{\bm{\delta}}''^{+}) 
            \\
            &\hspace{-0.7cm} \!+\! (1\!-\!\alpha) \!\operatorname{P} ({\delta_{i}'}\!+\!1|\delta_{i}', \!\mathbf{G}, \!\hat{a}_{i}^{1}) \!\! \sum_{{\ddot{\bm{\delta}}_{\backslash\!\{\!i\!\}}}^{+}} \!\operatorname{P} (\ddot{\bm{\delta}}_{\backslash\!\{\!i\!\}}^{+}|\ddot{\bm{\delta}}_{\backslash\!\{\!i\!\}}, \!\mathbf{G}, \!\hat{\mathbf{a}}_{\backslash\!\{\!i\!\}}^{1}) \upsilon^{0} (\ddot{\bm{\delta}}'^{+}) \!\!\!\!
            \\
            &\hspace{-0.7cm} - \!\alpha \operatorname{P} ({\delta_{i}}+1|\delta_{i}, \mathbf{G}, \dot{a}_{i}^{1}) \operatorname{P} ({\dot{\delta}_{j}}+1|\dot{\delta}_{j}, \mathbf{G}, \dot{a}_{j}^{1})
            \\
            & \times \sum_{{\dot{\bm{\delta}}_{\backslash\!\{\!i,j\!\}}}^{+}} (\dot{\bm{\delta}}_{\backslash\!\{\!i,j\!\}}^{+}|\dot{\bm{\delta}}_{\backslash\!\{\!i,j\!\}}, \mathbf{G}, \dot{\mathbf{a}}_{\backslash\!\{\!i,j\!\}}^{1}) \upsilon^{0} (\dot{\bm{\delta}}^{+}) 
            \\
            &\hspace{-0.7cm} \!-\! (1\!-\!\alpha) \!\operatorname{P} ({\delta_{i}}\!+\!1|\delta_{i}, \!\mathbf{G}, \!\hat{a}_{i}^{1}) \!\!\!\sum_{{\ddot{\bm{\delta}}_{\backslash\!\{\!i\!\}}}^{+}} \!\operatorname{P} (\ddot{\bm{\delta}}_{\backslash\!\{\!i\!\}}^{+}|\ddot{\bm{\delta}}_{\backslash\!\{\!i\!\}}, \!\mathbf{G}, \!\hat{\mathbf{a}}_{\backslash\!\{\!i\!\}}^{1}) \upsilon^{0} (\ddot{\bm{\delta}}^{+}) \!\!\! 
            \\
            &\hspace{-0.7cm} + \alpha \operatorname{P} ({\delta_{i}''}+1|\delta_{i}'', \mathbf{G}, \check{a}_{i}^{1}) \operatorname{P} ({\dot{\delta}_{j}}^{+}=1|\dot{\delta}_{j}, \mathbf{G}, \check{a}_{j}^{1}) 
            \\
            &\times \sum_{{\dot{\bm{\delta}}_{\backslash\!\{\!i,j\!\}}}^{+}} \operatorname{P} (\dot{\bm{\delta}}_{\backslash\!\{\!i,j\!\}}^{+}|\dot{\bm{\delta}}_{\backslash\!\{\!i,j\!\}}, \mathbf{G}, \check{\mathbf{a}}_{\backslash\!\{\!i,j\!\}}^{1}) \upsilon^{0} (\dot{\bm{\delta}}''^{+}) 
            \\
            &\hspace{-0.8cm} \geq 0,
        \end{align}
        where the second inequality is derived from~\eqref{ineq: def cost convexity} and the following equality
        \begin{align}
            & \!\!\!\!\operatorname{P} ({\delta_{i}'}^{+}\!\!=\!1|\delta_{i}', \!\mathbf{G}, \hat{a}_{i}^{1}) \! \sum_{{\ddot{\bm{\delta}}_{\backslash\!\{\!i\!\}}}^{+}} \operatorname{P} (\ddot{\bm{\delta}}_{\backslash\!\{\!i\!\}}^{+}|\ddot{\bm{\delta}}_{\backslash\!\{\!i\!\}}, \!\mathbf{G}, \hat{\mathbf{a}}_{\backslash\!\{\!i\!\}}^{1}) \upsilon^{0} (1, \ddot{\bm{\delta}}_{\backslash\!\{\!i\!\}}^{+})
            \\
            & \!\!\!\!= \operatorname{P} ({\delta_{i}}^{+}\!\!=\!1|\delta_{i}, \!\mathbf{G}, \hat{a}_{i}^{1}) \! \sum_{{\ddot{\bm{\delta}}_{\backslash\!\{\!i\!\}}}^{+}} \operatorname{P} (\ddot{\bm{\delta}}_{\backslash\!\{\!i\!\}}^{+}|\ddot{\bm{\delta}}_{\backslash\!\{\!i\!\}}, \!\mathbf{G}, \hat{\mathbf{a}}_{\backslash\!\{\!i\!\}}^{1}) \upsilon^{0} (1, \ddot{\bm{\delta}}_{\backslash\!\{\!i\!\}}^{+}),\!\!
        \end{align}
        and
        \begin{equation}
            \operatorname{P} ({\delta_{i}''}^{+}=1|\delta_{i}'', \mathbf{G}, \check{a}_{i}^{1}) = 0
        \end{equation}
        from $\check{a}_{i}^{1} = 0$, and the inequality
        \begin{align}
            & \operatorname{P} ({\delta_{i}}+1|\delta_{i}, \mathbf{G}, \dot{a}_{i}^{1}) \upsilon^{0} ({\delta_{i}}+1, {\dot{\bm{\delta}}}_{\backslash\!\{\!i\!\}}^{+}) 
            \\
            & \gg \operatorname{P} ({\delta_{i}}^{+}=1|\delta_{i}, \mathbf{G}, \dot{a}_{i}^{1}) \upsilon^{0} (1, \dot{\bm{\delta}}_{\backslash\!\{\!i\!\}}^{+}),
        \end{align}
        achieved from Lemma~\ref{lemma:much larger} and ${\delta_{i}}+1 \gg 1$, and the last inequality is based on~\eqref{ineq: for prop convex, 0} with the equality
        \begin{equation}
            \operatorname{P} ({\delta_{i}''}+1|\delta_{i}'', \mathbf{G}, \check{a}_{i}^{1}) = \operatorname{P} ({\dot{\delta}_{j}}+1|\dot{\delta}_{j}, \mathbf{G}, \dot{a}_{j}^{1}) = 1, 
        \end{equation}
        and
        \begin{align}
            & \operatorname{P} ({\dot{\delta}_{j}}+1|\dot{\delta}_{j}, \mathbf{G}, \check{a}_{j}^{1}) = \operatorname{P} ({\delta_{i}'}+1|\delta_{i}', \mathbf{G}, \hat{a}_{i}^{1})
            \\
            & = \operatorname{P} ({\delta_{i}}+1|\delta_{i}, \mathbf{G}, \dot{a}_{i}^{1}) = \operatorname{P} ({\delta_{i}}+1|\delta_{i}, \mathbf{G}, \hat{a}_{i}^{1}),
        \end{align}
        and the inequality $\upsilon^{0} (\dot{\bm{\delta}}''^{+}) \geq 0$.
        Therefore, the Bellman operation~\eqref{eq: bellman operator} preserve this property of the value function $\upsilon^{0}(\mathbf{s})$ to the optimal V function $\upsilon^{*}(\mathbf{s})$.
\end{itemize}

\bibliographystyle{IEEEtran}
% Generated by IEEEtran.bst, version: 1.14 (2015/08/26)

\end{document}